\title{Exponential separations using guarded extension variables}
\author{
  Emre Yolcu\thanks{
    \texttt{\{eyolcu,mheule\}@cs.cmu.edu}.
    Computer Science Department,
    Carnegie Mellon University.
  }
  \and
  Marijn J.\,H. Heule\footnotemark[1]
}
\date{}
\begin{document}

\maketitle

\begin{abstract}
  We study the complexity of proof systems augmenting resolution
  with inference rules that allow, given a formula $\Gamma$ in conjunctive normal form,
  deriving clauses that are not necessarily logically implied by $\Gamma$
  but whose addition to $\Gamma$ preserves satisfiability.
  When the derived clauses are allowed to introduce variables not occurring in $\Gamma$,
  the systems we consider become equivalent to extended resolution.
  We are concerned with the versions of these systems without new variables.
  They are called $\BC^-$, $\RAT^-$, $\SBC^-$, and $\GER^-$, denoting respectively
  blocked clauses, resolution asymmetric tautologies,
  set-blocked clauses, and generalized extended resolution.
  Each of these systems formalizes some restricted version of the ability
  to make assumptions that hold ``without loss of generality,''
  which is commonly used informally to simplify or shorten proofs.

  Except for $\SBC^-$, these systems are known to be
  exponentially weaker than extended resolution.
  They are, however, all equivalent to it under a relaxed notion of simulation
  that allows the translation of the formula along with the proof
  when moving between proof systems.
  By taking advantage of this fact, we construct formulas
  that separate $\RAT^-$ from $\GER^-$ and vice versa.
  With the same strategy, we also separate $\SBC^-$ from $\RAT^-$.
  Additionally, we give polynomial-size $\SBC^-$ proofs of the pigeonhole principle,
  which separates $\SBC^-$ from $\GER^-$ by a previously known lower bound.
  These results also separate the three systems from $\BC^-$ since they all simulate it.
  We thus give an almost complete picture of their relative strengths.
\end{abstract}

\newpage

\section{Introduction}
\label{sec:introduction}

\subsection{Properties of commonly studied proof systems}
\label{sec:properties-commonly-studied-proof-systems}

Most of the commonly studied rule-based propositional proof systems,%
\footnote{Throughout this paper, by ``proof'' we mean a refutation of satisfiability
  (i.e., a derivation of $\bot$ from a set $\Gamma$ of formulas,
  where $\bot$ denotes a contradiction such as the empty clause).}
such as resolution~\cite{Bla37,Rob65}, Frege~\cite{CR79},
cutting planes~\cite{CCT87}, polynomial calculus~\cite{CEI96}, Lovász--Schrijver~\cite{LS91},
are in several senses well behaved.
For instance, they are monotonic, strongly sound,
and strongly closed under restrictions.

\begin{itemize}
\item A proof system $P$ is \emph{monotonic}
  if for all sets $\Gamma$ and $\Gamma'$ of formulas such that $\Gamma \subseteq \Gamma'$ and for every formula $\varphi$
  we have
  \begin{equation*}
    \Gamma \vdash_P \varphi \implies \Gamma' \vdash_P \varphi,
  \end{equation*}
  where $\vdash_P$ is the derivability relation for $P$.
  Since we often express a proof as a tree,
  monotonicity naturally holds for most proof systems.
  It has the consequence that the validity of an inference in the proof
  relies only on some subset of the previously derived formulas
  as opposed to the entire set.
\item \emph{Strong soundness} is the property of a proof system $P$
  that a formula $\varphi$ can be derived in $P$ from a set $\Gamma$ of formulas
  only if $\varphi$ is logically implied by $\Gamma$
  (i.e., every total assignment satisfying all formulas in $\Gamma$ also satisfies $\varphi$),
  written as
  \begin{equation}\label{eq:strong-soundness}
    \Gamma \vdash_P \varphi \implies \Gamma \models \varphi.
  \end{equation}
  Soundness is less strict than strong soundness
  in that it only requires~\cref{eq:strong-soundness} to hold for $\varphi = \bot$.
\item For a formula $\varphi$ and a partial assignment $\alpha$,
  let $\varphi|_\alpha$ denote the formula obtained by
  first replacing every assigned variable $x$ occurring in $\varphi$ by $\alpha(x)$
  and then recursively simplifying all of the subformulas.
  For a set $\Gamma$ of formulas, let $\Gamma|_\alpha \coloneqq \{\psi|_\alpha \mid \psi \in \Gamma\}$.
  We call $\varphi|_\alpha$ the restriction of $\varphi$ under $\alpha$, and similarly for $\Gamma$.
  We say a proof system $P$ is \emph{strongly closed under restrictions}
  if for every set $\Gamma$ of formulas, for every formula $\varphi$,
  and for every partial assignment $\alpha$ that does not satisfy $\varphi$, we have
  \begin{equation}\label{eq:strong-closure-under-restrictions}
    \Gamma \vdash_P \varphi \implies \Gamma|_\alpha \vdash_P \varphi|_\alpha.
  \end{equation}
  As in the case of soundness, (weak) closure under restrictions
  only requires~\cref{eq:strong-closure-under-restrictions} to hold for $\varphi = \bot$.
  Closure under restrictions is often also defined by the more quantitative condition
  that for every $P$-proof $\Pi$ of $\Gamma$ and every partial assignment $\alpha$
  there exist a $P$-proof $\Pi'$ of $\Gamma|_\alpha$ of size polynomial in the size of $\Pi$.
\end{itemize}

None of the above properties are necessary for soundness,
and proof systems that do not have them can be stronger
since such systems are more permissive
in terms of the kinds of reasoning they allow.
Possibly the most prominent example of such a system in proof complexity
is extended Frege~\cite{CR79}.
When refuting a set $\Gamma$ of formulas,
extended Frege allows
(in addition to the axioms and the rules of the underlying Frege system)
proof steps of the form
\begin{equation}\label{eq:extension-rule-frege}
  x \liff \varphi,
\end{equation}
where $x$ is a variable and $\varphi$ is an arbitrary formula,
with the condition that $x$ not occur in $\Gamma$, any of the preceding steps, or $\varphi$.
Another example is extended resolution~\cite{Tse68},
which similarly uses~\cref{eq:extension-rule-frege}
although in a more restricted form
since resolution works only with clauses.
With $x$ a ``new'' variable as before and $p$, $q$ literals,
extended resolution allows introducing $x \liff p \land q$ via the following clauses,
where the overline denotes negation:
\begin{equation}\label{eq:extension-rule-resolution}
  \lneg{x} \lor p \qquad \lneg{x} \lor q \qquad x \lor \lneg{p} \lor \lneg{q}
\end{equation}
Extended Frege (and similarly extended resolution) has none of the above properties,
although the reasons are not particularly interesting:
\begin{itemize}
\item Monotonicity fails to hold
  since we cannot necessarily derive $x \liff \varphi$ from $\Gamma' \supseteq \Gamma$
  if $x$ already occurs in $\Gamma'$.
\item Strong soundness fails to hold
  since $\Gamma$ may not imply $x \liff \varphi$ under assignments $\alpha$
  such that $\alpha(x) \neq \alpha(\varphi)$.
  Nevertheless, extended Frege is sound
  because $\Gamma$ and $\Gamma \cup \{x \liff \varphi\}$ are equisatisfiable
  (i.e., $\Gamma$ is satisfiable if and only if $\Gamma \cup \{x \liff \varphi\}$ is satisfiable),
  seen as follows:
  if an assignment $\alpha$ satisfies $\Gamma$ but falsifies $x \liff \varphi$,
  flipping $\alpha(x)$ gives a different assignment $\alpha'$ satisfying both $\Gamma$ and $x \liff \varphi$.
\item Strong closure under restrictions fails to hold
  since otherwise we could choose a partial assignment
  that only assigns $x$ to $\true$
  and conclude that every formula $\varphi$ can be derived from $\Gamma$,
  which contradicts the soundness of extended Frege.
\end{itemize}
In all of the above cases,
the counterexamples rely crucially on the fact that $x$ is a new variable.
This ability to abbreviate complex formulas by variables
significantly increases the difficulty of proving lower bounds for extended Frege
and makes it one of the strongest propositional proof systems.
(Extended resolution is equivalent to it over refutations of sets of clauses.)
From this point on, we use ``formula'' and ``set of clauses'' interchangeably.

Proof systems that violate the above properties for more sophisticated reasons
(i.e., not simply due to the introduction of new variables)
also exist.
In this paper we compare the proof complexity of four such systems
that augment resolution with inference rules of varying expressiveness.
Given a set $\Gamma$ of clauses,
these rules allow deriving clauses
that are not necessarily logically implied by $\Gamma$
but whose addition to $\Gamma$ preserves satisfiability.
We call such clauses redundant.
Deciding the redundancy of a clause
with respect to a set of clauses is $\coDP$-complete%
\footnote{The class $\DP = \{L_1 \cap L_2 \mid L_1 \in \NP, L_2 \in \coNP\}$,
  which is a superset of both $\NP$ and $\coNP$,
  was defined by Papadimitriou and Yannakakis~\cite{PY84}.}~\cite{BCB20},
so we consider only the inference rules
that rely on polynomial-time verifiable syntactic conditions
corresponding to restricted versions of redundancy.
These rules may be viewed as capturing the commonly used technique
of making assumptions that hold ``without loss of generality''
when writing informal mathematical proofs.
Such assumptions are not logically implied by the hypotheses at hand,
but their use is justified by the fact that they can be eliminated
at the possible cost of an increase in the size of the proof.
The formal rules we study rely on syntactic criteria to justify such assumptions,
with weaker criteria allowing the introduction of stronger assumptions.
In this way, these rules allow us to directly express
various kinds of informal reasoning that are otherwise difficult to formalize.

From the perspective of the broader study of proof complexity,
these systems are somewhat unique
in that Frege does not simulate even the weakest variant
unless Frege and extended Frege are equivalent~\cite[Corollary~2.5]{BT21}.
It would be interesting to determine whether some variant of these systems,
despite having the same limited syntax as resolution and no new variables,
simulates a subsystem of Frege stronger than resolution.

\subsection{Related work}
\label{sec:related-work}

\subsubsection{Proof complexity}
\label{sec:related-proof-complexity}

The inference rules we study originate from the notion of blocked clauses,
developed initially by Kullmann~\cite{Kul97,Kul99a}
to give improved deterministic algorithms for 3-SAT\@.
We call a clause $C$ blocked with respect to a set $\Gamma$ of clauses
if there exists a literal $p \in C$
such that all possible resolvents of $C$ on $p$ against clauses from $\Gamma$
are tautological (i.e., contain a literal and its negation).
Kullmann~\cite{Kul99b} showed that blocked clauses are redundant
and thus considered an inference rule that, given a set $\Gamma$ of clauses,
allows us to extend $\Gamma$ with a clause that is blocked with respect to $\Gamma$.
This rule, along with resolution,
gives the proof system called \emph{blocked clauses} $(\BC)$.
As illustrated below, $\BC$ is not monotonic, not strongly sound,
and not strongly closed under restrictions.

\begin{example}
  The clause $C = \lneg{x} \lor \lneg{y}$ is blocked
  with respect to the set $\Gamma = \{x \lor y,\ x \lor \lneg{y}\}$.
  \begin{itemize}
  \item Monotonicity fails to hold since we cannot derive $C$
    from $\Gamma' = \Gamma \cup \{y\}$ in $\BC$:
    the set $\Gamma'$ is satisfiable but $\Gamma' \cup \{C\}$ is unsatisfiable.
  \item Strong soundness fails to hold since $\Gamma$ does not imply $C$
    under assignments that set both $x$ and $y$ to $\true$.
  \item Strong closure under restrictions fails to hold
    since for an assignment $\alpha$ that sets $y$ to $\true$
    we cannot derive $C|_\alpha = \lneg{x}$ from $\Gamma|_\alpha = \{x\}$ in $\BC$:
    the set $\Gamma|_\alpha$ is satisfiable but $\Gamma|_\alpha \cup \left\{C|_\alpha\right\}$ is unsatisfiable.
  \end{itemize}
\end{example}

It is apparent from the definition of a blocked clause
that deleting clauses from $\Gamma$ enlarges the set of clauses
that are blocked with respect to $\Gamma$.
With this observation at hand, Kullmann defined a strengthening of $\BC$
called \emph{generalized extended resolution} ($\GER$)
that allows the temporary deletion of clauses from $\Gamma$.
Arbitrary deletion of clauses does not necessarily preserve satisfiability;
however, since no subset of a satisfiable $\Gamma$ is unsatisfiable,
it is also possible to further strengthen $\GER$
by allowing the arbitrary deletion of a clause as a proof step.
The resulting system is called \emph{deletion blocked clauses} ($\DBC$).

Conversely to the above point, the failure of monotonicity becomes particularly important
when deletion is not allowed since it implies that the validity
of blocked clause additions performed in sequence are order dependent.
In particular, not every set of clauses that are all blocked with respect to $\Gamma$
can be derived from $\Gamma$ by a sequence of blocked clause additions.
For this reason, proving upper bounds for generalizations of $\BC$
involves carefully ensuring the validity of sequences of inferences.

Without any additional restrictions,
the above systems all simulate extended resolution
since the clauses in~\cref{eq:extension-rule-resolution}
can be added in sequence as blocked clauses
if we are allowed to introduce new variables:
starting with a set $\Gamma$ of clauses not containing the variable $x$,
we can derive
\begin{itemize}
\item $\lneg{x} \lor p$ followed by $\lneg{x} \lor q$
  since no occurrence of the literal $x$ precedes either step
  (so both clauses are vacuously blocked), and then
\item $x \lor \lneg{p} \lor \lneg{q}$
  since its resolvents on $x$ against $\lneg{x} \lor p$ and $\lneg{x} \lor q$
  (i.e., the only preceding occurrences of $\lneg{x}$) are tautological.
\end{itemize}
The study of these systems becomes interesting when we disallow new variables.
A proof of $\Gamma$ is \emph{without new variables}
if it contains only the variables that already occur in $\Gamma$.
Throughout this paper, we denote a proof system variant
that disallows new variables with the superscript ``$-$''
(e.g., $\BC^-$ is $\BC$ without new variables).
We denote a variant that allows arbitrary deletion with the prefix $\mathsf{D}$.
All of those variants constitute examples of proof systems
that share the peculiarities of extended resolution from
\cref{sec:properties-commonly-studied-proof-systems}
without being allowed new variables.

Kullmann~\cite{Kul99b} proved that extended resolution simulates $\GER$.
He also proved that $\GER^-$ is exponentially stronger than resolution
and exponentially weaker than extended resolution.
In later work, Järvisalo, Heule, and Biere~\cite{JHB12} defined
a different generalization of $\BC$
by essentially replacing ``tautological'' in the definition of a blocked clause
with ``implied by $\Gamma$ through unit propagation.''
(Unit propagation is an automatizable but incomplete variant of resolution.)
The result is still a polynomial-time verifiable redundancy criterion
since the only important property of tautologies in the argument
for the redundancy of a blocked clause $C$ with respect to $\Gamma$
is that tautologies are implied by $\Gamma$.
This generalization is called \emph{resolution asymmetric tautologies} ($\RAT$).
Yet another generalization of $\BC$ along a different axis
is called \emph{set-blocked clauses} ($\SBC$),
defined by Kiesl, Seidl, Tompits, and Biere~\cite{KSTB18}.
We call a clause $C$ set-blocked with respect to a set $\Gamma$ of clauses
if there exists some nonempty $L \subseteq C$ such that for all $D \in \Gamma$
with $D \cap \lneg{L} \neq \varnothing$ and $D \cap L = \varnothing$
the set $\bigl(C \setminus L\bigr) \cup \bigl(D \setminus \lneg{L}\bigr)$ is tautological.
A blocked clause is the special case where $L$ is a singleton,
so set-blockedness expands the scope of the literals in $C$ that we consider.
Deciding the set-blockedness of a clause
with respect to a set of clauses is $\NP$-complete~\cite{KSTB18}.
To ensure that an $\SBC$ proof is polynomial-time verifiable,
every step in the proof that adds a clause $C$ as set-blocked
is expected to indicate the subset $L \subseteq C$ for which $C$ is set-blocked.
With that said, to reduce clutter, we leave this requirement out of our definitions
and indicate those subsets only informally throughout this paper.

Subsequent works~\cite{HKB20,KRHB20} defined further generalizations,
showed simulations between some variants, and
gave polynomial-size proofs (without new variables)
of the pigeonhole principle in a variant
called \emph{set propagation redundancy} ($\SPR^-$)
that combines $\SBC^-$ and $\RAT^-$.
Recently, Buss and Thapen~\cite{BT21} initiated a systematic study
of the proof complexity of the many generalizations of $\BC^-$.
Among other results, they showed that
the bit pigeonhole principle, parity principle, clique-coloring principle,
and Tseitin tautologies have polynomial-size $\SPR^-$ proofs.
They also showed that $\SPR^-$ can undo (with polynomial-size derivations)
the effects of or-ification, xor-ification, and lifting with index gadgets.
In view of these results, $\SPR^-$ appears to be surprisingly strong.%
\footnote{As remarked by Buss and Thapen~\cite[Section~4]{BT21},
  the apparent strength of $\SPR^-$ stems from the ability to exploit symmetries,
  which are abundant in the combinatorial principles used for
  proving lower bounds against the commonly studied proof systems.
  Other interesting examples of systems that easily prove such combinatorial principles
  are the variants of Krishnamurthy's \emph{symmetric resolution}~\cite{Kri85,Urq99,AU00,Sze05},
  obtained by augmenting resolution with rules
  that explicitly support reasoning about symmetries.}
Buss and Thapen also proved an exponential size lower bound for $\RAT^-$,
separating $\DRAT^-$ and $\SPR^-$ from it.
Superpolynomial lower bounds for $\SPR^-$ or even $\SBC^-$ are currently open.

\subsubsection{SAT solving}
\label{sec:related-SAT-solving}

As the use of SAT solvers in propositional theorem proving increased,
it became standard to expect a solver to produce a proof alongside an unsatisfiability claim.
Modern SAT solvers are based on conflict-driven clause learning (CDCL)~\cite{MS99}
and essentially search for resolution proofs.
As a result, the initial proof systems developed to help verify
the outputs of CDCL SAT solvers were based on resolution~\cite{GN03,Van08}.
However, most of the current SAT solvers go beyond CDCL
and employ an array of \emph{inprocessing techniques}~\cite{JHB12}
that transform the formula during the search.
These techniques are often not strongly sound,
and resolution falls short for expressing them.
Järvisalo, Heule, and Biere~\cite{JHB12} observed
that $\DRAT$ simulated all of the common techniques used at the time,
and, following the implementation of a practical verifier~\cite{WHH14},
$\DRAT$ became the de facto standard proof system used in SAT solvers.
Extended resolution could also be used for verification; however,
it is only known to simulate $\DRAT$ with polynomial overhead~\cite[Section~4.5]{KRHB20},
whereas $\DRAT$ simulates extended resolution with no overhead.
There are also a few examples of $\DRAT$ enabling significant gains
over the smallest known extended resolution proofs (see, e.g.,~\cite[Table~1]{KRHB20}),
which is important for practical purposes.

Another practical motivation for studying these systems is
their potential usefulness in proof search
due to the surprising strength of the variants without new variables.
Recent works have introduced a SAT solving paradigm
called \emph{satisfaction-driven clause learning} (SDCL)~\cite{HKSB17,HKB19}
that can fully automatically discover small proofs of the pigeonhole principle.
Its usefulness remains limited, though,
since it was observed to improve upon CDCL only on specific classes of formulas.
Exploiting the power of these systems might be a promising avenue
for the research that aims to improve the performance of practical SAT solvers.
To this end, it is important to understand the relative strengths of these systems.

\subsection{Results}
\label{sec:results}

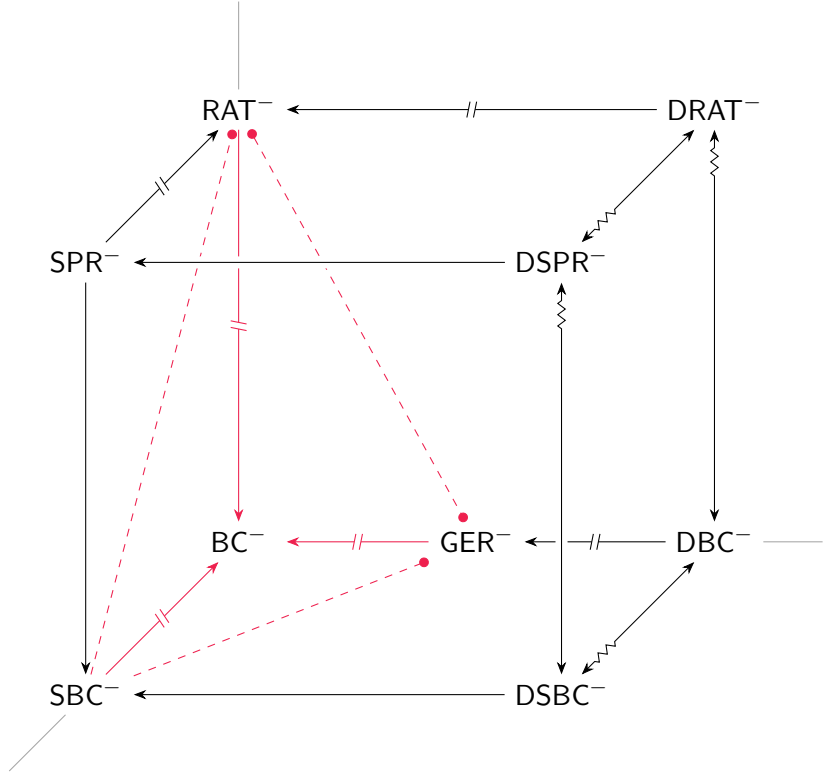
\begin{figure}[ht]
  \centering
  \begin{tikzpicture}[
      scale=2.875,
      every node/.style={minimum width={width("$\DRAT^{\smash{-}}$")}},
      axis/.style={black!35}
    ]
    \node (BC-) at (0, 0) {$\BC^{\smash{-}}$};
    \node (SBC-) at ($(BC-) + (225:1)$) {$\SBC^{\smash{-}}$};
    \node (RAT-) at (0, 2) {$\RAT^{\smash{-}}$};
    \node (SPR-) at ($(BC-) + (225:1) + (0, 2)$) {$\SPR^{\smash{-}}$};
    \node (GER-) at (1.1, 0) {$\GER^{\smash{-}}$};
    \node (DBC-) at (2.2, 0) {$\DBC^{\smash{-}}$};
    \node (DSBC-) at ($(BC-) + (225:1) + (2.2, 0)$) {$\DSBC^{\smash{-}}$};
    \node (DRAT-) at (2.2, 2) {$\DRAT^{\smash{-}}$};
    \node (DSPR-) at ($(BC-) + (225:1) + (2.2, 2)$) {$\DSPR^{\smash{-}}$};
    \draw[style=stronger] (DRAT-) -- (RAT-);
    \draw[style=stronger] (SPR-) -- (RAT-);
    \draw[style=equivalent, nontriv] (DBC-) -- (DRAT-);
    \draw[style=simulates] (SPR-) -- (SBC-);
    \draw[style=incomparable, new] (RAT-) -- (GER-);
    \draw[style=stronger, new] (RAT-) -- (BC-);
    \draw[style=stronger, new] (GER-) -- (BC-);
    \draw[style=stronger] (DBC-) -- (GER-);
    \draw[style=stronger, new] (SBC-) -- (BC-);
    \draw[style=separated, new] (SBC-) -- (GER-);
    \draw[style=separated, new] (SBC-) -- (RAT-);
    \draw[style=simulates, overdraw] (DSBC-) -- (SBC-);
    \draw[style=equivalent, overdraw, nontriv] (DBC-) -- (DSBC-);
    \draw[style=simulates, overdraw] (DSPR-) -- (SPR-);
    \draw[style=equivalent, overdraw, nontriv] (DSBC-) -- (DSPR-);
    \draw[style=equivalent, overdraw, nontriv] (DRAT-) -- (DSPR-);
    \draw[style=axis] (DBC-) -- ++(0.5, 0);
    \draw[style=axis] (RAT-) -- ++(0, 0.5);
    \draw[style=axis] (SBC-) -- ++(225:0.5);
  \end{tikzpicture}
  \caption{In the above diagram,
    the proof systems are placed in three-dimensional space with $\BC^-$ at the origin.
    Moving away from the origin along each axis
    corresponds to a particular way of generalizing a proof system.
    For systems $P$ and $Q$, we use
    $P \simulates Q$ to denote that $P$ simulates $Q$;
    (and $P \ntsimulates Q$ to indicate an ``interesting'' simulation,
    where $P$ is not simply a generalization of $Q$);
    $P \separated Q$ to denote that $P$ is exponentially separated from $Q$
    (i.e., there exists an infinite sequence of formulas
    admitting polynomial-size proofs in $P$
    while requiring exponential-size proofs in $Q$);
    and $P \stronger Q$ to denote that $P$ both simulates $Q$
    and is exponentially separated from $Q$.
    Arrows in \textcolor{customred}{red} indicate the relationships that are new in this paper.
    To reduce clutter, some relationships that are implied by transitivity are not displayed
    (e.g., $\DBC^-$ simulates $\RAT^-$ and is exponentially separated from it through $\DRAT^-$).}
  \label{fig:redundancy-landscape}
\end{figure}

We prove some results concerning
the relative strengths of $\BC^-$, $\RAT^-$, $\SBC^-$, and $\GER^-$,
continuing the line of work~\cite{Kul99b,HKB20,KRHB20,BT21}
on the proof complexity of generalizations of $\BC^-$.
\cref{fig:redundancy-landscape} summarizes the state of the proof complexity landscape
surrounding these systems after our results.

Our first main result is a two-way separation between $\RAT^-$ and $\GER^-$.

\begin{theorem}\label{thm:RAT-GER-incomparable}
  There exists an infinite sequence $(\Gamma_n)_{n=1}^\infty$ of formulas
  such that $\Gamma_n$ admits $\RAT^-$ proofs of size $n^{O(1)}$
  but requires $\GER^-$ proofs of size $2^{\Omega(n)}$.
  Conversely, there exists an infinite sequence $(\Delta_n)_{n=1}^\infty$ of formulas
  such that $\Delta_n$ admits $\GER^-$ proofs of size $n^{O(1)}$
  but requires $\RAT^-$ proofs of size $2^{\Omega(n)}$.
\end{theorem}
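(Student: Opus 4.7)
The plan exploits the fact, flagged in the introduction, that both $\RAT^-$ and $\GER^-$ are equivalent to extended resolution under a notion of simulation that also translates the formula being refuted. Concretely, there exist polynomial-time translations $\tau_R(\cdot)$ and $\tau_G(\cdot)$ such that every extended resolution refutation of $F$ of size $s$ induces a $\RAT^-$ refutation of $\tau_R(F)$ and a $\GER^-$ refutation of $\tau_G(F)$, each of size polynomial in $s$. These translations extend $F$ with additional variables and auxiliary clauses, designed so that only the intended system can use those extra variables as extension variables via its native inference rule. The two separations will then follow by applying each translation to hardness formulas for the \emph{other} system.

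For $(\Gamma_n)$ separating $\RAT^-$ from $\GER^-$, I would take Kullmann's formulas $F_n$, which have polynomial-size extended resolution refutations but require exponential-size $\GER^-$ refutations, and set $\Gamma_n \coloneqq \tau_R(F_n)$. The $\RAT^-$ upper bound on $\Gamma_n$ is then immediate from the effective simulation. For the $\GER^-$ lower bound, the key step is to exhibit a partial assignment $\alpha$ that collapses the auxiliary structure introduced by $\tau_R$, so that $\tau_R(F_n)|_\alpha$ is essentially $F_n$, and then to invoke (weak) closure of $\GER^-$ under restrictions at the refutation level to transport any short $\GER^-$ refutation of $\Gamma_n$ back to one of $F_n$, contradicting Kullmann's lower bound.

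For $(\Delta_n)$ separating $\GER^-$ from $\RAT^-$, the construction is symmetric: I would start from the Buss and Thapen formulas $H_n$, which have polynomial-size extended resolution refutations but require exponential-size $\RAT^-$ refutations, and set $\Delta_n \coloneqq \tau_G(H_n)$. The $\GER^-$ upper bound is immediate from the effective simulation, while the $\RAT^-$ lower bound follows by the analogous restriction argument, provided the auxiliary clauses added by $\tau_G$ can be projected away without leaving behind structure that $\RAT^-$ can exploit to shortcut the refutation of $H_n$.

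The main obstacle is designing $\tau_R$ and $\tau_G$ so that each is simultaneously (i) \emph{permissive} enough to absorb the extension rule of the intended system, so the upper bound from effective simulation can be instantiated by a concrete gadget, and (ii) \emph{inert} with respect to the opposing system, in the sense that a short refutation of $\tau_R(F)$ in $\GER^-$ (respectively, of $\tau_G(F)$ in $\RAT^-$) must essentially encode a short refutation of $F$ itself. Property (ii) is the delicate one: without it, the auxiliary clauses could conceivably collapse under the opposing system's inference rules and open a back door. I expect the right choice is to build $\tau_R$ around a gadget that is amenable to $\RAT$-style asymmetric tautology reasoning but rigid under blocked-clause reasoning with temporary deletion, and vice versa for $\tau_G$.
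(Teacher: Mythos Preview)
Your high-level strategy matches the paper's: both directions are proved by choosing a formula family hard for the target system, augmenting it with guarded extension variables via a transformation tailored to the other system, and then arguing an upper bound via effective simulation and a lower bound via a restriction-style argument. The paper even uses exactly the formulas you name (Kullmann's $\PHP_n$ for the $\GER^-$ lower bound, Buss--Thapen's $\BPHP_n$ for the $\RAT^-$ lower bound).

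There is, however, a real gap in your plan for the $\GER^-$ lower bound. You propose to ``invoke (weak) closure of $\GER^-$ under restrictions'' to transport a short $\GER^-$ refutation of $\tau_R(F_n)$ back to one of $F_n$. But $\GER^-$ is \emph{not} closed under restrictions, and the paper does not establish any such closure even for the specific formulas and assignment at hand. The actual mechanism is different and is the key idea you are missing: the transformation $\tau_R$ (in the paper, $\cG(\Gamma) = \Gamma \cup \bigcup_i[(x_i \lor \Gamma) \cup (\lneg{x_i} \lor \Gamma)]$) is chosen so that, when $\ker(\Gamma)=\Gamma$, the augmented formula also satisfies $\ker(\cG(\Gamma))=\cG(\Gamma)$. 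This forces $\GER^-$ to coincide with $\BC^-$ on $\cG(\Gamma)$, since a trivial kernel leaves no room for temporary deletion. Now $\BC^-$ has the structural property (via monotonicity of blockedness under clause removal) that all blocked-clause additions may be front-loaded, so a $\BC^-$ proof is a resolution proof of $\cG(\Gamma)\cup\B^-(\cG(\Gamma))$. Because each extension variable is locked behind the unsatisfiable projection $\Gamma$, no clause is blocked on an $x_i$-literal, and the restriction setting all $x_i$ to true collapses everything to $\Gamma\cup\B^-(\Gamma)$; the lower bound then comes from resolution's closure under restrictions together with Kullmann's bound $\size_\Res(\PHP_n\cup\B^-(\PHP_n))=2^{\Omega(n)}$, not from a $\GER^-$ lower bound on $F_n$ per se.

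For the $\RAT^-$ lower bound direction your sketch is closer to what the paper does, but note that there too the restriction argument is not an appeal to any general closure property: it is a step-by-step simulation (Cases 1--3 over resolution, weakening, and RAT addition) that exploits the specific shape of the guard clauses $\lneg{x_i}\lor y_i$, $x_i\lor\lneg{y_i}$ to show that any RAT addition on an $x_i$- or $y_i$-literal becomes a RUP derivation after restricting, while RAT additions on original literals survive. Without committing to a concrete $\tau_G$ with this property, ``the analogous restriction argument'' is not yet an argument.
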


Since both $\RAT^-$ and $\GER^-$ are generalizations of $\BC^-$,
the above result also separates both systems from $\BC^-$.
It was already understood that $\GER^-$ is
between $\BC^-$ and $\DBC^-$ in strength,
and $\GER^-$ is in fact ``strictly'' between $\BC^-$ and $\DBC^-$
by \cref{thm:RAT-GER-incomparable}.

For both directions of the separation, we follow a strategy
that exploits the equivalence of $\BC^-$ (without new variables)
to extended resolution under \emph{effective simulations}~\cite{HHU07,PS10},
which allow the translation of the formula (in a satisfiability-preserving way)
along with the proof when moving between proof systems.
In particular, Buss and Thapen~\cite{BT21}
observed that it is possible to incorporate new variables
into a formula $\Gamma$ in such a way that $\BC^-$,
while still technically only using the variables occurring in the formula,
simulates an extended resolution proof of $\Gamma$.

\begin{lemma}[{\cite[Lemma~2.2]{BT21}}]\label{thm:BC-ER-effective-simulation}
  Suppose that a formula $\Gamma$ has an extended resolution proof of size $m$
  and that $\Gamma$ and the set $X = \{y \lor x_1 \lor \dots \lor x_m,\ y\}$ of clauses
  have no variables in common.
  Then $\Gamma \cup X$ has a $\BC^-$ proof of size $O(m)$.
\end{lemma}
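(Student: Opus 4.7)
The plan is to take an extended resolution proof $\Pi$ of $\Gamma$ of size $m$ and simulate it in $\BC^-$ on $\Gamma \cup X$. Let $e_1, \dots, e_k$ be the extension variables used by $\Pi$, where necessarily $3k \le m$ so that $k < m$. I would substitute $x_i$ for $e_i$ throughout $\Pi$ and, for each $i$, write $\tilde{p}_i, \tilde{q}_i$ for the literals obtained from $p_i, q_i$ by this substitution. The naive attempt---add the substituted Tseitin clauses as blocked---fails immediately, because a clause $\lneg{x_i} \lor \tilde{p}_i$ has a non-tautological resolvent against the big clause $B = y \lor x_1 \lor \dots \lor x_m$ in $X$ and so is not blocked on $\lneg{x_i}$. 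Navigating around $B$ is the main obstacle, and the key idea is to exploit the literal $y$ that $B$ carries.

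Concretely, at stage $i = 1, 2, \dots, k$ (processed in this order) I would introduce the modified extension clauses
\begin{equation*}
  \widetilde{C}_1^i = \lneg{x_i} \lor \lneg{y} \lor \tilde{p}_i, \qquad
  \widetilde{C}_2^i = \lneg{x_i} \lor \lneg{y} \lor \tilde{q}_i, \qquad
  \widetilde{C}_3^i = x_i \lor \lneg{\tilde{p}_i} \lor \lneg{\tilde{q}_i}.
\end{equation*}
An easy induction shows that for each $j < i$ the literals $\tilde{p}_j, \tilde{q}_j$ only mention $\Gamma$-variables and $x_l$ with $l < j$, so no previously added clause contains $x_i$ positively or $\lneg{x_i}$ negatively. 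Consequently the only $x_i$-clause present when $\widetilde{C}_1^i$ or $\widetilde{C}_2^i$ is introduced is $B$; its resolvent on $x_i$ contains both $y$ (from $B$) and $\lneg{y}$ (from the added clause), making it tautological, so both clauses are blocked on $\lneg{x_i}$. The clause $\widetilde{C}_3^i$ is then blocked on $x_i$ because the only $\lneg{x_i}$-clauses are $\widetilde{C}_1^i, \widetilde{C}_2^i$, whose resolvents with it contain $\tilde{p}_i \lor \lneg{\tilde{p}_i}$ and $\tilde{q}_i \lor \lneg{\tilde{q}_i}$.

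With the modified extension clauses in place, I would then replay the remaining $m - 3k$ steps of $\Pi$ verbatim, resolving on $x_i$ wherever $\Pi$ resolves on $e_i$. Every derived clause whose ancestry uses a $\widetilde{C}_1^i$ or $\widetilde{C}_2^i$ picks up a $\lneg{y}$, and since $\Pi$ is unaware of the fresh variable $y$ and never resolves on it, these $\lneg{y}$ literals survive to the end. The clause that $\Pi$ derives as $\bot$ is therefore derived here as $\lneg{y}$ (or as $\bot$ outright in the degenerate case that $\Pi$ uses no extension steps). A single final resolution against the unit $y \in X$ yields $\bot$, giving a $\BC^-$ proof of total length $3k + (m - 3k) + 1 = m + 1 = O(m)$.
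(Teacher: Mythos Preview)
The paper does not itself prove this lemma; it is quoted from Buss and Thapen without proof. Your argument is correct and is essentially the intended one: rename the extension variables to the supplied $x_i$, tag the two negative extension clauses with $\lneg{y}$ so that their only resolvent on $\lneg{x_i}$ (against the long clause $B = y \lor x_1 \lor \dots \lor x_m$) becomes tautological via $y \lor \lneg{y}$, observe that the third extension clause is then blocked on $x_i$ exactly as in ordinary extended resolution, replay the resolution part of the ER proof to obtain $\lneg{y}$ (or already $\bot$), and finish with one resolution against the unit clause $y \in X$.
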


To separate $\RAT^-$ and $\GER^-$, we incorporate new variables into formulas
in ways that are useful to only one of the two systems.
We achieve this by ``guarding'' the new variables by clauses
instead of providing them as in \cref{thm:BC-ER-effective-simulation}.
Recall that, for a clause to be redundant with respect to a formula
according to some syntactic criterion in this paper,
every clause in the formula has to satisfy a certain condition.
We take advantage of this fact to include the new variables
within a strategically chosen set $\cX$ of guard clauses
alongside an unsatisfiable formula $\Gamma$.
With a suitable choice of $\cX$, we are able to impose
enough limitations upon the redundant clauses derivable in a system
to ensure that the new variables can essentially be ignored in a proof of $\Gamma \cup \cX$.

For each direction of the separation,
when proving the upper bound for one of the two systems,
we show that it can efficiently work through the guard clauses
and use the new variables to simulate the extended resolution proof.
When proving the lower bound for the other system, we show essentially
that it is closed under restrictions%
\footnote{We use closure under restrictions here in the quantitative sense
  described in \cref{sec:introduction}.}
for the specific formulas
and partial assignments that we construct.
(Neither system is closed under restrictions in general.)
In other words, the guard clauses make it impossible
for the system to efficiently ``access'' the new variables,
thus preventing it from achieving any speedup.
This allows us to use the existing separations of extended resolution
from $\RAT^-$ and $\GER^-$ to separate the two systems
without needing to prove lower bounds entirely from scratch.
The main difficulty is in coming up with the appropriate ways of
incorporating new variables into formulas.

As our next main result, we separate $\SBC^-$ from $\RAT^-$
(and hence $\BC^-$) with the same strategy.
In fact, we reuse the formulas separating $\GER^-$ from $\RAT^-$
and show that $\SBC^-$ can also efficiently work through the guard clauses in them,
although in a different manner than $\GER^-$.

\begin{theorem}\label{thm:separation-SBC-from-RAT}
  There exists an infinite sequence $(\Gamma_n)_{n=1}^\infty$ of formulas
  such that $\Gamma_n$ admits $\SBC^-$ proofs of size $n^{O(1)}$
  but requires $\RAT^-$ proofs of size $2^{\Omega(n)}$.
\end{theorem}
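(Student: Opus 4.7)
The plan is to reuse the same sequence $(\Delta_n)_{n=1}^\infty$ constructed for the second half of \cref{thm:RAT-GER-incomparable}, so the $\RAT^-$ lower bound comes for free. What remains is to exhibit polynomial-size $\SBC^-$ proofs of those formulas, and the key point is that $\SBC^-$ and $\GER^-$ can both ``unlock'' the guard clauses in $\Delta_n$ despite being incomparable in general — they just do so using different tools.

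Recall that $\Delta_n$ is an unsatisfiable formula of the form $\Gamma_n \cup \cX_n$, where $\Gamma_n$ admits an extended resolution proof of size $n^{O(1)}$ and $\cX_n$ is a collection of guard clauses engineered so that the auxiliary variables become accessible through $\GER^-$'s deletion mechanism. My goal for the $\SBC^-$ upper bound is to show that, for each clause the $\GER^-$ proof derives after a temporary deletion of some $\cX' \subseteq \cX_n$, there is a choice of nonempty $L \subseteq C$ that makes $C$ set-blocked with respect to all of $\Delta_n$ (including the clauses that $\GER^-$ would have deleted). The intuition is that the deletion step in $\GER^-$ removes precisely the clauses whose resolvents against $C$ on a single literal would not be tautological; the analogous move in $\SBC^-$ is to enlarge the blocking set $L$ to include additional literals of $C$ so that every offending $D \in \cX_n$ either contains a literal from $L$ (and so is ignored in the set-blockedness check) or produces a tautological combination $(C \setminus L) \cup (D \setminus \lneg{L})$. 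Once the guards are handled in this manner, I would invoke \cref{thm:BC-ER-effective-simulation} together with the fact that $\SBC^-$ simulates $\BC^-$ to conclude that the extended resolution proof of $\Gamma_n$ translates into an $\SBC^-$ proof of $\Delta_n$ of size $n^{O(1)}$.

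The main obstacle is identifying, for the specific $\cX_n$ used in the $\GER^-$ proof, a canonical choice of witnessing sets $L$ that discharges all guard clauses simultaneously. Because $\SBC^-$ has no analogue of temporarily removing clauses from $\Gamma$, this verification has to be done by a single syntactic check against the \emph{entire} formula, and the interaction of $L$ with guards touching several auxiliary variables at once is what needs care. If the guard clauses were originally designed as short clauses of a uniform pattern — as one would expect from the construction underlying \cref{thm:RAT-GER-incomparable} — then the set $L$ corresponding to the literals over the auxiliary variables in $C$ should suffice, with the tautologies arising from the literals $\cX_n$ itself places on those auxiliary variables. The remainder of the argument then reduces to bookkeeping: verifying sizes, confirming that no proof step introduces a variable outside $\Delta_n$, and assembling the $\SBC^-$ refutation from the blocked clause additions and the inherited extended resolution proof.
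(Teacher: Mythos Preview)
Your overall plan --- reusing $\cH(\langle \BPHP \rangle)$ so that the $\RAT^-$ lower bound from \cref{thm:blocked-pair-hard-for-RAT} is inherited, and then showing that $\SBC^-$ can also derive the extension $\Lambda^*$ despite the guard clauses --- matches the paper. But the central step of your proposal fails as stated: the clauses the $\GER^-$ proof derives after deletion are precisely the extension clauses $\lambda_i$, and these are \emph{not} set-blocked with respect to $\cH(\Gamma)$ for any choice of $L$.

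Concretely, take $C = \lneg{x_i} \lor p_i \in \lambda_i$. The only candidates are nonempty $L \subseteq \{\lneg{x_i}, p_i\}$. Whenever $\lneg{x_i} \in L$, the guard $D = x_i \lor \lneg{y_i}$ satisfies $D \cap \lneg{L} \ni x_i$ and $D \cap L = \varnothing$, while $(C \setminus L) \cup (D \setminus \lneg{L}) \subseteq \{p_i, \lneg{y_i}\}$ is not tautological. If instead $L = \{p_i\}$, clauses of $\Gamma$ containing $\lneg{p_i}$ (and, for later $i$, earlier extension clauses) cause analogous failures. So ``enlarging $L$ within $C$'' cannot work: $C$ contains no $y_i$-literal, and that is exactly what is needed to dismiss the guard $x_i \lor \lneg{y_i}$.

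The missing idea is to first \emph{weaken} each extension clause by the companion $y_i$-literal and only then add it. The paper derives $\lneg{x_i} \lor \lneg{y_i} \lor p_i$, $\lneg{x_i} \lor \lneg{y_i} \lor q_i$, and $x_i \lor y_i \lor \lneg{p_i} \lor \lneg{q_i}$ as SBCs for $L = \{\lneg{x_i}, \lneg{y_i}\}$ (respectively $\{x_i, y_i\}$): both guards $\lneg{x_i} \lor y_i$ and $x_i \lor \lneg{y_i}$ now intersect $L$ and are vacuously discharged, and nothing else in $\cH(\Gamma)$ or the earlier weakened $\lambda_j'$ meets $\lneg{L}$. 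One then resolves each weakened clause against the appropriate guard to recover $\lambda_i$ itself, after which the resolution part $\Pi^*$ runs unchanged. (There is no role for \cref{thm:BC-ER-effective-simulation} here; that lemma concerns a different transformation, and once $\Lambda^*$ is present you simply execute $\Pi^*$.) Your intuition that $L$ should consist of the literals over auxiliary variables in $C$ is correct --- but only after $C$ has been padded to actually contain a $y_i$-literal.
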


Finally, we give polynomial-size $\SBC^-$ proofs of the pigeonhole principle,
which exponentially separates $\SBC^-$ from $\GER^-$
by a lower bound due to Kullmann~\cite[Lemma~9.4]{Kul99b}.

\begin{theorem}\label{thm:separation-SBC-from-GER}
  There exists an infinite sequence $(\Gamma_n)_{n=1}^\infty$ of formulas
  such that $\Gamma_n$ admits $\SBC^-$ proofs of size $n^{O(1)}$
  but requires $\GER^-$ proofs of size $2^{\Omega(n)}$.
\end{theorem}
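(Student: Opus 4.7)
I would take the formulas $\Gamma_n$ to be the standard encoding of the pigeonhole principle with $n+1$ pigeons mapped to $n$ holes, over variables $x_{i,j}$ for $i \in \{1,\dots,n+1\}$ and $j \in \{1,\dots,n\}$, with the ``each pigeon in some hole'' clauses $\bigvee_{j} x_{i,j}$ together with the ``no two pigeons share a hole'' clauses $\lneg{x_{i,j}} \lor \lneg{x_{i',j}}$. The $\GER^-$ lower bound is immediate from Kullmann's Lemma~9.4, so the remaining task is to construct polynomial-size $\SBC^-$ proofs.

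The plan is an inductive reduction from the $(n+1)$-pigeon instance to the $n$-pigeon instance using $\SBC^-$ with a polynomial blow-up at each step; iterating yields polynomial size overall. For one reduction step I would use a short sequence of set-blocked clause additions to justify the symmetry-breaking assumption ``without loss of generality, pigeon $n+1$ is in hole $n$,'' and then use resolution alone to eliminate the variables with $i = n+1$ or $j = n$ and derive the clauses of the smaller instance on the remaining variables.

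The crux is formalizing the symmetry-breaking step in $\SBC^-$. Intuitively, any satisfying assignment can be transformed into one with $x_{n+1,n}$ true by a swap: if pigeon $n+1$ sits in hole $k \ne n$ and some pigeon $i \le n$ sits in hole $n$, swap their assignments between holes $k$ and $n$. To justify a set-blocked addition I would select a clause $C$ containing $x_{n+1,n}$ together with literals describing the other positions affected by such a swap, and pick $L \subseteq C$ to consist of precisely those literals whose truth witnesses the swapped configuration. The set-blockedness check then consists of verifying that for every axiom $D$ of the current formula with $D \cap \lneg{L} \ne \varnothing$ and $D \cap L = \varnothing$, the residual $\bigl(C \setminus L\bigr) \cup \bigl(D \setminus \lneg{L}\bigr)$ is tautological because $C \setminus L$ retains one of the swap literals that negates the obstruction in $D$. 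A naive first attempt with $C = x_{n+1,n}$ and $L = \{x_{n+1,n}\}$ fails against the axiom $\lneg{x_{n+1,n}} \lor \lneg{x_{i,n}}$, whose residual is the nontautological $\lneg{x_{i,n}}$, so $L$ must be larger or the symmetry-breaking must be broken into several set-blocked additions whose $L$-sets jointly cover the obstructions.

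The main obstacle, and the bulk of the technical work, is identifying the precise family of set-blocked clauses that together encode the swap while satisfying the blockedness condition against both kinds of axioms simultaneously. The nonmonotonicity of $\SBC^-$ means that the order of additions matters and previously added clauses contribute to the verification of later ones, so the sequence must be engineered carefully. Once the symmetry-breaking phase is complete, the remaining resolution manipulations to produce the reduced instance are standard and polynomial in size, and the recursion of depth $n$ yields a total proof size polynomial in $n$.
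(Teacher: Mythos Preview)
Your plan matches the paper's proof exactly: the separating formulas are $\PHP_n$, the $\GER^-$ lower bound is Kullmann's, and the $\SBC^-$ upper bound is an inductive reduction that fixes one pigeon to one hole via a swap (the paper fixes the smallest pigeon to the smallest hole rather than the largest to the largest, but that is cosmetic). You have also correctly located the only nontrivial step and correctly guessed that $L$ should consist of the four swap literals. What is missing is the part you yourself flag as ``the bulk of the technical work'': the actual clause $C$ and the verification.

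The point you have not yet seen is that $C\setminus L$ must carry extra positive literals to absorb the hole axioms. In the paper's construction (which is the Buss--Thapen $\SPR^-$ proof, observed to already be $\SBC^-$), at stage $i$ one adds, for each $j,k>i$,
\[
C_{i,j,k}\;=\;\lneg{p_{i,k}}\lor\lneg{p_{j,i}}\lor\Bigl(\bigvee_{\ell\ne i}p_{\ell,k}\Bigr)\lor\Bigl(\bigvee_{\ell\ne j}p_{\ell,i}\Bigr)
\]
with $L=\{\lneg{p_{i,k}},\,\lneg{p_{j,i}},\,p_{i,i},\,p_{j,k}\}$. A hole axiom $D$ with $D\cap\lneg{L}\ne\varnothing$ and $D\cap L=\varnothing$ must be $H_{i,i',i}$ with $i'\ne j$ or $H_{j,j',k}$ with $j'\ne i$; in either case the corresponding positive literal $p_{i',i}$ or $p_{j',k}$ sits in one of the long disjunctions in $C\setminus L$, giving the tautology. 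Your worry about order-dependence dissolves: every previously added $C_{i',j',k'}$ (with $i'\le i$) that meets $\lneg{L}$ already contains $p_{i,i}$ or $p_{j,k}$ and hence meets $L$, so the clauses at each stage can be added in any order and earlier stages never obstruct later ones. After the additions, resolving each $C_{i,j,k}$ against hole axioms yields $\lneg{p_{i,k}}\lor\lneg{p_{j,i}}$, and the rest is the routine resolution you anticipated.
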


Along the way to our main results, we prove a partial simulation of $\RAT^-$ by $\BC^-$.
It is partial in the sense that the size of the produced $\BC^-$ proof
is not always a polynomial in the size of the $\RAT^-$ proof
(which is impossible due to \cref{thm:RAT-GER-incomparable}).
It also has the property that, although the produced proof may sometimes be small,
the simulation cannot necessarily be carried out
in time polynomial in the size of the produced proof.
This is because the simulation involves generating satisfying assignments
to certain formulas obtained in the process.
To our knowledge, all of the ``natural'' simulations
between the commonly studied proof systems are efficient,
so the partial simulation of $\RAT^-$ by $\BC^-$ is an odd example.
Another notable aspect of the simulation
is that it directly informed the construction of the formulas
that we use for separating $\RAT^-$ from $\GER^-$.
We discuss this further at the end of \cref{sec:partial-simulation-RAT-by-BC}.
Due to the technical nature of the simulation, we do not state it here in detail.

\subsection{Open questions}
\label{sec:open-questions}

We leave open the following.

\begin{question}\label{qn:RAT-separated-from-SBC}
  Is $\RAT^-$ exponentially separated from $\SBC^-$?
\end{question}

\begin{question}\label{qn:GER-separated-from-SBC}
  Is $\GER^-$ exponentially separated from $\SBC^-$?
\end{question}

Answering these questions will complete the picture
of the relative strengths of the weakest generalizations of $\BC^-$
along each axis in \cref{fig:redundancy-landscape}.
However, we do not even have any superpolynomial lower bounds for $\SBC^-$.
If a separation of extended resolution from $\SBC^-$ is shown,
it might be possible to relatively easily separate
$\RAT^-$ and $\GER^-$ from $\SBC^-$
by tailoring guarded extension variables that $\SBC^-$ cannot access
(in the manner of the current paper).

As an aside, the formulas that we use for the separations in this paper
are arguably ``artificial'' in that they do not encode any combinatorial principles.
Separations with ``natural'' formulas give more intuitive insight
into the relative capabilities of the proof systems being considered,
so it is desirable to reprove \cref{thm:RAT-GER-incomparable,thm:separation-SBC-from-RAT}
using formulas that encode some combinatorial principles.
On the other hand, those artificial formulas enable
relatively simple and modular proofs of the separations.
An interesting open question is whether our strategy
of separating two proof systems $P$ and $Q$,
both of which effectively simulate a strong system $R$,
through syntactic manipulations of formulas
that separate $R$ from $P$ and $Q$ is more generally applicable.%
\footnote{A slightly similar strategy gives a separation of linear resolution
  from tree-like and regular resolution
  (and some other variants of resolution)~\citetext{\citealp[Lemma~4.5]{BP07}; \citealp[Section~2]{BJ16}}.}

Another desirable goal is to establish tighter connections
between the more commonly studied proof systems
and the generalizations of $\BC^-$.
As mentioned earlier, Frege does not simulate $\BC^-$
unless it also simulates extended Frege.
Additionally, it is already known
that bounded-depth Frege does not simulate $\BC^-$~\cite[Corollary~2.3]{BT21}.
We naturally wonder about the converse direction.

\begin{question}
  Is there a subsystem of Frege stronger than resolution that $\DBC^-$ simulates?
\end{question}

\section{Preliminaries}
\label{sec:preliminaries}

We denote the set of positive integers by $\NN^+$.
For $n \in \NN^+$, we let $[n] \coloneqq \{1, \dots, n\}$.
For a sequence $S = (x_1, \dots, x_n)$,
its length is $n$ and we denote it by $\abs{S}$.
We use $\langle x \rangle$ to compactly denote an infinite sequence $(x_n)_{n=1}^\infty$.

\subsection{Propositional logic}
\label{sec:propositional-logic}
For notation we mostly follow Buss and Thapen~\cite{BT21}.

We use $0$ and $1$ to denote $\false$ and $\true$, respectively.
A \emph{literal} is a propositional variable or its negation.
A set of literals is \emph{tautological}
if it contains a pair of complementary literals $x$ and $\lneg{x}$.
A \emph{clause} is the disjunction of a nontautological set of literals.
We denote by $\bV$, $\bL$, and $\bC$
respectively the sets of all variables, all literals, and all clauses.
A \emph{conjunctive normal form formula} (CNF) is a conjunction of clauses.
We identify clauses with sets of literals and CNFs with sets of clauses.
In the rest of this section we use $C$, $D$ to denote clauses
and $\Gamma$, $\Delta$ to denote CNFs.

When we know $C \cup D$ to be nontautological, we write it as $C \lor D$.
We write $C \ldor D$ to indicate a \emph{disjoint disjunction},
where $C$ and $D$ have no variables in common.
We sometimes write $\Gamma \cup \{C\}$ as $\Gamma \land C$.

We denote by $\var(\Gamma)$ the set of all the variables occurring in $\Gamma$.
We say $C$ \emph{subsumes} $D$, denoted $C \sqsupseteq D$, if $C \subseteq D$.
For CNFs, we say $\Gamma$ subsumes $\Delta$, denoted $\Gamma \sqsupseteq \Delta$,
if for all $D \in \Delta$ there exists some $C \in \Gamma$ such that $C \sqsupseteq D$.
We take the disjunction of a clause and a CNF as
\begin{equation*}
  C \lor \Delta \coloneqq \{C \lor D \mid D \in \Delta \text{ and } C \cup D \text{ is nontautological}\}.
\end{equation*}
We say $\Gamma$ and $\Delta$ are \emph{equisatisfiable}, denoted $\Gamma \eqsat \Delta$,
if they are either both satisfiable or both unsatisfiable.
With respect to $\Gamma$, a clause $C$ is \emph{redundant}
if $\Gamma \setminus \{C\} \eqsat \Gamma \eqsat \Gamma \cup \{C\}$.

A \emph{partial assignment} $\alpha$ is a partial function $\alpha \colon \bV \rightharpoonup \{0,1\}$,
which also acts on literals by letting $\alpha(\lneg{x}) \coloneqq \lneg{\alpha(x)}$.
We identify $\alpha$ with the set $\{p \in \bL \mid \alpha(p) = 1\}$,
consisting of all the literals it satisfies.
For a set $L$ of literals, we let $\lneg{L} \coloneqq \{\lneg{x} \mid x \in L\}$.
In particular, we use $\lneg{C}$ to denote the smallest partial assignment
that falsifies all the literals in $C$.
We say $\alpha$ \emph{satisfies} $C$, denoted $\alpha \models C$,
if there exists some $p \in C$ such that $\alpha(p) = 1$.
We say $\alpha$ satisfies $\Gamma$ if for all $C \in \Gamma$ we have $\alpha \models C$.
For $C$ that $\alpha$ does not satisfy,
the \emph{restriction} of $C$ under $\alpha$ is
\begin{equation*}
  C|_\alpha \coloneqq C \setminus \{p \in C \mid \alpha(p) = 0\}.
\end{equation*}
Extending the above to CNFs, the restriction of $\Gamma$ under $\alpha$ is
\begin{equation*}
  \Gamma|_\alpha \coloneqq \{C|_\alpha \mid C \in \Gamma \text{ and } \alpha \not\models C\}.
\end{equation*}

\subsection{Proof complexity}
\label{sec:proof-complexity}

We recall the definition of a proof system
(in the sense of Cook and Reckhow~\cite{CR79})
and the basic notions of proof complexity,
which can also be found in the recent textbook by Krajíček~\cite[Chapter~1]{Kra19}.

In the rest of this section we think of $\Gamma$ as a formula and $\Pi$ as a proof,
each encoded by a string over some finite alphabet.

\begin{definition}\label{def:proof-system}
  A \emph{proof system} is a polynomial-time computable binary relation $P$
  such that the following hold.
  \begin{itemize}
  \item \emph{Soundness:} For all $\Gamma$ and $\Pi$,
    if $P(\Gamma, \Pi)$ holds then $\Gamma$ is unsatisfiable.
  \item \emph{Completeness:} For all unsatisfiable $\Gamma$,
    there exists some $\Pi$ such that $P(\Gamma, \Pi)$ holds.
  \end{itemize}
  We call any $\Pi$ satisfying $P(\Gamma, \Pi)$ a \emph{$P$-proof} of $\Gamma$.
\end{definition}

Proof complexity is concerned with the sizes (or lengths) of proofs.%
\footnote{The abstract definition of proof (or formula) size
  is based on the length of the string that encodes the object;
  however, when studying concrete proof systems
  it is standard to use coarser notions of size.
  For a resolution proof, the common definition of size is the number of steps.
  It is tacitly understood that a minimum-size resolution proof
  can be encoded by a string of length polynomial in the size of the proof.}
For a proof system $P$ and a formula $\Gamma$,
we define
\begin{equation*}
  \size_P(\Gamma) \coloneqq \min \{\abs{\Pi} \mid \Pi \text{ is a $P$-proof of } \Gamma \}
\end{equation*}
if $\Gamma$ is unsatisfiable and $\size_P(\Gamma) \coloneqq \infty$ otherwise.

\begin{definition}\label{def:simulation}
  A proof system $P$ \emph{simulates} $Q$
  if for all unsatisfiable $\Gamma$ we have
  \begin{equation*}
    \size_P(\Gamma) = \size_Q(\Gamma)^{O(1)}.
  \end{equation*}
  Additionally, $P$ \emph{polynomially simulates} $Q$
  if there exists a polynomial-time algorithm
  for converting a $Q$-proof of $\Gamma$ into a $P$-proof of $\Gamma$.
\end{definition}

\begin{definition}
  Proof systems $P$ and $Q$ are \emph{equivalent}
  if they simulate each other.
  Additionally, $P$ and $Q$ are \emph{polynomially equivalent}
  if they polynomially simulate each other.
\end{definition}

We say $P$ is \emph{exponentially separated} from $Q$
if there exists some sequence $\langle \Gamma \rangle$ of formulas
such that $\size_P(\Gamma_n) = n^{O(1)}$ while $\size_Q(\Gamma_n) = 2^{\Omega(n)}$.
We call such $\langle \Gamma \rangle$ \emph{easy} for $P$ and \emph{hard} for $Q$.

\subsection{Resolution}
\label{sec:resolution}

\begin{definition}
  The \emph{resolution rule} is
  \begin{center}
    \AxiomC{$A \ldor x$}
    \AxiomC{$B \ldor \lneg{x}$}
    \BinaryInfC{$A \lor B$}
    \DisplayProof,
  \end{center}
  where $A$, $B$ are clauses and $x$ is a variable.
  We call $A \lor B$ the \emph{resolvent} of $A \lor x$ and $B \lor \lneg{x}$ on $x$.
\end{definition}

\begin{definition}
  The \emph{weakening rule} is
  \begin{center}
    \AxiomC{$A$}
    \UnaryInfC{$A \lor B$}
    \DisplayProof,
  \end{center}
  where $A$ and $B$ are clauses.
  We call $A \lor B$ a \emph{weakening} of $A$.
\end{definition}

We define a resolution proof in a slightly different form than usual:
as a sequence of CNFs instead of a sequence of clauses.

\begin{definition}
  A \emph{resolution proof} of a CNF $\Gamma$
  is a sequence $\Pi = (\Gamma_1, \dots, \Gamma_N$) of CNFs such that $\Gamma_1 = \Gamma$, $\bot \in \Gamma_N$,
  and, for all $i \in [N - 1]$, we have $\Gamma_{i+1} = \Gamma_i \cup \{C\}$,
  where
  \begin{itemize}
  \item $C$ is a resolvent of two clauses $D, E \in \Gamma_i$ or
  \item $C$ is a weakening of some clause $D \in \Gamma_i$.
  \end{itemize}
  The size of $\Pi$ is $N$.
\end{definition}

We write $\Res$ to denote the resolution proof system.
A well known fact is that resolution proofs are preserved under restrictions:
if $(\Gamma_1, \Gamma_2, \dots, \Gamma_N)$ is a resolution proof of $\Gamma$,
then, for every partial assignment $\alpha$,
the sequence $(\Gamma_1|_\alpha, \Gamma_2|_\alpha, \dots, \Gamma_N|_\alpha)$ contains a resolution proof of $\Gamma|_\alpha$.
This implies in particular the following.

\begin{lemma}\label{thm:resolution-under-restrictions}
  For every CNF $\Gamma$ and every partial assignment $\alpha$, we have
  \begin{equation*}
    \size_\Res(\Gamma|_\alpha) \leq \size_\Res(\Gamma).
  \end{equation*}
\end{lemma}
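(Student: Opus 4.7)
The plan is to fix a minimum-size resolution proof $\Pi = (\Gamma_1, \dots, \Gamma_N)$ of $\Gamma$, apply the restriction $\alpha$ clause by clause, and exhibit the resulting sequence as (essentially) a resolution proof of $\Gamma|_\alpha$ of size at most $N$. Concretely, I would consider the sequence $(\Gamma_1|_\alpha, \dots, \Gamma_N|_\alpha)$; the endpoints are correct since $\Gamma_1|_\alpha = \Gamma|_\alpha$ and since $\alpha \not\models \bot$ implies $\bot \in \Gamma_N|_\alpha$. What needs checking is that each transition $\Gamma_i|_\alpha \rightsquigarrow \Gamma_{i+1}|_\alpha$ is either trivial (the CNF is unchanged) or justified by a single resolution or weakening step, so that after deleting the trivial transitions we obtain a resolution proof of length at most $N$.

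First I would observe that at step $i$ the original proof adds a single clause $C$ to $\Gamma_i$. If $\alpha \models C$, then $\Gamma_{i+1}|_\alpha = \Gamma_i|_\alpha$ and nothing has to be done. Otherwise $C|_\alpha$ is the unique new clause, and I must show it is derivable in one step from $\Gamma_i|_\alpha$. I would split into two cases according to how $C$ was derived.

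For a weakening step with $C = A \lor B$ obtained from $A \in \Gamma_i$, the assumption $\alpha \not\models C$ gives $\alpha \not\models A$, hence $A|_\alpha \in \Gamma_i|_\alpha$; since $A|_\alpha \subseteq C|_\alpha$, the clause $C|_\alpha$ is a weakening of $A|_\alpha$. For a resolution step with $C = A \lor B$ obtained from $D = A \ldor x$ and $E = B \ldor \lneg{x}$, I would case-split on $\alpha(x)$. If $\alpha(x) = 1$, then $D$ is satisfied but $E|_\alpha = B|_\alpha \in \Gamma_i|_\alpha$, and $C|_\alpha$ is a weakening of $B|_\alpha$; the case $\alpha(x) = 0$ is symmetric; and if $\alpha$ leaves $x$ unassigned, then $D|_\alpha$ and $E|_\alpha$ still contain $x$ and $\lneg{x}$ respectively and $C|_\alpha$ is precisely their resolvent on $x$.

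Collecting these cases, each of the $N-1$ transitions contributes at most one rule application to the restricted proof, so after discarding the trivial transitions we get a resolution proof of $\Gamma|_\alpha$ of size at most $N$, yielding $\size_\Res(\Gamma|_\alpha) \leq \size_\Res(\Gamma)$. There is no substantive obstacle here; the only mild subtlety is keeping track of the disjointness annotation $\ldor$ under restriction (which survives because falsified literals are simply dropped while unassigned variables remain distinct), and noticing that a restricted resolvent may degenerate into a weakening when $\alpha$ assigns the pivot.
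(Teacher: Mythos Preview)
Your proposal is correct and is exactly the approach the paper takes: the paper simply states as a well-known fact that if $(\Gamma_1,\dots,\Gamma_N)$ is a resolution proof of $\Gamma$ then $(\Gamma_1|_\alpha,\dots,\Gamma_N|_\alpha)$ contains a resolution proof of $\Gamma|_\alpha$, and your case analysis is precisely the standard verification of this claim.
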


We next define a weakened version of resolution
that comes up often in the study of decision algorithms for satisfiability.

\begin{definition}
  A \emph{unit propagation proof} is a resolution proof
  where each use of the resolution rule is of the form
  \begin{center}
    \AxiomC{$A \ldor x$}
    \AxiomC{$\lneg{x}$}
    \BinaryInfC{$A$}
    \DisplayProof.
  \end{center}
\end{definition}

Unit propagation is not complete.
With $\Gamma$, $\Delta$ CNFs and $L = \{p_1, \dots, p_k\}$ a set of literals,
we define $\Gamma \land \lneg{L} \coloneqq \Gamma \land \lneg{p_1} \land \dots \land \lneg{p_k}$
and write
\begin{itemize}
\item $\Gamma \vdash_1 \bot$ to denote that there exists a unit propagation proof of $\Gamma$,
\item $\Gamma \vdash_1 L$ to denote $\left(\Gamma \land \lneg{L}\right) \vdash_1 \bot$,
\item $\Gamma \vdash_1 \Delta$ to denote that for all $D \in \Delta$ we have $\Gamma \vdash_1 D$.
\end{itemize}
Note that $\Gamma \vdash_1 \Delta$ implies $\Gamma \models \Delta$.
Moreover, whether $\Gamma \vdash_1 \Delta$ holds can be decided in polynomial time.
This makes it useful as a component in defining inference rules.

As in the case of resolution, unit propagation proofs are preserved under restrictions.

\begin{lemma}\label{thm:unit-propagation-under-restrictions}
  For every CNF $\Gamma$, every set $L$ of literals, and every partial assignment $\alpha$
  such that $\alpha \not\models L$, if $\Gamma \vdash_1 L$, then $\Gamma|_\alpha \vdash_1 L|_\alpha$.
\end{lemma}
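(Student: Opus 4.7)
The plan is to show that a unit propagation proof, when restricted clause by clause, yields a unit propagation proof of the restricted problem, mirroring the argument sketched immediately before \cref{thm:resolution-under-restrictions} for resolution. Suppose $\Pi = (\Gamma_1, \ldots, \Gamma_N)$ is a unit propagation proof of $\bot$ from $\Gamma \land \lneg{L}$, and form $\Pi|_\alpha \coloneqq (\Gamma_1|_\alpha, \ldots, \Gamma_N|_\alpha)$ by restricting each CNF clause by clause. Since the hypothesis $\alpha \not\models L$ ensures $\alpha(p) \neq 1$ for every $p \in L$, no unit $\lneg{p} \in \lneg{L}$ restricts to $\bot$; a direct check then gives $\Gamma_1|_\alpha = \Gamma|_\alpha \land \lneg{L|_\alpha}$, while $\bot \in \Gamma_N|_\alpha$ is immediate from $\bot|_\alpha = \bot$.

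It remains to verify that each inference $\Gamma_{i+1} = \Gamma_i \cup \{C\}$ can be justified, after restriction, as a unit propagation step, a weakening step, or a trivial no-op to be discarded. For a unit propagation step deriving $A$ from $A \ldor x$ and $\lneg{x}$, I would split on the value of $\alpha(x)$. If $x$ is unassigned, the restricted premises are $(A \ldor x)|_\alpha = A|_\alpha \ldor x$ and $\lneg{x}$, so the step is still a valid unit propagation on $x$. If $\alpha(x) = 0$, the premise $\lneg{x}$ is satisfied and $(A \ldor x)|_\alpha = A|_\alpha$ already belongs to $\Gamma_i|_\alpha$, trivializing the step. If $\alpha(x) = 1$, then $(\lneg{x})|_\alpha = \bot$ is already in $\Gamma_i|_\alpha$, and $A|_\alpha$ can be recovered as a weakening of $\bot$. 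For a weakening step deriving $A \lor B$ from $A$, either $\alpha$ satisfies $A \lor B$ (no-op) or $(A \lor B)|_\alpha$ is a weakening of $A|_\alpha \in \Gamma_i|_\alpha$.

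After replacing each step by its restricted justification and discarding the no-ops, we obtain a unit propagation proof of $\bot$ from $\Gamma|_\alpha \land \lneg{L|_\alpha}$, which is precisely $\Gamma|_\alpha \vdash_1 L|_\alpha$. The one mildly subtle point is ensuring $\bot \in \Gamma_i|_\alpha$ in the case $\alpha(x) = 1$: this holds because $\lneg{x}$ is either an initial clause of $\Gamma$ (in which case $\bot \in \Gamma|_\alpha$ already), a unit clause in $\lneg{L}$ (ruled out by $\alpha \not\models L$), or was derived at some earlier step, whose own restriction to $\bot$ was handled by the same case analysis applied inductively. Everything else is a routine adaptation of the standard restriction argument for resolution proofs, so I do not expect any serious obstacle.
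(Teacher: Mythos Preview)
Your proposal is correct and follows essentially the same route as the paper: restrict the unit propagation proof of $\Gamma \land \lneg{L}$ under $\alpha$ and then identify the restricted starting formula with $\Gamma|_\alpha \land \lneg{L|_\alpha}$. The only difference is one of packaging: the paper states ``unit propagation proofs are preserved under restrictions'' as a known fact just before the lemma and invokes it as a black box, whereas you spell out the step-by-step case analysis that justifies it; your unpacking of $(\Gamma \land \lneg{L})|_\alpha$ is also correct, and the paper phrases the same computation as $\lneg{L} \setminus \alpha = \lneg{L|_\alpha}$.
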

\begin{proof}
  Suppose that $\left(\Gamma \land \lneg{L}\right) \vdash_1 \bot$.
  Since unit propagation proofs are preserved under restrictions,
  we have $\left\zerodel\left(\Gamma \land \lneg{L}\right)\right|_\alpha \vdash_1 \bot$.
  Unpacking the formula on the left-hand side gives
  \begin{equation}\label{eq:unit-propagation-under-restrictions-unpacked}
    \Gamma|_\alpha \land \left(\lneg{L} \setminus \alpha\right) \vdash_1 \bot.
  \end{equation}
  (We write $\lneg{L} \setminus \alpha$ instead of $\lneg{L}|_\alpha$
  since $\alpha$ may satisfy $\lneg{L}$.)
  Noting that $\alpha \not\models L$, we have $\lneg{L} \setminus \alpha = \lneg{L|_\alpha}$.
  Along with~\cref{eq:unit-propagation-under-restrictions-unpacked},
  this implies that $\Gamma|_\alpha \vdash_1 L|_\alpha$.
\end{proof}

From this point on, we discuss some strengthenings of the resolution proof system.

\begin{definition}\label{def:extension-clauses}
  Let $\Gamma$ be a CNF and $p$, $q$ be arbitrary literals.
  Consider a \emph{new} variable $x$
  (i.e., not occurring in any one of $\Gamma$, $p$, $q$).
  We call
  \begin{equation*}
    \{\lneg{x} \lor p,\ \lneg{x} \lor q,\ x \lor \lneg{p} \lor \lneg{q}\}
  \end{equation*}
  a set of \emph{extension clauses} for $\Gamma$.
  In this context, we refer to $x$ as the \emph{extension variable}.
\end{definition}

\begin{definition}
  A CNF $\Lambda$ is an \emph{extension} for a CNF $\Gamma$
  if there exists a sequence $(\lambda_1, \dots, \lambda_t)$ such that $\Lambda = \bigcup_{i = 1}^t \lambda_i$,
  and, for all $i \in [t]$,
  we have that $\lambda_i$ is a set of extension clauses for $\Gamma \cup \bigcup_{j = 1}^{i - 1} \lambda_j$.
\end{definition}

\begin{definition}
  An \emph{extended resolution proof} of a CNF $\Gamma$
  is a pair $(\Lambda, \Pi)$,
  where $\Lambda$ is an extension for $\Gamma$
  and $\Pi$ is a resolution proof of $\Gamma \cup \Lambda$.
  The size of $(\Lambda, \Pi)$ is defined to be $\abs{\Lambda} + \abs{\Pi}$.
\end{definition}

We write $\ER$ to denote the extended resolution proof system.

\section{Inference rules}
\label{sec:inference-rules}

We recall the redundancy criteria that lead to the inference rules
we use to augment resolution proofs.
The definitions are adapted from previous works~\cite{Kul99b,JHB12,KSTB18,HKB20,BT21}.

\begin{definition}
  A clause $C = p \ldor C'$
  is a \emph{blocked clause} (BC) for $p$ with respect to a CNF $\Gamma$
  if, for every clause $D$ of the form $\lneg{p} \ldor D'$ in $\Gamma$,
  the set $C' \cup D'$ is tautological.
\end{definition}

A strict generalization of the notion of a blocked clause
is a resolution asymmetric tautology, defined as follows.

\begin{definition}
  A clause $C = p \ldor C'$
  is a \emph{resolution asymmetric tautology} (RAT) for $p$ with respect to a CNF $\Gamma$
  if, for every clause $D$ of the form $\lneg{p} \ldor D'$ in $\Gamma$,
  we have $\Gamma \vdash_1 C' \cup D'$.
\end{definition}

Another strict generalization of a blocked clause is a set-blocked clause.%
\footnote{We define a set-blocked clause in a slightly different,
  although equivalent, way compared with the original~\cite[Definition~4.1]{KSTB18}.}

\begin{definition}
  A clause $C$ is a \emph{set-blocked clause} (SBC) for a nonempty $L \subseteq C$ with respect to a CNF $\Gamma$
  if, for every clause $D \in \Gamma$ with $D \cap \lneg{L} \neq \varnothing$ and $D \cap L = \varnothing$,
  the set $\bigl(C \setminus L\bigr) \cup \bigl(D \setminus \lneg{L}\bigr)$ is tautological.
\end{definition}

We say $C$ is a BC with respect to $\Gamma$
if there exists a literal $p \in C$
for which $C$ is a BC with respect to $\Gamma$, and similarly for RAT and SBC\@.
Note that the above definitions do not prohibit
BCs, RATs, or SBCs with respect to $\Gamma$ from containing variables not occurring in $\Gamma$.

It was shown by Kullmann~\cite{Kul99b}, Järvisalo, Heule, and Biere~\cite{JHB12},
and Kiesl, Seidl, Tompits, and Biere~\cite{KSTB18}
that BCs, RATs, and SBCs are redundant,
which makes it possible to use them to define proof systems.

\begin{theorem}\label{thm:BC-RAT-redundancy}
  If a clause $C$ is a BC, RAT, or SBC with respect to a CNF $\Gamma$,
  then $\Gamma \setminus \{C\} \eqsat \Gamma \eqsat \Gamma \cup \{C\}$.
\end{theorem}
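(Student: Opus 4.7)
The plan is to establish both equisatisfiabilities by converting, via a ``flip,'' a model of the smaller formula into a model of the larger. One direction in each equisatisfiability is trivial: any assignment satisfying $\Gamma \cup \{C\}$ satisfies $\Gamma$, which in turn satisfies $\Gamma \setminus \{C\}$. For the other direction it suffices to show that if $\alpha$ satisfies $\Gamma$, there is some $\alpha'$ satisfying $\Gamma \cup \{C\}$; the case $C \in \Gamma$ causes no extra trouble, since the BC/RAT/SBC conditions on $C$ with respect to $\Gamma$ only scrutinize clauses $D$ that cannot equal $C$ itself (for BC and RAT, $C$ contains $p$ while $D$ contains $\lneg{p}$; for SBC, $L \subseteq C$ whereas $D \cap L = \varnothing$).

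Given $\alpha \models \Gamma$, if $\alpha \models C$ we take $\alpha' = \alpha$. Otherwise $\alpha$ falsifies every literal of $C$, and we define $\alpha'$ as follows. For BC and RAT, writing $C = p \ldor C'$, let $\alpha'$ agree with $\alpha$ off the variable of $p$ and set $\alpha'(p) = 1$. For SBC with witnessing subset $L \subseteq C$ (which is nontautological since $C$ is nontautological), let $\alpha'$ agree with $\alpha$ outside the variables appearing in $L$ and set every literal of $L$ to $1$. In either case $\alpha' \models C$, so the task reduces to verifying $\alpha' \models D$ for every $D \in \Gamma$.

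Clauses $D$ whose variables are disjoint from those flipped are satisfied by $\alpha'$ because they were satisfied by $\alpha$. For BC, the remaining clauses have the form $D = \lneg{p} \ldor D'$, and the blocking condition makes $C' \cup D'$ tautological; since $\alpha$ falsifies $C \supseteq C'$, it satisfies the negation of some literal in $C'$, and that negation must lie in $D'$; moreover, by the disjoint union $p \ldor C'$, this literal sits on a variable distinct from $p$, so it survives the flip. For RAT, the same reasoning works once we replace ``$C' \cup D'$ is tautological'' with ``$\Gamma \vdash_1 C' \cup D'$, hence $\Gamma \models C' \cup D'$,'' which forces $\alpha$ to satisfy some literal in $D'$ (since $\alpha$ falsifies $C'$). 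For SBC, split the clauses $D \in \Gamma$ by how they meet $L$ and $\lneg{L}$: if $D \cap L \neq \varnothing$ then $\alpha' \models D$ via $L$; if $D \cap L = D \cap \lneg{L} = \varnothing$ then $D$ has no literals on the flipped variables, so $\alpha' \models D$; if $D \cap L = \varnothing$ and $D \cap \lneg{L} \neq \varnothing$, the set-blocking condition supplies a tautological $(C \setminus L) \cup (D \setminus \lneg{L})$, and the BC-style argument carries over verbatim, noting that any literal $q \in C \setminus L$ has its variable outside the variables of $L$ because $C$ is nontautological.

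The main obstacle is purely bookkeeping: tracking the disjointness encoded by $p \ldor C'$ and by the fact that $L \subseteq C$ is nontautological, so that the witnessing literal in $D'$ (or in $D \setminus \lneg{L}$) lies on a variable untouched by the flip and is therefore preserved in passing from $\alpha$ to $\alpha'$.
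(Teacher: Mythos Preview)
The paper does not prove this theorem; it merely cites the original sources \cite{Kul99b,JHB12,KSTB18}. Your argument is the standard ``flip'' construction from those sources and is correct for the direction $\Gamma \eqsat \Gamma \cup \{C\}$, which is the only direction the paper ever uses.

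Your handling of $\Gamma \setminus \{C\} \eqsat \Gamma$ when $C \in \Gamma$ has a gap in the RAT case. You argue that the BC/RAT/SBC conditions ``only scrutinize clauses $D$ that cannot equal $C$,'' so the flip argument carries over with $\Gamma \setminus \{C\}$ in place of $\Gamma$. This is fine for BC and SBC, whose defining conditions are antitone in $\Gamma$. But the RAT condition also contains the clause $\Gamma \vdash_1 C' \cup D'$, and removing $C$ from $\Gamma$ can destroy this derivation: your reduction would need $(\Gamma \setminus \{C\}) \vdash_1 C' \cup D'$, which does not follow. In fact the deletion direction is \emph{false} for RAT as stated. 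Take $\Gamma = \{p,\ \lneg{p} \lor q,\ \lneg{q}\}$ and $C = \{p\}$: then $C$ is a RAT for $p$ with respect to $\Gamma$ (the only relevant $D$ is $\lneg{p} \lor q$, and $\Gamma \vdash_1 q$ since unit-propagating $p$ and $\lneg{q}$ into $\lneg{p} \lor q$ yields $\bot$), yet $\Gamma$ is unsatisfiable while $\Gamma \setminus \{C\} = \{\lneg{p} \lor q,\ \lneg{q}\}$ is satisfied by $p = q = 0$. So the gap in your argument cannot be patched; this is a minor imprecision in the theorem's phrasing, harmless for the paper because only the addition direction is invoked anywhere.
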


\begin{definition}
  A \emph{blocked clauses proof} of a CNF $\Gamma$
  is a sequence $\Pi = (\Gamma_1, \dots, \Gamma_N)$ of CNFs such that $\Gamma_1 = \Gamma$, $\bot \in \Gamma_N$,
  and, for all $i \in [N - 1]$, we have $\Gamma_{i+1} = \Gamma_i \cup \{C\}$,
  where either
  \begin{itemize}
  \item $C$ is a resolvent of two clauses $D, E \in \Gamma_i$,
  \item $C$ is a weakening of some clause $D \in \Gamma_i$, or
  \item $C$ is a blocked clause with respect to $\Gamma_i$.
  \end{itemize}
  The size of $\Pi$ is $N$.
\end{definition}

We write $\BC$ to denote the blocked clauses proof system.%
\footnote{In some earlier works, $\BC$ and its generalizations are defined
  as augmentations of the \emph{reverse unit propagation} ($\RUP$) proof system~\cite{Van08} instead of resolution.
  $\RUP$ has a single inference rule,
  which allows adding a clause $C$ to a CNF $\Gamma$ if $\Gamma \vdash_1 C$.
  Resolution simulates $\RUP$ with an overhead
  linear in the number of variables (by \cref{thm:resolution-simulation-RUP}),
  and the main results in this paper are not affected by the differences in the definitions.}
Replacing ``blocked clause'' by ``resolution asymmetric tautology''
in the above definition
gives the resolution asymmetric tautologies proof system,
which we denote by $\RAT$.
Replacing it by ``set-blocked clause'' gives the set-blocked clauses proof system,
which we denote by $\SBC$.

$\RAT$ and $\SBC$ are two generalizations of $\BC$, and we now define another,
designed to overcome the dependence of the validity of $\BC$ inferences
on the order of clause additions~\cite[see][Section~1.3]{Kul99b}.

For a CNF $\Gamma$ and a set $V$ of variables, we let
\begin{equation*}
  \B_V(\Gamma) \coloneqq \{C \in \bC \mid C \text{ is a BC for a literal of some }
  x \in V \text{ with respect to } \Gamma\}.
\end{equation*}
We also let
\begin{itemize}
\item $\B(\Gamma) \coloneqq \B_\bV(\Gamma)$,
\item $\inB(\Gamma) \coloneqq \B(\Gamma) \cap \Gamma$,
\item $\B^-_V(\Gamma) \coloneqq \{C \in \B_V(\Gamma) \mid \var(C) \subseteq \var(\Gamma)\}$,
\item $\B^-(\Gamma) \coloneqq \B^-_{\var(\Gamma)}(\Gamma)$.
\end{itemize}

Before proceeding, we observe the below result,
which follows immediately from the definition of a blocked clause.

\begin{lemma}\label{thm:blocked-set-decrease}
  For all CNFs $\Gamma$ and $\Delta$ such that $\Gamma \subseteq \Delta$, we have $\B(\Gamma) \supseteq \B(\Delta)$.
\end{lemma}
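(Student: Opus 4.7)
The plan is to unfold the definitions and observe that the blockedness condition is a universal statement over clauses in the CNF, so restricting to a smaller CNF can only make it easier to satisfy. Concretely, I would take an arbitrary clause $C \in \B(\Delta)$ and show $C \in \B(\Gamma)$.

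First I would apply the definition of $\B(\Delta)$: there exists a variable $x$ and a literal $p \in \{x, \lneg{x}\}$ with $p \in C$, writing $C = p \ldor C'$, such that for every clause $D \in \Delta$ of the form $D = \lneg{p} \ldor D'$, the set $C' \cup D'$ is tautological. Next I would use the hypothesis $\Gamma \subseteq \Delta$: any clause $D \in \Gamma$ of the form $\lneg{p} \ldor D'$ is automatically a clause of $\Delta$ of that same form, so the tautology condition carries over verbatim. Hence $C$ is a BC for $p$ with respect to $\Gamma$, witnessed by the same literal and the same variable, giving $C \in \B_V(\Gamma)$ for the same $V$, and in particular $C \in \B(\Gamma)$.

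There is no real obstacle here; the statement is essentially an immediate consequence of the fact that the definition of a blocked clause quantifies universally over the clauses of the ambient CNF, so shrinking the CNF weakens the premise of that universal statement and preserves the conclusion. The only thing to be mildly careful about is keeping the blocking literal $p$ fixed across the two CNFs, which is automatic since the witness from $\Delta$ works without modification for $\Gamma$.
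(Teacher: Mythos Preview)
Your proposal is correct and matches the paper's approach: the paper does not even spell out a proof, stating only that the result ``follows immediately from the definition of a blocked clause,'' which is exactly the universal-quantifier observation you make. Your unfolding of the definition and reuse of the same blocking literal $p$ is precisely the intended argument.
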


Thus, we may assume without loss of generality that
all of the blocked clause additions in a $\BC$ proof
are performed before any resolution steps.
(A similar assumption does not necessarily hold for $\RAT$ proofs.)

\begin{definition}
  A sequence $(C_1, \dots, C_m)$ of some clauses from a CNF $\Gamma$
  is a \emph{maximal blocked sequence} for $\Gamma$ if
  \begin{itemize}
  \item for all $i \in [m]$ the clause $C_i$
    is blocked with respect to $\Gamma \setminus \bigcup_{j = 1}^{i - 1} \{C_j\}$ and
  \item $\inB\mleft(\Gamma \setminus \bigcup_{i = 1}^m \{C_i\}\mright)$ is empty.
  \end{itemize}
\end{definition}

For a CNF $\Gamma$, a maximal blocked sequence is unique
up to the ordering of its clauses~\cite[Lemma~6.1]{Kul99b},
which makes the following notion well defined.

\begin{definition}
  Let $(C_1, \dots, C_m)$ be a maximal blocked sequence for a CNF $\Gamma$.
  The \emph{kernel} of $\Gamma$ is
  \begin{equation*}
    \ker(\Gamma) \coloneqq \Gamma \setminus \bigcup_{i = 1}^m \{C_i\}.
  \end{equation*}
\end{definition}

\begin{definition}\label{def:blocked-extension}
  A CNF $\Lambda$ is a \emph{blocked extension} for a CNF $\Gamma$ if $\ker(\Gamma \cup \Lambda) = \ker(\Gamma)$.
\end{definition}

\begin{definition}
  A \emph{generalized extended resolution proof} of a CNF $\Gamma$
  is a pair $(\Lambda, \Pi)$,
  where $\Lambda$ is a blocked extension for $\Gamma$
  and $\Pi$ is a resolution proof of $\Gamma \cup \Lambda$.
  The size of $(\Lambda, \Pi)$ is defined to be $\abs{\Lambda} + \abs{\Pi}$.
\end{definition}

We write $\GER$ to denote the generalized extended resolution proof system.
The relationship between $\GER$ and $\BC$ is made clear
by the following characterization of blocked extensions.

\begin{lemma}[{\cite[Lemma~6.5]{Kul99b}}]\label{thm:blocked-extension-characterization}
  A CNF $\Lambda$ is a blocked extension for a CNF $\Gamma$
  if and only if there exists a CNF $\Gamma' \subseteq \Gamma$
  and an ordering $(C_1, \dots, C_m)$
  of all the clauses in $\Lambda \cup (\Gamma \setminus \Gamma')$
  such that for all $i \in [m]$
  the clause $C_i$ is blocked with respect to $\Gamma' \cup \bigcup_{j = 1}^{i - 1} \{C_j\}$.
\end{lemma}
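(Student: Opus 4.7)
The plan is to prove both directions by translating between a maximal blocked sequence for $\Gamma \cup \Lambda$ and the kind of ordering described in the statement. The key tool is a simple self-removal observation: if a clause $C \in \Delta$ is blocked for literal $p$ with respect to $\Delta$, then it is also blocked for $p$ with respect to $\Delta \setminus \{C\}$, and conversely. The clauses that must be checked for tautological resolvents are those containing $\lneg{p}$, and $C$ itself does not contain $\lneg{p}$ since it contains $p$ and is nontautological. Combined with \cref{thm:blocked-set-decrease}, this observation is essentially all I need.

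For the forward direction, I would set $\Gamma' := \ker(\Gamma)$. The assumption $\ker(\Gamma \cup \Lambda) = \ker(\Gamma) = \Gamma'$ means that a maximal blocked sequence $(S_1, \dots, S_m)$ for $\Gamma \cup \Lambda$ removes exactly the clauses in $\Lambda \cup (\Gamma \setminus \Gamma')$ (after reducing to the case $\Lambda \cap \Gamma = \varnothing$, which loses no generality). I take the reverse $(C_1, \dots, C_m) := (S_m, \dots, S_1)$ as the ordering. To verify blockedness, note that $S_i$ is blocked with respect to $(\Gamma \cup \Lambda) \setminus \{S_1, \dots, S_{i-1}\} = \Gamma' \cup \{S_i, \dots, S_m\}$; the self-removal observation lets me drop $S_i$ itself, so $S_i = C_{m-i+1}$ is blocked with respect to $\Gamma' \cup \{S_{i+1}, \dots, S_m\} = \Gamma' \cup \{C_1, \dots, C_{m-i}\}$, as required.

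For the backward direction, I would take the reverse $(C_m, \dots, C_1)$ of the given ordering and argue that it forms a blocked sequence for $\Gamma \cup \Lambda$ ending at $\Gamma'$: at step $i$, $C_{m-i+1}$ is blocked with respect to $\Gamma' \cup \{C_1, \dots, C_{m-i}\}$ by hypothesis, and reinserting $C_{m-i+1}$ preserves blockedness by the self-removal observation. Extending with a maximal blocked sequence for $\Gamma'$ gives one for $\Gamma \cup \Lambda$, so $\ker(\Gamma \cup \Lambda) = \ker(\Gamma')$. Separately, I would restrict the given ordering to its subsequence of clauses from $\Gamma \setminus \Gamma'$: each such $C_i$ is blocked with respect to a superset of $\Gamma' \cup \{C_j : j < i,\ C_j \in \Gamma \setminus \Gamma'\}$, so by \cref{thm:blocked-set-decrease} blockedness transfers to this smaller CNF. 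Reversing this subsequence then yields, by the same self-removal argument, a blocked sequence for $\Gamma$ ending at $\Gamma'$, giving $\ker(\Gamma) = \ker(\Gamma')$. Chaining the two equalities closes the proof.

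The main obstacle I expect is bookkeeping around potential overlap between $\Lambda$ and $\Gamma$: strictly speaking, $\Lambda \cup (\Gamma \setminus \Gamma')$ may differ as a set from $(\Gamma \cup \Lambda) \setminus \Gamma'$ when $\Lambda$ contains clauses of $\Gamma'$. This is easily dispatched—for example by replacing $\Lambda$ by $\Lambda \setminus \Gamma$ and noting that the property of being a blocked extension is unaffected—but it does require care to state the correspondence cleanly. Every other step is a routine application of the self-removal observation together with \cref{thm:blocked-set-decrease}.
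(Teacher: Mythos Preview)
The paper does not prove this lemma; it is quoted from Kullmann~\cite[Lemma~6.5]{Kul99b} without argument, so there is nothing in the paper to compare your proposal against.

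That said, your argument is the natural one and is correct. Both directions amount to the observation that reversing a blocked \emph{removal} sequence (each clause blocked in the current set) yields a blocked \emph{addition} sequence (each clause blocked with respect to what has been built so far), and vice versa; the self-removal observation together with \cref{thm:blocked-set-decrease} is exactly what makes the translation go through. Your handling of the backward direction---extracting the $\Gamma\setminus\Gamma'$ subsequence and using \cref{thm:blocked-set-decrease} to show $\ker(\Gamma)=\ker(\Gamma')$ separately---is clean.

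Your caution about $\Lambda\cap\Gamma$ is well placed, and in fact the forward direction as literally stated can fail when $\Lambda$ meets $\ker(\Gamma)$: take $\Gamma=\{x,\lneg{x}\}$ (so $\ker(\Gamma)=\Gamma$) and $\Lambda=\{x\}$; then $\Lambda$ is trivially a blocked extension since $\Gamma\cup\Lambda=\Gamma$, but a short case check over all four $\Gamma'\subseteq\Gamma$ shows that no ordering of $\Lambda\cup(\Gamma\setminus\Gamma')$ satisfies the blockedness condition. Your proposed fix of passing to $\Lambda\setminus\Gamma$ does not literally rescue the biconditional for such $\Lambda$, but it does cover every use the paper makes of the lemma (where $\Lambda\cap\Gamma=\varnothing$). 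This is a defect of the statement, not of your proof.
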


This result gives a view of $\GER$ as a version of $\BC$
that allows the \emph{temporary} deletion of clauses from the initial formula
(i.e., clauses can be deleted as long as they are added back later).

In this paper, we study the variants of $\BC$, $\RAT$, $\SBC$, and $\GER$
that disallow the use of new variables.
We say that a proof of a CNF $\Gamma$ is \emph{without new variables}
if all the variables occurring in the proof are in $\var(\Gamma)$.
In the case of $\GER$, this constraint applies to the blocked extension.
We use $\BC^-$, $\RAT^-$, $\SBC^-$, and $\GER^-$ to denote the variants without new variables.

\subsection{Useful facts}
\label{sec:useful-facts}

We conclude this section with a few standalone results
that we will refer back to later.

\begin{lemma}\label{thm:fallback-GER-to-BC}
  For every CNF $\Gamma$ such that $\ker(\Gamma) = \Gamma$, we have
  \begin{equation*}
    \size_{\GER^-}(\Gamma) = \size_{\BC^-}(\Gamma).
  \end{equation*}
\end{lemma}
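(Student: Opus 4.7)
The plan is to prove the two inequalities separately. The easy direction $\size_{\GER^-}(\Gamma) \leq \size_{\BC^-}(\Gamma)$ holds for arbitrary $\Gamma$ (not just those with $\ker(\Gamma) = \Gamma$). Given a $\BC^-$ proof of $\Gamma$, we first rearrange it so that all blocked-clause additions precede any resolution or weakening steps: by \cref{thm:blocked-set-decrease}, a clause that is blocked at its original step remains blocked when the ambient formula is shrunk by deferring the later resolution steps. Let $(E_1, \dots, E_k)$ be the resulting sequence of blocked-clause additions and $\Pi$ the resolution/weakening suffix. Taking $\Lambda = \{E_1, \dots, E_k\}$, the characterization in \cref{thm:blocked-extension-characterization} (applied with $\Gamma' = \Gamma$) identifies $\Lambda$ as a blocked extension for $\Gamma$, so $(\Lambda, \Pi)$ is a $\GER^-$ proof of the same size.

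For the reverse direction, suppose $(\Lambda, \Pi)$ is a $\GER^-$ proof of $\Gamma$. Since $\Lambda$ is a blocked extension we have $\ker(\Gamma \cup \Lambda) = \ker(\Gamma) = \Gamma$, so the maximal blocked sequence for $\Gamma \cup \Lambda$ consists of exactly the clauses of $\Lambda \setminus \Gamma$. Write this sequence as $(D_1, \dots, D_k)$, where $D_i$ is blocked with respect to $(\Gamma \cup \Lambda) \setminus \{D_1, \dots, D_{i-1}\}$. I reverse it by setting $E_i \coloneqq D_{k+1-i}$, so that $\{E_1, \dots, E_i\} = \{D_{k+1-i}, \dots, D_k\}$ and $E_i$ is blocked with respect to $\Gamma \cup \{E_1, \dots, E_i\}$. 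A second application of \cref{thm:blocked-set-decrease} then shows that $E_i$ is still blocked with respect to the smaller formula $\Gamma \cup \{E_1, \dots, E_{i-1}\}$.

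This yields a $\BC^-$ derivation of $\Gamma \cup \Lambda$ obtained by starting from $\Gamma$ and successively adding $E_1, \dots, E_k$, followed by appending $\Pi$ to derive $\bot$. The total size is at most $\abs{\Lambda} + \abs{\Pi}$, and since the original $\GER^-$ proof uses no new variables, every clause in the constructed proof has its variables in $\var(\Gamma)$. This gives $\size_{\BC^-}(\Gamma) \leq \size_{\GER^-}(\Gamma)$, completing the equality. The only conceptual subtlety is recognizing that reversing the maximal blocked sequence for $\Gamma \cup \Lambda$ produces exactly an ordering suitable for $\BC^-$ (rather than for the ``deletion'' view of $\GER^-$); once that is in hand, the anti-monotonicity of blockedness closes the argument with no further technical obstacle.
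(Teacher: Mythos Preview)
Your proof is correct and follows essentially the same approach as the paper: both directions use the kernel condition $\ker(\Gamma\cup\Lambda)=\Gamma$ to identify the maximal blocked sequence for $\Gamma\cup\Lambda$ with the clauses of $\Lambda\setminus\Gamma$, and then reverse that sequence to obtain a valid chain of blocked-clause additions from $\Gamma$. Your write-up is more explicit than the paper's in two places---you spell out why a $\BC^-$ proof yields a $\GER^-$ proof via \cref{thm:blocked-extension-characterization} (the paper just says ``$\GER^-$ is a generalization of $\BC^-$''), and you invoke \cref{thm:blocked-set-decrease} to pass from ``$E_i$ blocked with respect to $\Gamma\cup\{E_1,\dots,E_i\}$'' to ``blocked with respect to $\Gamma\cup\{E_1,\dots,E_{i-1}\}$''---but the underlying argument is the same.
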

\begin{proof}
  Since $\GER^-$ is a generalization of $\BC^-$,
  the inequality $\size_{\GER^-}(\Gamma) \leq \size_{\BC^-}(\Gamma)$ immediately holds for all $\Gamma$.
  For the other direction, suppose that $\ker(\Gamma) = \Gamma$
  and that $\Lambda$ is a blocked extension for $\Gamma$.
  We claim that the clauses in $\Lambda$ can be derived in sequence from $\Gamma$ in $\BC^-$.
  By the definition of a blocked extension, we have $\ker(\Gamma \cup \Lambda) = \ker(\Gamma) = \Gamma$.
  Then, by the definition of a kernel, we can order the clauses in $\Lambda$
  into a maximal blocked sequence $(C_1, \dots, C_m)$ for $\Gamma$.
  Finally, we can derive these clauses from $\Gamma$ in the reverse order $(C_m, \dots, C_1)$
  by a sequence of blocked clause additions.
\end{proof}

\begin{lemma}\label{thm:WLOG-BC-resolution}
  For every CNF $\Gamma$, we have
  \begin{equation*}
    \size_{\BC^-}(\Gamma) \geq \size_\Res(\Gamma \cup \B^-(\Gamma)).
  \end{equation*}
\end{lemma}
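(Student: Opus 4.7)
The plan is to take a minimum-size $\BC^-$ proof of $\Gamma$ and massage it into a resolution proof of $\Gamma \cup \B^-(\Gamma)$ of no greater size. The only nontrivial manipulation is a reordering of proof steps, and the remainder is bookkeeping.

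First I would establish the WLOG reordering foreshadowed in the paragraph preceding \cref{thm:WLOG-BC-resolution}: every $\BC^-$ proof can be rearranged, without changing its length, so that all blocked clause additions precede every resolution and weakening step. The key observation is \cref{thm:blocked-set-decrease}, which implies that if a clause $C$ is blocked with respect to some CNF $\Delta$, then $C$ is blocked with respect to every subset $\Delta' \subseteq \Delta$. Hence a blocked clause addition $C$ immediately preceded by a resolution or weakening step that produces some $R$ can be swapped: the addition of $C$ is still valid once $R$ is removed, and the addition of $R$ is still valid from the formula augmented by $C$ (its two antecedent clauses are still present). Iterating this swap gives a reordered $\BC^-$ proof $(\Gamma, \Gamma \cup \{C_1\}, \ldots, \Gamma \cup \{C_1, \ldots, C_k\}, \ldots, \Gamma_N)$ of the same length $N = \size_{\BC^-}(\Gamma)$, in which the first $k+1$ CNFs correspond to blocked clause additions and the remaining $N - k - 1$ steps are resolution or weakening steps.

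Next I would verify that $\{C_1, \ldots, C_k\} \subseteq \B^-(\Gamma)$. Each $C_i$ is blocked with respect to $\Gamma \cup \{C_1, \ldots, C_{i-1}\}$, so by \cref{thm:blocked-set-decrease} it is also blocked with respect to $\Gamma$ itself; and because the original proof uses no new variables, we also have $\var(C_i) \subseteq \var(\Gamma)$. So each $C_i$ lies in $\B^-(\Gamma)$.

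Finally I would convert the tail of the reordered proof into a resolution proof of $\Gamma \cup \B^-(\Gamma)$. Replace the first CNF $\Gamma \cup \{C_1, \ldots, C_k\}$ of the tail by $\Gamma \cup \B^-(\Gamma)$ and keep every subsequent single-clause addition unchanged; since resolution and weakening steps only require their antecedents to be present, and $\Gamma \cup \{C_1, \ldots, C_k\} \subseteq \Gamma \cup \B^-(\Gamma)$, each step remains valid. The resulting resolution proof has length $N - k \leq N$, yielding $\size_\Res(\Gamma \cup \B^-(\Gamma)) \leq \size_{\BC^-}(\Gamma)$. I do not anticipate any real obstacle; the only care needed is in checking that the reordering swap is valid in both directions, which is immediate from the definitions and from \cref{thm:blocked-set-decrease}.
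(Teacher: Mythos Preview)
Your proposal is correct and follows essentially the same approach as the paper: use \cref{thm:blocked-set-decrease} to push all blocked clause additions to the front, observe that the set $\beta$ of added blocked clauses lies in $\B^-(\Gamma)$, and then view the remaining steps as a resolution proof of $\Gamma \cup \beta \subseteq \Gamma \cup \B^-(\Gamma)$. The paper's proof is simply a terser version of what you wrote, invoking the monotonicity of resolution in place of your explicit replacement of the initial CNF.
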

\begin{proof}
  Let $\Pi$ be a $\BC^-$ proof of $\Gamma$, and let $\beta$ be the set of blocked clauses added in $\Pi$.
  Due to \cref{thm:blocked-set-decrease}, we can rearrange $\Pi$ as a resolution proof of $\Gamma \cup \beta$.
  Then, since adding clauses to a CNF cannot increase the size of its shortest resolution proof,
  and since $\beta \subseteq \B^-(\Gamma)$, the result follows.
\end{proof}

\begin{definition}\label{def:projection}
  The \emph{projection} of a CNF $\Gamma$ onto a literal $p$ is the CNF
  \begin{equation*}
    \proj_p(\Gamma) \coloneqq \{C \setminus \{p\} \mid C \in \Gamma \text{ and } p \in C\}.
  \end{equation*}
\end{definition}

This definition plays a role
in both our (partial) simulation of $\RAT^-$ by $\BC^-$
and our $\GER^-$ lower bounds.
In particular, we use the following fact,
which was already observed by Kullmann~\cite[see][Section~4]{Kul99b}.

\begin{lemma}\label{thm:BC-characterization-projection}
  A clause $C = p \ldor C'$ is a BC for $p$ with respect to a CNF $\Gamma$
  if and only if the partial assignment $\lneg{C'}$ satisfies $\proj_{\lneg{p}}(\Gamma)$.
\end{lemma}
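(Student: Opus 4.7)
The plan is to prove both directions by unpacking the definitions carefully and identifying a single key equivalence that handles both at once: for a clause $D'$ (nontautological by assumption) and a nontautological clause $C'$ disjoint from the variable of $p$, the assignment $\lneg{C'}$ satisfies $D'$ if and only if $C' \cup D'$ is tautological.

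First I would establish this key equivalence. By definition, $\lneg{C'} \models D'$ means there exists some literal $q \in D'$ with $\lneg{C'}(q) = 1$, which by the identification of $\lneg{C'}$ with the set of literals it satisfies is equivalent to $\lneg{q} \in C'$. Since $C'$ and $D'$ are themselves clauses (hence nontautological), the presence of such a complementary pair $q, \lneg{q}$ across them is equivalent to $C' \cup D'$ being tautological.

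With this equivalence in hand, the rest is a direct translation. By the definition of the projection, the elements of $\proj_{\lneg{p}}(\Gamma)$ are exactly the clauses $D'$ such that $\lneg{p} \ldor D' \in \Gamma$ (noting that the $\ldor$ notation is justified since $\Gamma$ contains only nontautological clauses, so such $D'$ has no occurrence of the variable of $p$). Therefore the statement ``$\lneg{C'}$ satisfies $\proj_{\lneg{p}}(\Gamma)$'' unfolds to ``for every clause of the form $\lneg{p} \ldor D'$ in $\Gamma$, $\lneg{C'} \models D'$.'' Applying the key equivalence turns this into ``for every such clause, $C' \cup D'$ is tautological,'' which is exactly the definition of $C = p \ldor C'$ being a BC for $p$ with respect to $\Gamma$.

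There is no real obstacle; the only subtlety is keeping track of which sets are guaranteed nontautological so that the correspondence between ``complementary literal across two sets'' and ``tautological union'' goes through in both directions. The argument is fully reversible, giving both implications simultaneously.
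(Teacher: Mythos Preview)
Your proposal is correct and matches the paper's proof essentially line for line: the paper also reduces the statement to the key equivalence that $C' \cup D$ is tautological iff $\lneg{C'} \cap D \neq \varnothing$ iff $\lneg{C'} \models D$, using that $C'$ and $D$ are each nontautological. The only cosmetic difference is that you spell out the correspondence between elements of $\proj_{\lneg{p}}(\Gamma)$ and clauses of the form $\lneg{p} \ldor D'$ in $\Gamma$ a bit more explicitly.
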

\begin{proof}
  By the definition of a blocked clause,
  a clause $C = p \ldor C'$ is a BC for $p$ with respect to $\Gamma$
  if and only if for all $D \in \proj_{\lneg{p}}(\Gamma)$ the set $C' \cup D$ is tautological.
  Since neither of $C'$ and $D$ is tautological, $C' \cup D$ is tautological
  if and only if $\lneg{C'} \cap D \neq \varnothing$, which is equivalent to $\lneg{C'} \models D$.
\end{proof}

The next result is essentially due to Chang~\cite[Theorem~1]{Cha70}.
Although its original form is slightly weaker,
the exact statement below can be obtained
by a modification of Chang's proof.
We provide its proof in \cref{sec:proof-resolution-simulation-RUP}
for the sake of completeness.

\begin{restatable}{lemma}{resrup}\label{thm:resolution-simulation-RUP}
  For every CNF $\Gamma$ and every clause $C$ such that $\Gamma \vdash_1 C$,
  there exists a resolution derivation $(\Gamma_1, \dots, \Gamma_N)$ with $N \leq \abs{\var(\Gamma)} + 1$
  such that $\Gamma_1 = \Gamma$, $C \in \Gamma_N$, and $\Gamma \cup \{C\} \sqsupseteq \Gamma_N$.
\end{restatable}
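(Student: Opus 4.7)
My plan is to translate the unit propagation refutation of $\Gamma \land \lneg{C}$ into a short resolution derivation of $C$ from $\Gamma$, using the idea that uses of $\lneg{C}$ literals in the unit refutation are ``skipped'' so that the corresponding $C$-literals are carried forward through the chain. First I would extract from the unit propagation the implication-graph data: a sequence $q_1, \dots, q_t$ of derived unit literals with reason clauses $R_i = A_i \ldor q_i \in \Gamma$, where each literal of $A_i$ is the negation of some element of $\lneg{C} \cup \{q_1, \dots, q_{i-1}\}$ (treating $\Gamma$'s unit clauses as initial reasons with empty $A$), together with a final conflict clause $R_{t+1} \in \Gamma$ whose every literal is falsified at termination. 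Since the $q_i$ have pairwise distinct variables all lying in $\var(\Gamma)$, we have $t \leq \abs{\var(\Gamma)}$.

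I would then unwind the chain in reverse. Setting $S_{t+1} = R_{t+1}$, for $i = t, t-1, \dots, 1$ I let $S_i$ be the resolvent of $S_{i+1}$ and $R_i$ on the variable of $q_i$ whenever $\lneg{q_i} \in S_{i+1}$, and otherwise $S_i = S_{i+1}$. By an induction the invariant $S_i \subseteq C \cup \{\lneg{q_j} : j < i\}$ holds, so $S_1 \subseteq C$. To secure the subsumption $\Gamma \cup \{C\} \sqsupseteq \Gamma_N$, I would preface the chain by weakening $R_{t+1}$ to $R_{t+1} \cup C$; this is nontautological because each derived literal has variable outside $\var(C)$, every subsequent resolvent then contains $C$ as a subset and is subsumed by $C$, and the padded clause is subsumed by $R_{t+1} \in \Gamma$.

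The main obstacle is achieving the tight bound $N \leq \abs{\var(\Gamma)} + 1$ without an off-by-one blowup from the initial padding and any final weakening. I would address this with a case analysis on how the unit propagation terminates: when the conflict is caused by a derived literal contradicting an element of $\lneg{C}$, the final $q_t$ already lies in $C$ and the chain ends at $C$ exactly, sparing a bookkeeping step; and treating $\Gamma$'s unit clauses as initial reasons with empty $A_i$ ensures they do not count as separate resolution steps beyond the ones that use them. With these observations each step in the derivation consumes a distinct variable of $\var(\Gamma)$, giving the claimed bound.
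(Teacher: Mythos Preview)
Your approach is correct and takes a genuinely different route from the paper. The paper argues by induction on $\abs{\var(\Gamma)}$: it picks a unit literal $p$ present in $\Gamma \land \lneg{C}$, restricts by setting $p$ true, applies the induction hypothesis to $\Gamma|_\alpha$ to obtain an input-resolution derivation of $C|_\alpha$, and then lifts it back by reinserting $\lneg{p}$ into the affected leaves and inner nodes, followed by at most one extra step (a resolution against $\{p\}$ or an adjustment of the initial weakening). You instead read off the whole unit-propagation trace at once and unwind it as a single trivial-resolution (conflict-analysis) chain, padding with $C$ at the start so that every intermediate clause is subsumed by $C$. Both constructions yield input-resolution derivations of the same shape; yours is the direct, SAT-solver-style version, while the paper's induction makes the tight size bound fall out automatically (one new variable, at most one new step). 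Your counting in the last paragraph is the one place that needs tightening: the assertion that each derived $q_i$ has variable outside $\var(C)$ is stated but not argued, and the remarks about ``sparing a bookkeeping step'' and about $\Gamma$'s unit clauses are vague. The clean fix is to note that any propagated $q_i$ with $\var(q_i)\in\var(C)$ is either already in $\lneg{C}$ (hence may be dropped from the sequence) or lies in $C$ (hence its reason clause $R_i$ is itself fully falsified and serves as the conflict clause, so $q_i$ is never enqueued). This gives $t \le \abs{\var(\Gamma)\setminus\var(C)}$, so $N \le t+2 \le \abs{\var(\Gamma)}+1$ whenever $C\neq\bot$, and for $C=\bot$ the initial weakening is vacuous and $N\le t+1\le \abs{\var(\Gamma)}+1$.
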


The following gives a simple condition under which we regain monotonicity.

\begin{lemma}[{\cite[Lemma~1.20]{BT21}}]\label{thm:monotonicity}
  Let $\Gamma$ and $\Delta$ be CNFs such that $\Gamma \subseteq \Delta$ and $\Gamma \sqsupseteq \Delta$.
  If a clause is a BC, RAT, or SBC with respect to $\Gamma$,
  then it is a BC, RAT, or SBC with respect to $\Delta$.
\end{lemma}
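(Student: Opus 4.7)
The plan is to prove the three cases (BC, RAT, SBC) in parallel by a common case analysis. In each case I would fix a witness for $C$'s redundancy with respect to $\Gamma$ — a literal $p \in C$ for BC and RAT, or a nonempty subset $L \subseteq C$ for SBC — and show the same witness still works with respect to $\Delta$. The central structural fact I would rely on is that by $\Gamma \sqsupseteq \Delta$ every clause $D \in \Delta$ has a subsumer $E \in \Gamma$ with $E \subseteq D$, and by $\Gamma \subseteq \Delta$ this $E$ is itself a clause of $\Delta$.

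First I would pick an arbitrary $D \in \Delta$ that could trigger failure of the criterion (for BC and RAT, one with $\lneg{p} \in D$; for SBC, one with $D \cap \lneg{L} \neq \varnothing$ and $D \cap L = \varnothing$), select a subsumer $E \in \Gamma$, and split on whether $E$ itself triggers the corresponding test with respect to $\Gamma$. In the affirmative subcase the redundancy criterion for $\Gamma$ applies directly to $E$, and the conclusion transfers to $D$ by monotonicity: tautologicality is preserved under supersets (for BC and SBC), and $\Gamma \vdash_1 A$ implies $\Delta \vdash_1 B$ whenever $A \subseteq B$ (for RAT, using also $\Gamma \subseteq \Delta$). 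In the opposite subcase $E$ avoids the trigger literals, so $E$ already lies inside the remainder (for instance $E \subseteq D \setminus \{\lneg{p}\}$), and I would aim to close by exploiting $E \in \Delta$ directly.

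For RAT this second subcase is immediate: the assignment used to test $\Delta \vdash_1 C' \cup (D \setminus \{\lneg{p}\})$ falsifies every literal of $E$ because $\lneg{E}$ is contained in that assignment, so unit propagation from $E \in \Delta$ produces $\bot$ at once. The main obstacle is the analogous subcase for BC and SBC, where the criterion demands an \emph{actual} tautology rather than a unit-propagation derivation. I would handle it by first attempting a careful choice of subsumer — whenever some subsumer of $D$ in $\Gamma$ does contain the trigger literals, pick that one and reduce to the affirmative subcase — and then analyzing the remaining situation, in which no subsumer of $D$ in $\Gamma$ contains a trigger, through a direct inspection of the literals in $D \setminus E$ against the specific BC or SBC witness of $C$ with respect to $\Gamma$. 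I expect this last piece to be the bulk of the work.
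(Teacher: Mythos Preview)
Your RAT argument is correct: in both subcases the conclusion $\Delta\vdash_1 C'\cup D'$ follows, once from the RAT hypothesis applied to the subsumer $E$ together with monotonicity of $\vdash_1$, and once because $E\subseteq D'$ with $E\in\Delta$ forces $\bot$ under the test assignment. The paper does not give its own proof of this lemma --- it merely cites \cite[Lemma~1.20]{BT21} --- and the two places where the paper invokes it use only the RAT case.

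For BC and SBC, however, the ``second subcase'' you flag is not the bulk of the work: it is impossible, because the statement as written here is \emph{false} for those two criteria. Take $\Gamma=\{x,\ p\}$ and $\Delta=\{x,\ p,\ x\lor\lneg p\}$: then $\Gamma\subseteq\Delta$ and $\Gamma\sqsupseteq\Delta$ (the unit clause $x$ subsumes $x\lor\lneg p$). The clause $C=p$ is vacuously a BC and an SBC for $p$ with respect to $\Gamma$, since $\lneg p$ does not occur in $\Gamma$; but with respect to $\Delta$ the clause $x\lor\lneg p$ produces the non-tautological test set $\{x\}$, and $p$ is the only literal in $C$, so no alternative witness exists. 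Hence no ``direct inspection'' can close these cases. Your own analysis already reveals why: propagation-based criteria (RAT, and likewise SPR and PR) close the second subcase by unit-propagating from $E\in\Delta$, whereas the tautology-based criteria BC and SBC have no such escape.
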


\section{Partial simulation of \psnnv{RAT} by \psnnv{BC}}
\label{sec:partial-simulation-RAT-by-BC}

We will show how to convert a RAT addition
into a sequence of BC additions and resolution steps.
Assume that all the BCs and RATs in this section are without new variables.

\begin{definition}
  The \emph{nonblocking CNF} of a clause $C$ for a literal $p \in C$
  with respect to a CNF $\Gamma$ is
  \begin{equation*}
    \NB^\Gamma_p(C) \coloneqq \left\{D \setminus C \relmiddle|
      D \in \proj_{\lneg{p}}(\Gamma)
      \text{ and } (C \setminus \{p\}) \cup D \text{ is nontautological}\right\}.
  \end{equation*}
\end{definition}

As a consequence of the above definition,
we have $\var\mleft(\NB^\Gamma_p(C)\mright) \cap \var(C) = \varnothing$.

We say an assignment $\alpha$ \emph{minimally satisfies} a CNF $\Gamma$, denoted $\alpha \minsat \Gamma$,
if $\alpha$ satisfies $\Gamma$ while no proper subset $\alpha' \subsetneq \alpha$ satisfies $\Gamma$.
We let
\begin{equation*}
  \mu(\Gamma) \coloneqq \left\{E \in \bC \relmiddle| \lneg{E} \minsat \Gamma\right\}.
\end{equation*}
Since two different minimally satisfying assignments cannot contain one another,
no clause $E \in \mu(\Gamma)$ is contained in a different clause $E' \in \mu(\Gamma)$.

\begin{example}
  Let $\Gamma = \{x,\ y \lor \lneg{z}\}$.
  This CNF has two minimally satisfying assignments: $\{x,\, y\}$ and $\{x,\, \lneg{z}\}$.
  We thus have $\mu(\Gamma) = \{\lneg{x} \lor \lneg{y},\ \lneg{x} \lor z\}$.
\end{example}

Noting that $\Gamma \cup \mu(\Gamma)$ is unsatisfiable for every CNF $\Gamma$, we let
\begin{equation*}
  s(\Gamma) \coloneqq \abs{\mu(\Gamma)} + \size_\Res(\Gamma \cup \mu(\Gamma)).
\end{equation*}
When $\Gamma$ is unsatisfiable, we simply have $s(\Gamma) = \size_\Res(\Gamma)$.

\begin{theorem}\label{thm:simulation}
  Let $C = p \ldor C'$ be a RAT for $p$ with respect to a CNF $\Gamma$.
  There exists a $\BC^-$ derivation $(\Gamma_1, \dots, \Gamma_N)$
  such that $\Gamma_1 = \Gamma$, $C \in \Gamma_N$, and $\Gamma \cup \{C\} \sqsupseteq \Gamma_N$,
  where, letting $\Sigma = \NB^\Gamma_p(C)$ and letting $n = \abs{\var(\Gamma)}$,
  we have
  \begin{equation*}
    N \leq \abs{\Sigma} (n + 1) + s(\Sigma).
  \end{equation*}
\end{theorem}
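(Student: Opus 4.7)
The plan is to convert the RAT addition of $C = p \ldor C'$ into three blocks of steps, exploiting that $\var(\Sigma) \cap \var(C) = \varnothing$ by construction of $\NB^\Gamma_p(C)$. First, I would add the ``guarded'' clauses $C \lor E$ as blocked clauses for $p$ with respect to $\Gamma$, for every $E \in \mu(\Sigma)$. Second, I would derive the guarded resolvents $C \lor F$ by resolution, for every $F \in \Sigma$. Third, with all clauses $C \lor H$ for $H \in \Sigma \cup \mu(\Sigma)$ in hand, I would produce $C$ by executing a shortest resolution refutation of $\Sigma \cup \mu(\Sigma)$, with $C$ prepended to every clause.

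For the blocked additions, fix $D = \lneg{p} \ldor D' \in \Gamma$; the resolvent of $C \lor E$ with $D$ on $p$ is $C' \cup E \cup D'$. If $C' \cup D'$ is already tautological we are done. Otherwise $F \coloneqq D' \setminus C'$ belongs to $\Sigma$, and since $\lneg{E}$ minimally satisfies $\Sigma$, there must be a literal $q \in F \subseteq D'$ with $\lneg{q} \in E$, which tautologizes the resolvent. Performing these additions first, while the CNF is still exactly $\Gamma$, keeps the BC verification clean. For the resolution derivations, the RAT certificate $\Gamma \vdash_1 C' \lor D'$ together with $C \lor F \supseteq C' \lor D'$ gives $\Gamma \vdash_1 C \lor F$, and I would invoke \cref{thm:resolution-simulation-RUP} on the target $C \lor F$ (rather than on $C' \lor D'$) to obtain a derivation of size at most $n + 1$ whose intermediate clauses are subsumed by $\Gamma \cup \{C \lor F\}$; since $C \subseteq C \lor F$, these intermediates are in fact subsumed by $\Gamma \cup \{C\}$.

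For the final lifting, the disjointness $\var(C) \cap \var(\Sigma \cup \mu(\Sigma)) = \varnothing$ lets each resolution or weakening step lift unchanged, and the terminal $\bot$ lifts to $C$. Every added clause contains $C$ and is thus subsumed by $C$. Summing additions across the three blocks gives $\abs{\mu(\Sigma)} + \abs{\Sigma} n + (\size_\Res(\Sigma \cup \mu(\Sigma)) - 1)$, which after including the initial state $\Gamma_1 = \Gamma$ yields total size bounded by $\abs{\Sigma}(n + 1) + s(\Sigma)$.

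The main technical subtlety is maintaining the subsumption property $\Gamma \cup \{C\} \sqsupseteq \Gamma_N$. A naive simulation that derives $C' \lor D'$ by Chang's theorem and then weakens it to $C \lor D'$ would leave intermediate clauses subsumed only by $C' \lor D'$, a clause that need not contain $p$ and hence need not be subsumed by $C$. Targeting Chang's directly at $C \lor F$ bypasses this issue, and is the technical move that makes the simulation go through without additional modifications.
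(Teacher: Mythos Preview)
Your approach is correct and uses the same three ingredients as the paper's proof: deriving $C \lor F$ for each $F \in \Sigma$ via \cref{thm:resolution-simulation-RUP}, adding $C \lor E$ for each $E \in \mu(\Sigma)$ as a blocked clause for $p$, and lifting a shortest resolution refutation of $\Sigma \cup \mu(\Sigma)$ by $C$. The paper, however, orders these differently---it performs the resolution derivations \emph{first} and the blocked additions \emph{second}---and it splits into two cases according to whether $\Sigma$ is satisfiable. Your ordering lets the blocked-clause checks run against a CNF close to $\Gamma$, and your uniform treatment absorbs the unsatisfiable case automatically (since then $\mu(\Sigma) = \varnothing$), which is a modest simplification. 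One small point you should patch: after the first addition the CNF is no longer ``exactly $\Gamma$,'' so you still owe the observation that each added $C \lor E$ contains $p$ but not $\lneg{p}$ (because $\var(E) \subseteq \var(\Sigma)$ is disjoint from $\var(C)$), ensuring later blocked additions for $p$ remain valid; the paper makes the analogous argument explicit when it shows $\proj_{\lneg{p}}(\Psi) = \proj_{\lneg{p}}(\Gamma)$.
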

\begin{proof}
  Since $C$ is a RAT for $p$, for all $D \in \NB^\Gamma_p(C)$ we have $\Gamma \vdash_1 C' \ldor D$,
  which implies in particular that $\Gamma \vdash_1 C \ldor D$.
  Then, using \cref{thm:resolution-simulation-RUP}, for all $D \in \NB^\Gamma_p(C)$
  we derive $C \ldor D$ from $\Gamma$ in resolution using at most $n + 1$ steps.
  More formally, we derive $\Gamma' \cup \left(C \ldor \NB^\Gamma_p(C)\right)$ from $\Gamma$,
  where $\Gamma'$ is the set of intermediate clauses,
  guaranteed by \cref{thm:resolution-simulation-RUP} to satisfy $\{C\} \sqsupseteq \Gamma'$.

  We proceed differently depending on the satisfiability of $\NB^\Gamma_p(C)$.
  \begin{description}
  \item[Case 1] (\textit{$\NB^\Gamma_p(C)$ is unsatisfiable.})
    There exists a resolution proof
    \begin{equation*}
      \Pi = (\Delta_1, \dots, \Delta_m),
    \end{equation*}
    where $\Delta_1 = \NB^\Gamma_p(C)$ and $\bot \in \Delta_m$.
    Suppose $\Pi$ is a minimum-size proof, so it does not use weakening.
    Consider the sequence
    \begin{equation*}
      \Pi' = (C \lor \Delta_1, \dots, C \lor \Delta_m).
    \end{equation*}
    This sequence is a valid resolution derivation of $C$ from $C \ldor \NB^\Gamma_p(C)$,
    seen as follows:
    \begin{itemize}
    \item By the definition of $\NB^\Gamma_p(C)$, it has no variables in common with $C$.
      Since we assumed that $\Pi$ does not use weakening,
      no subsequent CNF in $\Pi$ has any variables in common with $C$ either.

      Let $i \in [m - 1]$.
      The sequence $\Pi$ is a resolution proof, so we have $\Delta_{i + 1} = \Delta_i \cup \{E\}$,
      where $E$ is a resolvent of some $F, G \in \Delta_i$.
      Since $\Delta_i$ has no variables in common with $C$,
      it is not possible to resolve $F$ and $G$ on a variable of $C$.
      Then the clause $C \ldor E$ is a resolvent of $C \ldor F$ and $C \ldor G$,
      which are in $C \lor \Delta_{i}$.
      This proves by induction that $\Pi'$ is a valid resolution derivation.
    \item Finally, since $\bot \in \Delta_m$, we have $C \in (C \lor \Delta_m)$.
    \end{itemize}

    Thus, resolution can derive $C$ from $\Gamma$
    in at most $\abs*{\NB^\Gamma_p(C)} (n + 1) + \size_\Res\mleft(\NB^\Gamma_p(C)\mright)$ steps.
  \item[Case 2] (\textit{$\NB^\Gamma_p(C)$ is satisfiable.})
    Let $\Psi = \Gamma \cup \Gamma' \cup \left(C \ldor \NB^\Gamma_p(C)\right)$ (i.e., the current CNF).
    Since $C \sqsupseteq (\Psi \setminus \Gamma)$, the literal $\lneg{p}$ does not occur in $\Psi \setminus \Gamma$.
    Then we have $\proj_{\lneg{p}}(\Psi) = \proj_{\lneg{p}}(\Gamma)$,
    and, consequently, $\NB^\Psi_p(C) = \NB^\Gamma_p(C)$.

    Let $E$ be a clause such that $\var(E) \cap \var(C) = \varnothing$.
    By \cref{thm:BC-characterization-projection},
    the clause $p \ldor C' \ldor E$ is a BC for $p$ with respect to $\Psi$
    if and only if the partial assignment $\lneg{C' \ldor E}$ satisfies $\proj_{\lneg{p}}(\Psi)$.
    By the definition of a nonblocking CNF,
    $\lneg{C'}$ already satisfies $\proj_{\lneg{p}}(\Psi) \setminus \NB^\Psi_p(C)$,
    so $p \ldor C' \ldor E$ is a BC for $p$ with respect to $\Psi$
    if and only if $\lneg{E}$ satisfies $\NB^\Psi_p(C)$.

    Let us write $\mu$ for $\mu\mleft(\NB^\Psi_p(C)\mright)$.
    All clauses in $C \ldor \mu$ are blocked for $p$ with respect to $\Psi$,
    and the addition of each such clause of the form $C \ldor E$
    rules out, for $C \ldor \NB^\Psi_p(C)$,
    every partial assignment $\alpha$ containing $\lneg{C \cup E}$.
    Then we have
    \begin{equation*}
      \left(C \ldor \NB^\Psi_p(C)\right) \cup (C \ldor \mu) \models C,
    \end{equation*}
    where every partial assignment containing $\lneg{C}$ falsifies the left-hand side
    (i.e., $\NB^\Psi_p(C) \cup \mu$ is unsatisfiable).
    Also, no clause $E \in \mu$ contains $\lneg{p}$
    and no clause $E \in \mu$ is a subset of a different clause $E' \in \mu$.
    This implies in particular that, for every subset $\mu' \subseteq \mu$
    and for every clause $E \in \mu \setminus \mu'$,
    if $\lneg{E} \models \NB^\Psi_p(C)$,
    then $\lneg{E} \models \NB^\Psi_p(C) \cup \mu'$.
    We thus derive $C \ldor \mu$ from $\Psi$ by a sequence of blocked clause additions.

    As in the previous case, attaching $C$ to a resolution proof of $\NB^\Psi_p(C) \cup \mu$
    gives a resolution derivation of $C$,
    so we derive $C$ from $\left(C \ldor \NB^\Psi_p(C)\right) \cup (C \ldor \mu)$ in resolution.

    In the end, $\BC^-$ can derive $C$ from $\Gamma$ in at most
    \begin{equation*}
      \abs*{\NB^\Gamma_p(C)} (n + 1) + \abs{\mu} + \size_\Res\mleft(\NB^\Gamma_p(C) \cup \mu\mright)
    \end{equation*}
    steps. \qedhere
  \end{description}
\end{proof}

Given a $\RAT^-$ proof, we can apply the above theorem
to recursively replace the earliest RAT addition in the proof by a $\BC^-$ derivation.
The intermediate clauses in the derivation replacing
the addition of a RAT $C$ are all subsumed by $C$,
which ensures by \cref{thm:monotonicity}
that the validity of later RAT additions are preserved.
We can thus translate an entire $\RAT^-$ proof to a $\BC^-$ proof.

In the above simulation, when $\NB^\Gamma_p(C)$ is unsatisfiable,
we do not use any blocked clause additions.
By \cref{thm:BC-characterization-projection},
no blocked clause for $p$ exists,
and since the RAT addition by itself only gives useful information
about the clauses containing $\lneg{p}$,
a simulation where we add a clause that is blocked for a different literal
needs to be more sophisticated.
In particular, such a simulation is unlikely to be \emph{local}
in the sense of the output consisting of a sequence of derivations
that each simulate a single step in the input.
(Most simulations in proof complexity are local.)
Assuming that the above simulation is the best possible,
if every RAT addition in a $\RAT^-$ proof $\Pi$ has an unsatisfiable nonblocking CNF,
then $\BC^-$ essentially falls back to
refuting the nonblocking CNFs for locally simulating $\Pi$.
This observation hints at the transformation
in~\cref{eq:unsatisfiable-projection-transformation} for separating $\RAT^-$ from $\GER^-$.

\section{Incomparability of \psnnv{RAT} and \psnnv{GER}}
\label{sec:incomparability-RAT-and-GER}

We now show that $\RAT^-$ is exponentially separated from $\GER^-$ and vice versa,
which also exponentially separates both systems from $\BC^-$.
For both directions, we follow a strategy
similar at a high level to the one that Kullmann~\cite[Lemma~8.4]{Kul99b} used
to prove an exponential separation of $\GER^-$ from resolution.

Let $P$ and $Q$ be proof systems (without new variables) that simulate $\BC^-$.
To separate $P$ from $Q$,
we take a sequence $\langle \Gamma \rangle$ of CNFs separating $\ER$ from $Q$
and we incorporate extension variables into the formulas
in a way that allows $P$ to simulate the $\ER$ proof
while preventing $Q$ from achieving any speedup.
This strategy is made possible by the fact
that $\BC^-$ effectively simulates $\ER$.
See also the discussion by Buss and Thapen~\cite[Section~2.2]{BT21}.

From this point on, given a CNF $\Gamma$,
we use $(\Lambda^*, \Pi^*)$ to denote a minimum-size $\ER$ proof of $\Gamma$,
where $\Lambda^*$ is the union
of a sequence of $t(\Gamma) \coloneqq \abs{\Lambda^*}/3$ sets of extension clauses
such that the $i$th set $\lambda_i$ is of the form
\begin{equation*}
  \{\lneg{x_i} \lor p_i,\ \lneg{x_i} \lor q_i,\ x_i \lor \lneg{p_i} \lor \lneg{q_i}\}.
\end{equation*}
Thus, we implicitly reserve $\left\{x_1, \dots, x_{t(\Gamma)}\right\}$
as the set of extension variables used in $\Lambda^*$.
We assume without loss of generality
that the variables of $p_i$ and $q_i$ are
in $\var(\Gamma) \cup \{x_1, \dots, x_{i-1}\}$ for all $i \in [t(\Gamma)]$.

\subsection{Exponential separation of \psnnv{RAT} from \psnnv{GER}}
\label{sec:exponential-separation-RAT-from-GER}

Let $\Gamma$ be a CNF and $(\Lambda^*, \Pi^*)$ be a minimum-size $\ER$ proof of $\Gamma$ as described above.
Consider the transformation
\begin{equation}\label{eq:unsatisfiable-projection-transformation}
  \cG(\Gamma) \coloneqq \Gamma \cup \bigcup_{i = 1}^{t(\Gamma)} \bigl[(x_i \lor \Gamma) \cup (\lneg{x_i} \lor \Gamma)\bigr],
\end{equation}
where $x_1, \dots, x_{t(\Gamma)}$ are the extension variables used in $\Lambda^*$.
When $\Gamma$ is unsatisfiable,
each extension variable above is ``locked'' behind the projection $\Gamma$,
which $\RAT^-$ can overcome but $\BC^-$ cannot.

\begin{lemma}\label{thm:projection-easy-for-RAT}
  For every CNF $\Gamma$, we have
  \begin{equation*}
    \size_{\RAT^-}(\cG(\Gamma)) \leq \size_{\ER}(\Gamma).
  \end{equation*}
\end{lemma}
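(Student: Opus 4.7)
The plan is to use the minimum-size extended resolution proof $(\Lambda^*, \Pi^*)$ of $\Gamma$ directly as a template: first derive all of the clauses in $\Lambda^*$ from $\cG(\Gamma)$ as RATs (exploiting the fact that the variables $x_1, \dots, x_{t(\Gamma)}$ already occur in $\cG(\Gamma)$ and therefore are not ``new''), then run the resolution proof $\Pi^*$ on top.

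The key observation is that in $\cG(\Gamma)$, the only clauses in which $x_i$ or $\lneg{x_i}$ occurs are $x_i \lor C$ and $\lneg{x_i} \lor C$ for $C \in \Gamma$. I would process the extension clauses in the order they appear in $\Lambda^*$. By the assumption that $p_j$ and $q_j$ involve only variables in $\var(\Gamma) \cup \{x_1, \dots, x_{j-1}\}$, at the moment we are about to add the three clauses forming $\lambda_i$, the only clauses containing $x_i$ or $\lneg{x_i}$ are still those inherited from $\cG(\Gamma)$, plus whatever clauses of $\lambda_i$ we have already added. For each of the three extension clauses I would check the RAT condition as follows: (i) For $\lneg{x_i} \lor p_i$ taken as a RAT on $\lneg{x_i}$, the clauses of the form $x_i \ldor D'$ in the current formula are exactly $x_i \lor C$ with $C \in \Gamma$, so we need the current formula to unit-propagate $\{p_i\} \cup C$, which is immediate because $C$ itself is present. (ii) The clause $\lneg{x_i} \lor q_i$ is handled identically. (iii) For $x_i \lor \lneg{p_i} \lor \lneg{q_i}$ taken as a RAT on $x_i$, the clauses containing $\lneg{x_i}$ are $\lneg{x_i} \lor C$ for $C \in \Gamma$ together with the two just-added $\lneg{x_i} \lor p_i$ and $\lneg{x_i} \lor q_i$; in the first subcase $\{\lneg{p_i}, \lneg{q_i}\} \cup C$ is implied because $C$ is in the formula, and in the other two subcases the clauses $\{\lneg{p_i}, \lneg{q_i}, p_i\}$ and $\{\lneg{p_i}, \lneg{q_i}, q_i\}$ are tautological and therefore trivially implied.

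Once all of $\Lambda^*$ has been added, the current CNF is $\cG(\Gamma) \cup \Lambda^* \supseteq \Gamma \cup \Lambda^*$. I would then append the resolution derivation $\Pi^*$ unchanged; it remains a valid resolution derivation because adding extra clauses to the starting CNF never invalidates resolution or weakening steps, and it still ends in a CNF containing $\bot$. Counting, the resulting $\RAT^-$ proof has one CNF for the starting formula, one additional CNF per clause of $\Lambda^*$, and $|\Pi^*| - 1$ further CNFs from the resolution steps, for a total of $|\Lambda^*| + |\Pi^*| = \size_{\ER}(\Gamma)$.

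The only nontrivial bookkeeping is the dependence on ordering: RAT additions are not monotone, so one must verify that each RAT is valid at the exact moment it is added. The argument above handles this via the prefix-closedness of the indexing on $x_1, \dots, x_{t(\Gamma)}$ and a case split by which of the three clauses of $\lambda_i$ has already been added. No other obstacle arises, since the construction $\cG(\Gamma)$ is engineered precisely so that every ``relevant'' clause for the RAT check on an extension variable is a weakening of a clause already present in $\Gamma$.
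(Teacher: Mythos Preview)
Your proposal is correct and follows essentially the same approach as the paper: derive the extension clauses $\Lambda^*$ from $\cG(\Gamma)$ in order as RATs on the literal $\lneg{x_i}$ or $x_i$, using that the only prior occurrences of $x_i$ are of the form $x_i \lor C$ or $\lneg{x_i} \lor C$ with $C \in \Gamma$ already present, and then append $\Pi^*$. The paper's case analysis for the three clauses of each $\lambda_i$ and its size count $\abs{\Lambda^*} + \abs{\Pi^*}$ are identical to yours.
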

\begin{proof}
  We will show that the minimum-size $\ER$ proof $(\Lambda^*, \Pi^*)$ of $\Gamma$
  directly gives a $\RAT^-$ proof of $\cG(\Gamma)$ of the same size.

  We write $t$ for $t(\Gamma)$.
  Let $(\lambda_1, \dots, \lambda_t)$
  be the sequence of $t$ sets of extension clauses that make up $\Lambda^*$.
  Consider an arbitrary $i \in [t]$, and suppose that we have derived
  $\bigcup_{j = 1}^{i - 1} \lambda_j$ from $\cG(\Gamma)$ by a sequence of RAT additions,
  so the current CNF is $\Delta = \cG(\Gamma) \cup \bigcup_{j = 1}^{i - 1} \lambda_j$.
  We will introduce the clauses in
  \begin{equation*}
    \lambda_i = \{\lneg{x_i} \lor p_i,\ \lneg{x_i} \lor q_i,\ x_i \lor \lneg{p_i} \lor \lneg{q_i}\}
  \end{equation*}
  by a sequence of RAT additions.
  Note that, since $\lambda_i$ is a set of extension clauses for $\Gamma \cup \bigcup_{j = 1}^{i - 1} \lambda_j$,
  so far the variable $x_i$ occurs only in $\cG(\Gamma) \setminus \Gamma$.
  \begin{enumerate}
  \item The clause $\lneg{x_i} \lor p_i$ is a RAT for $\lneg{x_i}$ with respect to $\Delta$
    because all earlier occurrences of $x_i$
    are clauses of the form $x_i \lor D$, where $D \in \Gamma$.
    We thus require $\Delta \vdash_1 \{p_i\} \cup D$ for all $D \in \Gamma$.
    This is indeed the case since we actually have $D \in \Delta$
    by the construction of $\cG(\Gamma)$,
    which implies $\Delta \vdash_1 \{p_i\} \cup D$.
  \item The clause $\lneg{x_i} \lor q_i$ is similarly a RAT for $\lneg{x_i}$
    with respect to $\Delta \cup \{\lneg{x_i} \lor p_i\}$.
  \item The clause $x_i \lor \lneg{p_i} \lor \lneg{q_i}$
    is similarly a RAT for $x_i$ with respect to $\Delta$.
    Moreover, it is a BC for $x_i$ with respect to
    $\{\lneg{x_i} \lor p_i,\ \lneg{x_i} \lor q_i\}$
    since $\{p_i,\, \lneg{p_i},\, \lneg{q_i}\}$
    and $\{q_i,\, \lneg{p_i},\, \lneg{q_i}\}$ are both tautological.
    As a result, $x_i \lor \lneg{p_i} \lor \lneg{q_i}$ is a RAT with respect to
    $\Delta \cup \{\lneg{x_i} \lor p_i,\ \lneg{x_i} \lor q_i\}$.
  \end{enumerate}
  It follows by induction that we can derive $\Lambda^*$ from $\cG(\Gamma)$ in $\RAT^-$.
  Since $\Pi^*$ is a resolution proof of $\Gamma \cup \Lambda^*$,
  and since $\cG(\Gamma)$ contains $\Gamma$,
  we also have a resolution proof of $\cG(\Gamma) \cup \Lambda^*$.
  Thus, we have a $\RAT^-$ proof of $\cG(\Gamma)$
  of size $\abs{\Lambda^*} + \abs{\Pi^*} = \size_{\ER}(\Gamma)$.
\end{proof}

As a consequence of the above,
if a sequence $\langle \Gamma \rangle$ of CNFs is easy for $\ER$,
then, independent of whether $\langle \Gamma \rangle$ is easy or hard for $\RAT^-$,
the sequence $\cG(\langle \Gamma \rangle) \coloneqq (\cG(\Gamma_1), \cG(\Gamma_2), \dots)$ is easy for $\RAT^-$.
In contrast, the following result implies
that the extension variables added by $\cG$ are of no use to $\BC^-$.

\begin{lemma}\label{thm:projection-hard-for-BC}
  For every CNF $\Gamma$, we have
  \begin{equation*}
    \size_{\BC^-}(\cG(\Gamma)) \geq \size_\Res(\Gamma \cup \B^-(\Gamma)).
  \end{equation*}
\end{lemma}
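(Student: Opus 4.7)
The plan is to chain three inequalities: first apply \cref{thm:WLOG-BC-resolution} to pass from $\BC^-$ to resolution, then restrict the extension variables, and finally argue that the restricted clause set is contained in $\Gamma \cup \B^-(\Gamma)$. If $\Gamma$ is satisfiable then so is $\cG(\Gamma)$ (extend any model of $\Gamma$ by setting every $x_i$ to $1$), in which case both sides of the inequality are $\infty$; so from here on assume $\Gamma$ is unsatisfiable.

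\cref{thm:WLOG-BC-resolution} gives $\size_{\BC^-}(\cG(\Gamma)) \geq \size_\Res(\cG(\Gamma) \cup \B^-(\cG(\Gamma)))$. Let $\alpha$ be the total assignment on $\{x_1, \dots, x_{t(\Gamma)}\}$ defined by $\alpha(x_i) = 1$ for all $i$. By \cref{thm:resolution-under-restrictions}, restricting under $\alpha$ does not increase the resolution size; and since no $x_i$ appears in $\Gamma$ while every clause of $\cG(\Gamma) \setminus \Gamma$ has the form $x_i \lor D$ or $\lneg{x_i} \lor D$ with $D \in \Gamma$, one checks $\cG(\Gamma)|_\alpha = \Gamma$. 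It therefore suffices to show $\B^-(\cG(\Gamma))|_\alpha \subseteq \B^-(\Gamma)$, because then $(\cG(\Gamma) \cup \B^-(\cG(\Gamma)))|_\alpha \subseteq \Gamma \cup \B^-(\Gamma)$ and enlarging the axiom set cannot increase the resolution size.

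The crux is ruling out blocked clauses for extension-variable literals. Suppose $C \in \B^-(\cG(\Gamma))$ is blocked for $x_i$, so $x_i \in C$ and, for each $E \in \Gamma$, the resolvent $(C \setminus \{x_i\}) \cup E$ is tautological. A complementary pair cannot involve any $x_j$, since both literals would have to come from the nontautological clause $C$; so the pair lies in $\var(\Gamma)$ and, as $E$ and $C|_{\var(\Gamma)}$ are individually nontautological, splits between them. Collecting one such pair per $E \in \Gamma$ yields a partial assignment on $\var\mleft(C|_{\var(\Gamma)}\mright)$ that satisfies every clause of $\Gamma$, contradicting unsatisfiability; the case of $\lneg{x_i}$ is symmetric. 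This is the only place where unsatisfiability of $\Gamma$ is used and is the main obstacle, though it amounts to a routine case split.

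It follows that every $C \in \B^-(\cG(\Gamma))$ is blocked for some $p \in \var(\Gamma)$. Applying the same complementary-pair analysis to a resolvent of $C$ on $p$ with a clause $D \in \Gamma \subseteq \cG(\Gamma)$ shows that the tautology lies in $\var(\Gamma)$ and splits between $C|_{\var(\Gamma)}$ and $D$, so $C|_{\var(\Gamma)}$ is itself a BC for $p$ with respect to $\Gamma$; hence $C|_{\var(\Gamma)} \in \B^-(\Gamma)$. Finally, if $\alpha$ does not satisfy $C$ then no $x_j$ occurs positively in $C$, so $C|_\alpha$ is obtained from $C$ by deleting the literals $\lneg{x_j}$ and equals $C|_{\var(\Gamma)}$. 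This gives $\B^-(\cG(\Gamma))|_\alpha \subseteq \B^-(\Gamma)$, closing the chain.
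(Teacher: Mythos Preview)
Your proof is correct and follows essentially the same route as the paper's: apply \cref{thm:WLOG-BC-resolution}, rule out blocked clauses on extension-variable literals via the unsatisfiability of $\Gamma$, then restrict by $\alpha$ setting all $x_i$ to $1$ and land inside $\Gamma \cup \B^-(\Gamma)$. The only difference is that your complementary-pair analysis is carried out by hand, whereas the paper invokes \cref{thm:BC-characterization-projection} (which packages exactly that argument) to conclude directly that a clause blocked for $x_i$ would yield a satisfying assignment of $\proj_{\lneg{x_i}}(\cG(\Gamma)) = \Gamma$.
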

\begin{proof}
  When $\Gamma$ is satisfiable, the inequality holds trivially,
  so suppose that $\Gamma$ is unsatisfiable.

  Applying \cref{thm:WLOG-BC-resolution} to $\cG(\Gamma)$, we have
  \begin{equation}\label{eq:projection-BC-WLOG-lower-bound}
    \size_{\BC^-}(\cG(\Gamma)) \geq \size_\Res\mleft(\cG(\Gamma) \cup \B^-(\cG(\Gamma))\mright).
  \end{equation}

  We claim that no clause in $\B^-(\cG(\Gamma))$
  is blocked for a literal of any of the variables
  in $X = \left\{x_1, \dots, x_{t(\Gamma)}\right\}$.
  To see this, consider a clause $C$ of the form $x \ldor C'$, where $x \in X$.
  If $C$ is blocked for $x$ with respect to $\cG(\Gamma)$,
  then $\lneg{C'}$ is a satisfying assignment to $\proj_{\lneg{x}}(\cG(\Gamma)) = \Gamma$
  by \cref{thm:BC-characterization-projection}.
  Since $\Gamma$ is unsatisfiable, no such assignment exists.
  Therefore, $C$ cannot be blocked for $x$, which leaves us with
  \begin{equation}\label{eq:projection-BC-no-blocked-for-extension}
    \B^-(\cG(\Gamma)) = \B^-_{\var(\Gamma)}(\cG(\Gamma)).
  \end{equation}

  Furthermore, since $\Gamma \subseteq \cG(\Gamma)$,
  every clause in $\B^-_{\var(\Gamma)}(\cG(\Gamma))$ has to be blocked
  in particular with respect to $\Gamma$.
  This requires $\B^-_{\var(\Gamma)}(\cG(\Gamma))$ to consist of clauses of the form $C \ldor D$,
  where $C \in \B^-(\Gamma)$ and $\var(D) \subseteq X$ (with $D$ possibly empty).
  In light of this, consider a partial assignment $\alpha$ such that
  \begin{equation*}
    \alpha(z) =
    \begin{cases}
      1 & \text{if } z \in X \\
      \text{undefined} & \text{otherwise}.
    \end{cases}
  \end{equation*}
  It is straightforward to see that $\cG(\Gamma)|_\alpha = \Gamma$.
  Additionally, for every clause $C \ldor D$ (of the above form) in $\B^-_{\var(\Gamma)}(\cG(\Gamma))$,
  the restriction $(C \ldor D)|_\alpha$ is either $1$ or $C$.
  We thus have
  \begin{equation}\label{eq:projection-BC-under-restriction}
    \left\zerodel\left(\cG(\Gamma) \cup \B^-_{\var(\Gamma)}(\cG(\Gamma))\right)\right|_\alpha = \Gamma \cup \B^-(\Gamma).
  \end{equation}

  Putting \cref{eq:projection-BC-WLOG-lower-bound,,%
    eq:projection-BC-no-blocked-for-extension,,%
    eq:projection-BC-under-restriction} together,
  we finally obtain
  \begin{align*}
    \size_{\BC^-}(\cG(\Gamma))
    &\geq \size_\Res\mleft(\cG(\Gamma) \cup \B^-(\cG(\Gamma))\mright) \\
    &= \size_\Res\mleft(\cG(\Gamma) \cup \B^-_{\var(\Gamma)}(\cG(\Gamma))\mright) \\
    &\geq \size_\Res\mleft(\left\zerodel\left(\cG(\Gamma)
      \cup \B^-_{\var(\Gamma)}(\cG(\Gamma))\right)\right|_\alpha\mright)
      \tag{\cref{thm:resolution-under-restrictions}} \\
    &= \size_\Res\mleft(\Gamma \cup \B^-(\Gamma)\mright),
  \end{align*}
  which is the desired inequality.
\end{proof}

For certain CNFs, the above result carries over to $\GER^-$.

\begin{lemma}\label{thm:projection-hard-for-GER}
  For every CNF $\Gamma$ such that $\ker(\Gamma) = \Gamma$, we have
  \begin{equation*}
    \size_{\GER^-}(\cG(\Gamma)) \geq \size_\Res(\Gamma \cup \B^-(\Gamma)).
  \end{equation*}
\end{lemma}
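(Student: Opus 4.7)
The natural plan is to reduce the $\GER^-$ bound to the $\BC^-$ bound we already have. Specifically, I want to apply \cref{thm:fallback-GER-to-BC} to $\cG(\Gamma)$ and then invoke \cref{thm:projection-hard-for-BC}. This requires showing $\ker(\cG(\Gamma)) = \cG(\Gamma)$, i.e.\ that no clause in $\cG(\Gamma)$ is blocked with respect to $\cG(\Gamma)$ itself.

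The satisfiable case is trivial, so assume $\Gamma$ is unsatisfiable. I would case-split on the three possible shapes of a clause $C \in \cG(\Gamma)$. If $C \in \Gamma$, then, since $\Gamma \subseteq \cG(\Gamma)$, \cref{thm:blocked-set-decrease} gives that any literal witnessing $C \in \B(\cG(\Gamma))$ would also witness $C \in \B(\Gamma)$, contradicting $\ker(\Gamma) = \Gamma$. If $C = x_i \ldor D$ with $D \in \Gamma$, consider the literal on which $C$ might be blocked. For the literal $x_i$, \cref{thm:BC-characterization-projection} requires $\lneg{D}$ to satisfy $\proj_{\lneg{x_i}}(\cG(\Gamma))$; but by the construction of $\cG(\Gamma)$ the only clauses containing $\lneg{x_i}$ are those of the form $\lneg{x_i} \lor D'$ with $D' \in \Gamma$, so $\proj_{\lneg{x_i}}(\cG(\Gamma)) = \Gamma$, which is unsatisfiable. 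The symmetric argument handles $C = \lneg{x_i} \ldor D$ blocked on $\lneg{x_i}$.

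The main obstacle is the remaining subcase: $C = x_i \ldor D$ blocked for some literal $p \in D$ (and symmetrically for $\lneg{x_i} \ldor D$). Here I would again use \cref{thm:BC-characterization-projection}: blockedness is equivalent to the assignment $\{\lneg{x_i}\} \cup \lneg{D \setminus \{p\}}$ satisfying $\proj_{\lneg{p}}(\cG(\Gamma))$. Clauses of $\proj_{\lneg{p}}(\cG(\Gamma))$ coming from $\lneg{x_i} \lor \Gamma$ are killed by $\lneg{x_i}$ immediately, and clauses coming from $x_j \lor \Gamma$ or $\lneg{x_j} \lor \Gamma$ with $j \neq i$ contain a free extension literal that our assignment does not touch, so satisfying them forces $\lneg{D \setminus \{p\}}$ to satisfy the underlying $\Gamma$-clause stripped of $\lneg{p}$. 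The same requirement arises from the $\Gamma$ part and from $x_i \lor \Gamma$. Unpacking, this is precisely the condition that $D$ is blocked for $p$ with respect to $\Gamma$, again contradicting $\ker(\Gamma) = \Gamma$.

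Once this case analysis is complete, $\ker(\cG(\Gamma)) = \cG(\Gamma)$, so \cref{thm:fallback-GER-to-BC} yields $\size_{\GER^-}(\cG(\Gamma)) = \size_{\BC^-}(\cG(\Gamma))$, and chaining with \cref{thm:projection-hard-for-BC} gives the claimed inequality. The only delicate point is the bookkeeping in the projection computation above, where one must check that every kind of clause arising in $\proj_{\lneg{p}}(\cG(\Gamma))$ either is discharged by the extension literal $\lneg{x_i}$ in the candidate assignment or collapses to an instance of $D$ being blocked for $p$ inside $\Gamma$.
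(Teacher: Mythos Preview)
Your proposal is correct and follows the same overall strategy as the paper: establish $\ker(\cG(\Gamma)) = \cG(\Gamma)$ by a case analysis on the shapes of clauses in $\cG(\Gamma)$, then chain \cref{thm:fallback-GER-to-BC} with \cref{thm:projection-hard-for-BC}.

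The only differences are in the sub-arguments for a clause $C = x_i \ldor D$ with $D \in \Gamma$. For the extension literal $x_i$, the paper does not invoke unsatisfiability of $\Gamma$ at all; it simply notes that $\lneg{x_i} \ldor D \in \cG(\Gamma)$ and that $D \cup D$ is nontautological, which already rules out blockedness on $x_i$. For a literal $p \in D$, the paper avoids your full projection bookkeeping: since $\Gamma \subseteq \cG(\Gamma)$, blockedness of $C$ for $p$ with respect to $\cG(\Gamma)$ in particular forces $(C \setminus \{p\}) \cup (E \setminus \{\lneg{p}\})$ to be tautological for every $E \in \Gamma$ containing $\lneg{p}$; the extension literal in $C \setminus \{p\}$ cannot pair with anything in $E$ (its variable is absent from $\Gamma$), so the tautology must already come from $(D \setminus \{p\}) \cup (E \setminus \{\lneg{p}\})$, i.e., $D$ is blocked for $p$ in $\Gamma$. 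Your route through \cref{thm:BC-characterization-projection} reaches the same conclusion but carries more case analysis than necessary.
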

\begin{proof}
  Suppose $\ker(\Gamma) = \Gamma$.
  We will show that $\inB(\cG(\Gamma)) = \varnothing$.
  A clause $C$ in $\inB(\cG(\Gamma))$ belongs to either $\Gamma$ or $\cG(\Gamma) \setminus \Gamma$,
  and we handle the two cases separately.
  \begin{description}
  \item[Case 1] (\textit{$C \in \Gamma$.})
    If $C$ is blocked with respect to $\cG(\Gamma)$,
    then, since $\Gamma \subseteq \cG(\Gamma)$, it is blocked with respect to $\Gamma$ as well,
    which contradicts $\ker(\Gamma) = \Gamma$.
  \item[Case 2] (\textit{$C \in \cG(\Gamma) \setminus \Gamma$.})
    Since $C \in \cG(\Gamma) \setminus \Gamma$, it is of the form $p \ldor C'$,
    where $C' \in \Gamma$ and $p$ is a literal of some variable
    in $\left\{x_1, \dots, x_{t(\Gamma)}\right\}$.
    The clause $C$ is clearly not blocked for $p$ with respect to $\cG(\Gamma)$
    since there exists a clause of the form $\lneg{p} \ldor C'$ in $\cG(\Gamma)$
    but $C' \cup C'$ is not tautological.
    The clause $C$ is not blocked for any literal in $C'$ with respect to $\cG(\Gamma)$ either
    since this implies that $C'$ is blocked with respect to $\Gamma$,
    which contradicts $\ker(\Gamma) = \Gamma$.
  \end{description}
  We thus have $\ker(\cG(\Gamma)) = \cG(\Gamma)$.
  Now, using \crefnosort{thm:fallback-GER-to-BC,thm:projection-hard-for-BC}, we have
  \begin{equation*}
    \size_{\GER^-}(\cG(\Gamma)) = \size_{\BC^-}(\cG(\Gamma)) \geq \size_\Res(\Gamma \cup \B^-(\Gamma)). \qedhere
  \end{equation*}
\end{proof}

To prove the separation, we invoke the above results
with $\Gamma$ as the \emph{pigeonhole principle},
which states that every ``pigeon'' $i \in [n + 1]$ is mapped to some ``hole'' $k \in [n]$
and that no two distinct pigeons $i, j \in [n + 1]$ are mapped to the same hole.
It is defined for $n \in \NN^+$ as
\begin{equation*}
  \PHP_n \coloneqq \bigcup_{i \in [n + 1]} \left\{p_{i,1} \lor \dots \lor p_{i,n}\right\}
  \cup \bigcup_{\substack{i, j \in [n + 1],\; i \neq j \\ k \in [n]}} \{\lneg{p_{i,k}} \lor \lneg{p_{j,k}}\},
\end{equation*}
where we call the first set of clauses the \emph{pigeon axioms}
and the second set the \emph{hole axioms}.

\begin{theorem}\label{thm:separation-RAT-from-GER}
  $\RAT^-$ is exponentially separated from $\GER^-$.
\end{theorem}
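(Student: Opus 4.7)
The plan is to apply the strategy of \cref{sec:exponential-separation-RAT-from-GER}
with $\Gamma_n \coloneqq \PHP_n$, and to take the separating sequence to be $\cG(\PHP_n)$.
Since $\PHP_n$ has polynomial size in $n$ and Cook exhibited a polynomial-size
$\ER$ proof of $\PHP_n$, the number $t(\PHP_n)$ of extension variables used
in $\Lambda^*$ is $n^{O(1)}$, so $\cG(\PHP_n)$ itself has size polynomial in $n$.

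For the $\RAT^-$ upper bound I would invoke \cref{thm:projection-easy-for-RAT},
which immediately gives $\size_{\RAT^-}(\cG(\PHP_n)) \leq \size_{\ER}(\PHP_n) = n^{O(1)}$.
For the $\GER^-$ lower bound I plan to invoke \cref{thm:projection-hard-for-GER}.
To apply it I must first verify that $\ker(\PHP_n) = \PHP_n$,
that is, no clause of $\PHP_n$ is blocked with respect to $\PHP_n$.
A pigeon axiom $p_{i,1} \lor \dots \lor p_{i,n}$ cannot be blocked for any $p_{i,k}$,
because resolving it on $p_{i,k}$ against the hole axiom $\lneg{p_{i,k}} \lor \lneg{p_{j,k}}$
yields $p_{i,1} \lor \dots \lor p_{i,k-1} \lor p_{i,k+1} \lor \dots \lor p_{i,n} \lor \lneg{p_{j,k}}$,
which is nontautological. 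A hole axiom $\lneg{p_{i,k}} \lor \lneg{p_{j,k}}$
is likewise not blocked for either of its literals,
since resolving on, say, $\lneg{p_{i,k}}$ against the pigeon axiom for pigeon $i$
produces a nontautological clause. Hence no clause of $\PHP_n$ lies in any maximal
blocked sequence, and $\ker(\PHP_n) = \PHP_n$.

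By \cref{thm:projection-hard-for-GER} it then suffices to show
$\size_\Res(\PHP_n \cup \B^-(\PHP_n)) = 2^{\Omega(n)}$. This is precisely the content
of the lower bound that Kullmann established~\cite[Lemma~9.4]{Kul99b}:
augmenting $\PHP_n$ with all of its blocked clauses does not yield a subexponential resolution
refutation. I would recall that by \cref{thm:BC-characterization-projection},
a clause $C = p \ldor C'$ with $p = p_{i,k}$ lies in $\B^-(\PHP_n)$ only if
$\lneg{C'}$ satisfies $\proj_{\lneg{p}}(\PHP_n)$; a quick case analysis
on whether $p$ is positive or negative shows that every such $C'$ must
already contain enough literals forcing pigeon $i$ to go to a specific hole
(or covering all other pigeons at hole $k$), so $\B^-(\PHP_n)$ consists of weakenings
of clauses that are implied by restrictions of $\PHP_n$ of a restricted form.
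Haken's bottleneck-counting argument (or a standard width/random-restriction
argument) then goes through essentially unchanged, giving the desired exponential bound.

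The main obstacle is this last step — transporting the classical $\PHP_n$ lower bound
to $\PHP_n \cup \B^-(\PHP_n)$. All of the earlier pieces are either cited from
\cref{sec:exponential-separation-RAT-from-GER} or short syntactic checks, but controlling
which clauses can legitimately appear in $\B^-(\PHP_n)$ requires the projection
characterization and a careful verification that these extra clauses do not provide
any shortcut that defeats the standard lower bound machinery. Once that is in place,
combining the upper and lower bounds yields the exponential separation of
$\RAT^-$ from $\GER^-$ claimed in the theorem.
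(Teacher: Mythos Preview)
Your proposal is correct and matches the paper's proof almost exactly: both take $\cG(\PHP_n)$ as the separating family, invoke \cref{thm:projection-easy-for-RAT} together with Cook's polynomial-size $\ER$ proofs for the upper bound, check $\ker(\PHP_n)=\PHP_n$, and then apply \cref{thm:projection-hard-for-GER} combined with Kullmann's bound $\size_\Res(\PHP_n \cup \B^-(\PHP_n)) = 2^{\Omega(n)}$ for the lower bound. The only difference is that the paper treats this last bound as a black-box citation (to~\cite[Theorem~2]{Kul99b}, not Lemma~9.4) rather than sketching its proof, so what you flag as the ``main obstacle'' is in fact already discharged by reference.
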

\begin{proof}
  Cook~\cite{Coo76} constructed polynomial-size $\ER$ proofs of $\PHP_n$,
  which implies by \cref{thm:projection-easy-for-RAT} that
  \begin{equation*}
    \size_{\RAT^-}(\cG(\PHP_n)) = n^{O(1)}.
  \end{equation*}
  Kullmann~\cite[Theorem~2]{Kul99b} proved
  that $\size_\Res(\PHP_n \cup \B^-(\PHP_n)) = 2^{\Omega(n)}$.
  Noting that $\ker(\PHP_n) = \PHP_n$ for all $n \in \NN^+$,
  we apply \cref{thm:projection-hard-for-GER} to obtain
  \begin{equation*}
    \size_{\GER^-}(\cG(\PHP_n)) = 2^{\Omega(n)}.
  \end{equation*}
  Thus, $\cG(\langle \PHP \rangle)$ exponentially separates $\RAT^-$ from $\GER^-$.
\end{proof}

\subsection{Exponential separation of \psnnv{GER} from \psnnv{RAT}}
\label{sec:exponential-separation-GER-from-RAT}

We proceed in a similar way to the previous section.
Let $\Gamma$ be a CNF and $(\Lambda^*, \Pi^*)$ be a minimum-size $\ER$ proof of $\Gamma$.
Take a set
\begin{equation*}
  \left\{y_1, \dots, y_{t(\Gamma)}\right\} \subseteq \bV \setminus \var(\Gamma \cup \Lambda^*)
\end{equation*}
of $t(\Gamma)$ distinct variables.
Consider the transformation
\begin{equation}\label{eq:blocked-pair-transformation}
  \cH(\Gamma) \coloneqq \Gamma \cup \bigcup_{i = 1}^{t(\Gamma)} \{\lneg{x_i} \lor y_i,\ x_i \lor \lneg{y_i}\},
\end{equation}
where $x_1, \dots, x_{t(\Gamma)}$ are the extension variables used in $\Lambda^*$.
As before, only one of the two systems
can make any use of the extension variables incorporated into the formula.
This time, the temporary deletion available to $\GER^-$ makes the difference.

\begin{lemma}\label{thm:blocked-pair-easy-for-GER}
  For every CNF $\Gamma$, we have
  \begin{equation*}
    \size_{\GER^-}(\cH(\Gamma)) \leq \size_{\ER}(\Gamma).
  \end{equation*}
\end{lemma}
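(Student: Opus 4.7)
The plan is to exhibit the minimum-size extended resolution proof $(\Lambda^*, \Pi^*)$ of $\Gamma$ itself as a $\GER^-$ proof of $\cH(\Gamma)$. Since $\Gamma \subseteq \cH(\Gamma)$, the resolution proof $\Pi^*$ of $\Gamma \cup \Lambda^*$ is automatically a resolution proof of $\cH(\Gamma) \cup \Lambda^*$. It is also without new variables with respect to $\cH(\Gamma)$, because all variables $x_i$ used in $\Lambda^*$ already occur in the guard clauses of $\cH(\Gamma)$. So everything reduces to showing that $\Lambda^*$ is a blocked extension for $\cH(\Gamma)$.

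To do this, I will invoke \cref{thm:blocked-extension-characterization} with $\Gamma' = \Gamma \subseteq \cH(\Gamma)$, which amounts to temporarily deleting the guard clauses. The sequence to verify consists of the clauses of $\Lambda^* \cup (\cH(\Gamma) \setminus \Gamma) = \Lambda^* \cup \{\lneg{x_i} \lor y_i,\ x_i \lor \lneg{y_i} \mid i \in [t(\Gamma)]\}$, in the following order: first the sets $\lambda_1, \dots, \lambda_{t(\Gamma)}$ of extension clauses in their original order from $\Lambda^*$ (with each triple in the order $\lneg{x_i} \lor p_i$, $\lneg{x_i} \lor q_i$, $x_i \lor \lneg{p_i} \lor \lneg{q_i}$), followed by the guard pairs, each pair with $\lneg{x_i} \lor y_i$ preceding $x_i \lor \lneg{y_i}$.

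The verification that each clause in this sequence is blocked with respect to $\Gamma$ together with its predecessors splits into two essentially routine checks. For the clauses in $\Lambda^*$, the argument is identical to the one showing that these are legitimate extension clauses in the $\ER$ proof: when $\lneg{x_i} \lor p_i$ and $\lneg{x_i} \lor q_i$ are added, $x_i$ does not yet occur in $\Gamma \cup \bigcup_{j<i}\lambda_j$ (we have dropped the guard clauses), so they are vacuously blocked for $\lneg{x_i}$; and the third clause $x_i \lor \lneg{p_i} \lor \lneg{q_i}$ is blocked for $x_i$ because the only preceding occurrences of $\lneg{x_i}$ produce tautological resolvents. For the guard clauses, the fact that each $y_i$ is fresh (i.e.\ $y_i \notin \var(\Gamma \cup \Lambda^*)$) is crucial: $\lneg{x_i} \lor y_i$ is vacuously blocked for $y_i$ since no preceding clause mentions $\lneg{y_i}$, and $x_i \lor \lneg{y_i}$ is blocked for $\lneg{y_i}$ since the only preceding clause containing $y_i$ is $\lneg{x_i} \lor y_i$, whose resolvent $x_i \lor \lneg{x_i}$ is tautological.

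Once the blocked extension is established, the size bound is immediate: the $\GER^-$ proof $(\Lambda^*, \Pi^*)$ has size $\abs{\Lambda^*} + \abs{\Pi^*} = \size_{\ER}(\Gamma)$. The only slightly delicate point is the temporary deletion step — one has to notice that the guard clauses $\lneg{x_i} \lor y_i$ and $x_i \lor \lneg{y_i}$ are precisely what prevents $x_i$ from being usable as a fresh extension variable in a plain $\BC^-$ proof, but that $\GER^-$ can sidestep this by the blocked extension characterization, and that the freshness of $y_i$ then allows the guard clauses to be reinstated as blocked clauses at the end.
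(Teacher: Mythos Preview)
Your proposal is correct and follows essentially the same approach as the paper: both invoke \cref{thm:blocked-extension-characterization} with $\Gamma' = \Gamma$, first adding the extension clauses $\lambda_1,\dots,\lambda_t$ as blocked (using the freshness of each $x_i$ relative to $\Gamma \cup \bigcup_{j<i}\lambda_j$), then reinstating the guard clauses as blocked for the literals $y_i$ and $\lneg{y_i}$ (using the freshness of the $y_i$). The only cosmetic difference is that the paper orders the guard clauses as all $\lneg{x_i}\lor y_i$ first and then all $x_i\lor\lneg{y_i}$, whereas you allow them pairwise; both orderings work for the same reason.
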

\begin{proof}
  Let $(\Lambda^*, \Pi^*)$ be the minimum-size $\ER$ proof of $\Gamma$.
  We will show that the clauses in $\Lambda^* \cup (\cH(\Gamma) \setminus \Gamma)$
  can be derived from $\Gamma$ in some sequence by blocked clause additions,
  which implies by \cref{thm:blocked-extension-characterization}
  that $\Lambda^*$ is a blocked extension for $\cH(\Gamma)$.

  Recall that extension clauses can be derived in sequence
  by blocked clause additions.
  Then, since $\Lambda^*$ is an extension for $\Gamma$, we derive $\Lambda^*$ by such a sequence.
  Next, from $\Gamma \cup \Lambda^*$, we derive the clauses in $\cH(\Gamma) \setminus \Gamma$.
  Let us write $t$ for $t(\Gamma)$.
  Consider the sequence
  \begin{equation*}
    (\lneg{x_1} \lor y_1,\ \dots,\ \lneg{x_t} \lor y_t,\
    x_1 \lor \lneg{y_1},\ \dots,\ x_t \lor \lneg{y_t}).
  \end{equation*}
  For each $i \in [t]$, the $i$th clause $\lneg{x_i} \lor y_i$
  in the first half of the sequence
  is blocked for $y_i$ since $\lneg{y_i}$ does not occur in any of the earlier clauses.
  Similarly, the $i$th clause $x_i \lor \lneg{y_i}$
  in the second half of the sequence
  is blocked for $\lneg{y_i}$ since the only earlier occurrence of $y_i$
  is the clause $\lneg{x_i} \lor y_i$ and $\{x_i,\, \lneg{x_i}\}$ is tautological.
  As a result, $\Lambda^*$ is a blocked extension for $\cH(\Gamma)$.

  Since $\Pi^*$ is a resolution proof of $\Gamma \cup \Lambda^*$,
  and since $\cH(\Gamma)$ contains $\Gamma$,
  we also have a resolution proof of $\cH(\Gamma) \cup \Lambda^*$.
  Thus, we have a $\GER^-$ proof of $\cH(\Gamma)$
  of size $\abs{\Lambda^*} + \abs{\Pi^*} = \size_\ER(\Gamma)$.
\end{proof}

\begin{lemma}\label{thm:blocked-pair-hard-for-RAT}
  Let $\Gamma$ be a CNF, and let $n = \abs{\var(\Gamma)}$. We have
  \begin{equation*}
    \size_{\RAT^-}(\cH(\Gamma)) \geq \frac{\size_{\RAT^-}(\Gamma)}{n + 1}.
  \end{equation*}
\end{lemma}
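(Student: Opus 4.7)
The plan is to restrict a given $\RAT^-$ proof $\Pi$ of $\cH(\Gamma)$ by the partial assignment $\alpha$ that sets $\alpha(x_i) = \alpha(y_i) = 0$ for every $i \in [t(\Gamma)]$. Under $\alpha$, both guard clauses $\lneg{x_i} \lor y_i$ and $x_i \lor \lneg{y_i}$ are satisfied, so $\cH(\Gamma)|_\alpha = \Gamma$. I would then argue that the restricted sequence, after local modifications, is a valid $\RAT^-$ proof of $\Gamma$ of size at most $(n+1) \cdot \size_{\RAT^-}(\cH(\Gamma))$.

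Resolution and weakening steps of $\Pi$ restrict in the standard way, each producing at most one step. A RAT addition of a clause $C = p \ldor C'$ with pivot $p$ is handled by a case analysis on how $\alpha$ treats $p$. If $p$ is a literal of a variable in $\var(\Gamma)$, then $\alpha$ does not touch $p$, and \cref{thm:unit-propagation-under-restrictions} implies that $C|_\alpha = p \ldor C'|_\alpha$ remains a RAT for $p$ with respect to the restricted CNF, giving a single step. If $p \in \{\lneg{x_i}, \lneg{y_i}\}$, then $\alpha$ satisfies $p$, so $C|_\alpha$ is trivially true and the step can be skipped. \cref{thm:monotonicity} ensures that the validity of later RAT inferences is preserved as additional clauses accumulate in the restricted proof.

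The main obstacle is the remaining case where $p \in \{x_i, y_i\}$: now $\alpha$ falsifies $p$ and $C|_\alpha = C'|_\alpha$ has lost its pivot, so it need not be a RAT with respect to $\Delta|_\alpha$. The key observation is that $\cH(\Gamma)$ always contains a companion guard clause, namely $\lneg{x_i} \lor y_i$ when $p = x_i$ and $x_i \lor \lneg{y_i}$ when $p = y_i$. Applying the original RAT condition to this companion yields $\Delta \vdash_1 C' \cup \{y_i\}$ (respectively $C' \cup \{x_i\}$). By \cref{thm:unit-propagation-under-restrictions}, this restricts to $\Delta|_\alpha \vdash_1 C'|_\alpha$ since $\alpha$ falsifies the extra literal, and then \cref{thm:resolution-simulation-RUP} supplies a resolution derivation of $C'|_\alpha$ from $\Delta|_\alpha$ using at most $n+1$ lines, because $\var(\Delta|_\alpha) \subseteq \var(\Gamma)$ has size at most $n$. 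Each step of $\Pi$ thus contributes at most $n+1$ steps to the restricted proof, yielding the stated bound.
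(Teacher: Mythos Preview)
Your proposal is correct and follows essentially the same approach as the paper's proof: restrict by an assignment that satisfies both guard clauses for each pair, handle resolution and weakening trivially, keep RAT additions whose pivot lies in $\var(\Gamma)$ by \cref{thm:unit-propagation-under-restrictions}, and in the ``bad-pivot'' case exploit the companion guard clause to get a unit-propagation consequence that restricts to $\Delta|_\alpha \vdash_1 C|_\alpha$ and then invoke \cref{thm:resolution-simulation-RUP}. The paper does the same but with the symmetric choice $\alpha(x_i)=\alpha(y_i)=1$, so the roles of the ``skip'' and ``bad-pivot'' cases are swapped; it also states explicitly that one first disposes of the case $\alpha \models C$ before the pivot case split, and maintains the containment/subsumption invariants needed for \cref{thm:monotonicity} via the subsumption guarantee of \cref{thm:resolution-simulation-RUP}---both points you rely on implicitly.
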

\begin{proof}
  We write $t$ for $t(\Gamma)$.
  Let $V = \var(\cH(\Gamma) \setminus \Gamma)$ (i.e., the set of variables added by $\cH$),
  and let $\alpha$ be a partial assignment such that
  \begin{equation*}
    \alpha(z) =
    \begin{cases}
      1 & \text{if } z \in V \\
      \text{undefined} & \text{otherwise}.
    \end{cases}
  \end{equation*}
  We claim that, for all $N \in \NN^+$, given a $\RAT^-$ derivation
  \begin{equation*}
    \Pi = (\Delta_1, \dots, \Delta_N)
  \end{equation*}
  with $\Delta_1 = \cH(\Gamma)$,
  there exists a $\RAT^-$ derivation
  \begin{equation*}
    \Pi' = (\Psi_1, \dots, \Psi_{N'})
  \end{equation*}
  with $N' \leq N \cdot (\abs{\var(\Gamma)} + 1)$
  such that
  \begin{itemize}
  \item $\Delta_1|_\alpha = \Gamma = \Psi_1$,
  \item $\Delta_N|_\alpha \subseteq \Psi_{N'}$, and
  \item $\Delta_N|_\alpha \sqsupseteq \Psi_{N'}$.
  \end{itemize}
  The second condition above implies in particular that if $\bot \in \Delta_N$, then $\bot \in \Psi_{N'}$.
  This in turn implies the desired statement,
  since it means that if $\cH(\Gamma)$ has a $\RAT^-$ proof of size $N$,
  then $\Gamma$ has a $\RAT^-$ proof of size at most $N \cdot (\abs{\var(\Gamma)} + 1)$.

  We proceed by induction.
  For a derivation $\Pi = (\Delta_1)$ of size $1$ with $\Delta_1 = \cH(\Gamma)$,
  the derivation $\Pi' = (\Delta_1|_\alpha)$ satisfies the conditions above.
  Let $\Pi = (\Delta_1, \dots, \Delta_m)$ be a $\RAT^-$ derivation with $\Delta_1 = \cH(\Gamma)$.
  Suppose that $\Pi' = (\Psi_1, \dots, \Psi_{m'})$ is a $\RAT^-$ derivation
  with $m' \leq m \cdot (\abs{\var(\Gamma)} + 1)$ satisfying the above conditions.
  Let $C$ be a clause that is derived from $\Delta_m$
  either by resolution, weakening, or RAT addition.
  We will show that there exists a $\RAT^-$ derivation from $\Psi_{m'}|_\alpha$
  of a CNF that contains and is subsumed by $(\Delta_m \cup \{C\})|_\alpha$.
  For simplicity, we will establish this for $\Delta_m|_\alpha$ instead of $\Psi_{m'}|_\alpha$,
  with the understanding that the containment and the subsumption conditions above
  imply by \cref{thm:monotonicity} that the same derivation
  can be made also from $\Psi_{m'}|_\alpha$.
  There exists a trivial derivation when $\alpha \models C$
  since it implies that $(\Delta_m \cup \{C\})|_\alpha = \Delta_m|_\alpha$,
  so suppose that $\alpha \not\models C$.
  As a consequence, $\alpha$ does not satisfy any subset of $C$.

  From this point on, we write $\Delta$ instead of $\Delta_m$ to reduce clutter.
  \begin{description}
  \item[Case 1] (\textit{$C$ is a resolvent of $D, E \in \Delta$ on $v$.})
    Without loss of generality, suppose that $v \in D$ and $\lneg{v} \in E$.
    \begin{description}
    \item[Case 1.1] (\textit{$v \in V$.})
      We have $E|_\alpha = (E \setminus \{\lneg{v}\})|_\alpha$.
      Since $(E \setminus \{\lneg{v}\}) \subseteq C$, we have $E|_\alpha \in \Delta|_\alpha$
      and $E|_\alpha \subseteq C|_\alpha$, so $C|_\alpha$ is derived from $\Delta|_\alpha$ by weakening.
    \item[Case 1.2] (\textit{$v \notin V$.})
      Since $(D \setminus \{v\}) \subseteq C$ and $(E \setminus \{\lneg{v}\}) \subseteq C$,
      and since $\alpha$ does not set $v$,
      we have $D|_\alpha \in \Delta|_\alpha$ and $E|_\alpha \in \Delta|_\alpha$.
      Moreover, $C|_\alpha$ is a resolvent of $D|_\alpha \in \Delta|_\alpha$ and $E|_\alpha \in \Delta|_\alpha$,
      so $C|_\alpha$ is derived from $\Delta|_\alpha$ by resolution.
    \end{description}
  \item[Case 2] (\textit{$C$ is a weakening of a clause $D \in \Delta$.})
    Since $D \subseteq C$, we have $D|_\alpha \in \Delta|_\alpha$ and $D|_\alpha \subseteq C|_\alpha$,
    so $C|_\alpha$ is derived from $\Delta|_\alpha$ by weakening.
  \item[Case 3] (\textit{$C$ is a RAT for $p \in C$ with respect to $\Delta$.})
    Since we assumed $\alpha \not\models C$, there are two possibilities:
    either $\lneg{p} \in V$ or $\var(p) \notin V$.
    \begin{description}
    \item[Case 3.1] (\textit{$\lneg{p} \in V$.})
      Either $p = \lneg{x_i}$ or $p = \lneg{y_i}$ for some $i \in [t]$.
      Suppose $p = \lneg{x_i}$.
      (The case for $p = \lneg{y_i}$ is symmetric.)
      Since $C$ is a RAT for $\lneg{x_i}$ with respect to $\Delta$,
      and since $x_i \lor \lneg{y_i} \in \Delta$,
      we have $\Delta \vdash_1 (C \setminus \{\lneg{x_i}\}) \cup \{\lneg{y_i}\}$.
      By \cref{thm:unit-propagation-under-restrictions},
      we also have
      \begin{equation*}
        \Delta|_\alpha \vdash_1 \left\zerodel\bigl((C \setminus \{\lneg{x_i}\})
          \cup \{\lneg{y_i}\}\bigr)\right|_\alpha,
      \end{equation*}
      which simplifies to
      \begin{equation*}
        \Delta|_\alpha \vdash_1 C|_\alpha.
      \end{equation*}
      By \cref{thm:resolution-simulation-RUP},
      there exists a resolution derivation of $C|_\alpha$
      from $\Delta|_\alpha$ of size $\abs{\var(\Delta|_\alpha)} + 1 \leq \abs{\var(\Gamma)} + 1$.
      Moreover, the final CNF in this derivation
      is subsumed by $\Delta|_\alpha \cup {C|_\alpha} = (\Delta \cup \{C\})|_\alpha$
      as desired.
    \item[Case 3.2] (\textit{$\var(p) \notin V$.})
      The clause $C$ is of the form $C' \ldor p$ with $\var(p) \notin V$.
      Since $C$ is a RAT for $p$ with respect to $\Delta$,
      for every clause $D \in \Delta$ of the form $D' \ldor \lneg{p}$,
      we have $\Delta \vdash_1 C' \cup D'$.
      We will show that $C|_\alpha$ is a RAT with respect to $\Delta|_\alpha$.
      Every clause $D \in \Delta$ such that $\alpha \models D$ simply disappears from $\Delta|_\alpha$,
      so such clauses are irrelevant when determining
      whether $C|_\alpha$ is a RAT with respect to $\Delta|_\alpha$.
      On the other hand,
      for $D \in \Delta$ of the form $D' \ldor \lneg{p}$ such that $\alpha \not\models D$,
      we have $\Delta|_\alpha \vdash_1 (C' \cup D')|_\alpha$ by \cref{thm:unit-propagation-under-restrictions}.
      Thus, $C|_\alpha$ is a RAT for $p$ with respect to $\Delta|_\alpha$. \qedhere
    \end{description}
  \end{description}
\end{proof}

Let $n = 2^k$ for $k \in \NN^+$.
For a propositional variable $x$,
let us write $x \neq 0$ and $x \neq 1$
to denote the literals $x$ and $\lneg{x}$, respectively.
To prove the separation, we invoke the above results
with $\Gamma$ as the \emph{bit pigeonhole principle},
which states that for all $i, j \in [n + 1]$ such that $i \neq j$
the binary strings $p^i_1 \dots p^i_k$ and $p^j_1 \dots p^j_k$ are different.
It is defined for $n$ as
\begin{equation*}
  \BPHP_n \coloneqq
  \bigcup_{\substack{i, j \in [n + 1],\; i \neq j \\ (h_1, \dots, h_k) \in \{0, 1\}^k}}
  \left\{\bigvee_{\ell = 1}^k p^i_\ell \neq h_\ell \lor \bigvee_{\ell = 1}^k p^j_\ell \neq h_\ell \right\}.
\end{equation*}

\begin{theorem}\label{thm:separation-GER-from-RAT}
  $\GER^-$ is exponentially separated from $\RAT^-$.
\end{theorem}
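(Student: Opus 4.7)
The plan is to apply the template of Lemmas~\ref{thm:blocked-pair-easy-for-GER} and~\ref{thm:blocked-pair-hard-for-RAT} with $\Gamma = \BPHP_n$. For the $\GER^-$ upper bound, Lemma~\ref{thm:blocked-pair-easy-for-GER} reduces the task to showing $\size_\ER(\BPHP_n) = n^{O(1)}$. For the $\RAT^-$ lower bound, Buss and Thapen~\cite{BT21} have established $\size_{\RAT^-}(\BPHP_n) = 2^{\Omega(n)}$ (this is essentially their separation of $\SPR^-$ from $\RAT^-$, as they independently show $\BPHP_n$ admits polynomial-size $\SPR^-$ proofs). Since $\BPHP_n$ has $(n+1)\log_2 n$ variables, Lemma~\ref{thm:blocked-pair-hard-for-RAT} then gives
\begin{equation*}
  \size_{\RAT^-}(\cH(\BPHP_n)) \geq \frac{2^{\Omega(n)}}{(n+1)\log_2 n + 1} = 2^{\Omega(n)}.
\end{equation*}
Because $t(\BPHP_n)$ is bounded by the size of the minimum $\ER$ proof of $\BPHP_n$, the formula $\cH(\BPHP_n)$ has polynomial size, so the sequence $\cH(\langle \BPHP \rangle)$ exponentially separates $\GER^-$ from $\RAT^-$.

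For the polynomial $\ER$ upper bound on $\BPHP_n$, I would reduce to Cook's polynomial $\ER$ proof of $\PHP_n$~\cite{Coo76}. For each pigeon $i \in [n+1]$ and each hole $h \in \{0,1\}^k$, introduce an extension variable $q_{i,h} \liff \bigwedge_{\ell=1}^k (p^i_\ell = h_\ell)$. Each hole axiom $\lneg{q_{i,h}} \lor \lneg{q_{j,h}}$ of $\PHP_n$ follows in a constant number of resolution steps from the defining extension clauses for $q_{i,h}$ and $q_{j,h}$ together with the corresponding $\BPHP_n$ clause. Each pigeon axiom $\bigvee_h q_{i,h}$ is derived in $O(n)$ steps by $k$ rounds of pairwise resolution on $p^i_1, \ldots, p^i_k$ starting from the $2^k = n$ clauses $\bigvee_{\ell=1}^k (p^i_\ell \neq h_\ell) \lor q_{i,h}$. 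Appending Cook's $\ER$ refutation of $\PHP_n$ on the derived $q$-variables then yields a polynomial-size $\ER$ refutation of $\BPHP_n$.

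The main obstacle is the polynomial-size $\ER$ upper bound on $\BPHP_n$: although conceptually this is just a reduction to $\PHP_n$, the pigeon axioms $\bigvee_h q_{i,h}$ have width $n$, so one has to check that they (and not just their counterparts in $\PHP_n$) can indeed be produced efficiently from the defining extension clauses. Everything else is a direct substitution into Lemmas~\ref{thm:blocked-pair-easy-for-GER} and~\ref{thm:blocked-pair-hard-for-RAT}, mirroring the structure used in~\cref{thm:separation-RAT-from-GER}. As a sanity check, one could alternatively derive the upper bound by the chain ``$\SPR^-$ simulates $\BPHP_n$ (Buss and Thapen~\cite{BT21}), $\DRAT$ simulates $\SPR^-$, and $\ER$ simulates $\DRAT$~\cite{KRHB20},'' which avoids the direct construction altogether.
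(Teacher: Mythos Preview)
Your proposal is correct and follows the same overall strategy as the paper: both instantiate \cref{thm:blocked-pair-easy-for-GER,thm:blocked-pair-hard-for-RAT} with $\Gamma = \BPHP_n$ and cite the Buss--Thapen $\RAT^-$ lower bound for $\BPHP_n$. For the $\ER$ upper bound the paper takes exactly your ``sanity check'' route, citing the Buss--Thapen polynomial-size $\SPR^-$ proofs of $\BPHP_n$ together with the fact that $\ER$ simulates $\SPR^-$; your primary direct reduction to Cook's $\PHP_n$ proof via the extension variables $q_{i,h}$ is a valid, more self-contained alternative. One small correction: deriving each hole axiom $\lneg{q_{i,h}} \lor \lneg{q_{j,h}}$ takes $O(k) = O(\log n)$ resolution steps rather than a constant number, but this is of course still polynomial.
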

\begin{proof}
  Buss and Thapen~\cite[Theorem~4.4]{BT21}
  gave polynomial-size proofs of $\BPHP_n$
  in $\SPR^-$, which $\ER$ simulates.
  By \cref{thm:blocked-pair-easy-for-GER}, we have
  \begin{equation*}
    \size_{\GER^-}(\cH(\BPHP_n)) = n^{O(1)}.
  \end{equation*}
  They~\cite[Theorem~5.4]{BT21} also proved that $\size_{\RAT^-}(\BPHP_n) = 2^{\Omega(n)}$.
  Applying \cref{thm:blocked-pair-hard-for-RAT} gives
  \begin{equation*}
    \size_{\RAT^-}(\cH(\BPHP_n)) = 2^{\Omega(n)}.
  \end{equation*}
  Thus, $\cH(\langle \BPHP \rangle)$ exponentially separates $\GER^-$ from $\RAT^-$.
\end{proof}

\cref{thm:RAT-GER-incomparable} follows directly from
\crefnosort{thm:separation-RAT-from-GER,thm:separation-GER-from-RAT}.

\section{Exponential separation of \psnnv{SBC} from \psnnv{RAT}}
\label{sec:exponential-separation-SBC-from-RAT}

Recall the transformation in~\cref{eq:blocked-pair-transformation},
which we used to construct formulas separating $\GER^-$ from $\RAT^-$.
We had
\begin{equation*}
  \cH(\Gamma) = \Gamma \cup \bigcup_{i = 1}^{t(\Gamma)} \{\lneg{x_i} \lor y_i,\ x_i \lor \lneg{y_i}\},
\end{equation*}
where $x_1, \dots, x_{t(\Gamma)}$ are the extension variables
used in a minimum-size $\ER$ proof $(\Lambda^*, \Pi^*)$ of $\Gamma$.
We prove below that $\SBC^-$ can use those variables
and simulate the $\ER$ proof of $\Gamma$,
so deletion is not the only way to overcome the obstacles in $\cH(\Gamma)$
that prevent $\RAT^-$ from gaining any speedup.

\begin{lemma}\label{thm:blocked-pair-easy-for-SBC}
  For every CNF $\Gamma$, we have
  \begin{equation*}
    \size_{\SBC^-}(\cH(\Gamma)) \leq 2 \cdot \size_{\ER}(\Gamma).
  \end{equation*}
\end{lemma}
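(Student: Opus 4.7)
The plan is to simulate the minimum-size $\ER$ proof $(\Lambda^*, \Pi^*)$ of $\Gamma$ step by step in $\SBC^-$, introducing each of the three extension clauses in every triple $\lambda_i$ in two substeps: an SBC addition of a ``widened'' version, followed by one resolution step against a guard clause in $\cH(\Gamma) \setminus \Gamma$. Since $|\Lambda^*| = 3\, t(\Gamma)$, this contributes $2 \abs{\Lambda^*}$ new clauses; the resolution part $\Pi^*$ then applies verbatim because $\cH(\Gamma) \cup \Lambda^* \supseteq \Gamma \cup \Lambda^*$, so the total size is at most $2 \abs{\Lambda^*} + \abs{\Pi^*} \leq 2 \cdot \size_\ER(\Gamma)$.

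Concretely, for each $i \in [t(\Gamma)]$ I would add, in order, the widened clauses
\begin{equation*}
\lneg{x_i} \lor \lneg{y_i} \lor p_i, \qquad \lneg{x_i} \lor \lneg{y_i} \lor q_i, \qquad x_i \lor y_i \lor \lneg{p_i} \lor \lneg{q_i},
\end{equation*}
taking $L = \{\lneg{x_i}, \lneg{y_i}\}$ for the first two and $L = \{x_i, y_i\}$ for the third. Each widened clause is then resolved on $y_i$ against the appropriate guard ($\lneg{x_i} \lor y_i$ for the first two, $x_i \lor \lneg{y_i}$ for the third) to recover the intended extension clause. The crucial observation is that $y_i$ is fresh and $x_i$ occurs in $\cH(\Gamma)$ only within the two guards; moreover, since extensions are processed in the order $i = 1, 2, \dots, t(\Gamma)$, no clause added while processing an index $j \ne i$ can mention $x_i$ or $y_i$.

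The main obstacle is verifying the set-blockedness condition at each step. For the first two widened clauses both guards are excluded from the check because each contains a literal of $L$, and no other candidate $D$ exists, so the condition holds vacuously. For the third widened clause the only candidates are the four clauses contributed earlier in the current iteration (the two widened clauses together with their two resolvents): in each case $D \setminus \lneg{L}$ leaves behind $p_i$ or $q_i$, which combined with $C \setminus L = \{\lneg{p_i}, \lneg{q_i}\}$ yields a tautological set. Iterating over $i$ derives all of $\Lambda^*$ in exactly $2 \abs{\Lambda^*}$ steps, and appending $\Pi^*$ completes the $\SBC^-$ refutation of $\cH(\Gamma)$ within the claimed size.
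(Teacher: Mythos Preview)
Your proposal is correct and follows essentially the same approach as the paper: widen each extension clause by the corresponding literal of $y_i$, add the widened clauses as SBCs using $L = \{\lneg{x_i},\lneg{y_i}\}$ (respectively $\{x_i,y_i\}$), and resolve against the guard clauses to recover $\Lambda^*$. The only difference is ordering---the paper first adds all widened clauses $\lambda_i'$ for every $i$ and performs the resolutions afterwards, whereas you interleave them---but this does not affect the argument, and your verification that the extra resolvents present at the time of adding $E_i^3$ still satisfy the SBC condition is correct.
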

\begin{proof}
  We write $t$ for $t(\Gamma)$.
  Let $(\lambda_1, \dots, \lambda_t)$
  be the sequence of $t$ sets of extension clauses that make up $\Lambda^*$.
  For each $i \in [t]$, we will first derive the clauses in
  \begin{equation*}
    \lambda_i' \coloneqq \{\lneg{x_i} \lor \lneg{y_i} \lor p_i, \
    \lneg{x_i} \lor \lneg{y_i} \lor q_i, \
    x_i \lor y_i \lor \lneg{p_i} \lor \lneg{q_i}\}
  \end{equation*}
  by a sequence of SBC additions.
  Consider an arbitrary $i \in [t]$, and suppose that we have derived
  $\bigcup_{j = 1}^{i - 1} \lambda_j'$ from $\cH(\Gamma)$ by a sequence of SBC additions,
  so the current CNF is $\Delta = \cH(\Gamma) \cup \bigcup_{j = 1}^{i - 1} \lambda_j'$.
  \begin{enumerate}
  \item The clause $E_i^1 \coloneqq \lneg{x_i} \lor \lneg{y_i} \lor p_i$ is
    an SBC for $L = \{\lneg{x_i},\, \lneg{y_i}\}$ with respect to $\Delta$
    because $\lneg{x_i} \lor y_i$ and $x_i \lor \lneg{y_i}$ are the only clauses in $\Delta$
    that intersect with $\lneg{L}$, and both of these clauses also intersect with $L$
    (i.e., there is nothing to check).
  \item The clause $E_i^2 \coloneqq \lneg{x_i} \lor \lneg{y_i} \lor q_i$ is
    similarly an SBC for $L = \{\lneg{x_i},\, \lneg{y_i}\}$ with respect to $\Delta$.
    Furthermore, we have $E_i^1 \cap \lneg{L} = \varnothing$,
    so $E_i^2$ is an SBC with respect to $\Delta \cup \left\{E_i^1\right\}$.
  \item The clause $E_i^3 \coloneqq x_i \lor y_i \lor \lneg{p_i} \lor \lneg{q_i}$ is
    similarly an SBC for $M = \{x_i,\, y_i\}$ with respect to $\Delta$.
    It is also an SBC for $M$ with respect to $\left\{E_i^1, E_i^2\right\}$
    since $\bigl(E_i^3 \setminus M\bigr) \cup \bigl(E_i^1 \setminus \lneg{M}\bigr) = \{\lneg{p_i},\, \lneg{q_i},\, p_i\}$
    and $\bigl(E_i^3 \setminus M\bigr) \cup \bigl(E_i^2 \setminus \lneg{M}\bigr) = \{\lneg{p_i},\, \lneg{q_i},\, q_i\}$
    are both tautological.
    As a result, $E_i^3$ is an SBC with respect to $\Delta \cup \left\{E_i^1, E_i^2\right\}$.
  \end{enumerate}
  It follows by induction that we can derive $\bigcup_{i = 1}^t \lambda_i'$ from $\cH(\Gamma)$ in $\SBC^-$.
  For each $i \in [t]$, resolving $E_i^1$ and $E_i^2$ against $\lneg{x_i} \lor y_i$
  and resolving $E_i^3$ against $x_i \lor \lneg{y_i}$ gives $\lambda_i$,
  thus we can derive $\Lambda^*$ from $\cH(\Gamma)$ in $\SBC^-$.
  Since $\Pi^*$ is a resolution proof of $\Gamma \cup \Lambda^*$,
  and since $\cH(\Gamma)$ contains $\Gamma$,
  we also have a resolution proof of $\cH(\Gamma) \cup \Lambda^*$.
  In the end, we have an $\SBC^-$ proof of $\cH(\Gamma)$
  of size at most $2 \abs{\Lambda^*} + \abs{\Pi^*} \leq 2 \cdot \size_{\ER}(\Gamma)$.
\end{proof}

It is now straightforward to deduce \cref{thm:separation-SBC-from-RAT}
by invoking \crefnosort{thm:blocked-pair-hard-for-RAT,thm:blocked-pair-easy-for-SBC}
with $\Gamma$ as the bit pigeonhole principle
(in the manner of the proof of \cref{thm:separation-GER-from-RAT}).

\section{Exponential separation of \psnnv{SBC} from \psnnv{GER}}
\label{sec:exponential-separation-SBC-from-GER}

We now give polynomial-size $\SBC^-$ proofs of $\PHP_n$.
More specifically, we observe that the $\SPR^-$ proofs of $\PHP_n$
constructed by Buss and Thapen~\cite[Theorem~4.3]{BT21} are in fact valid $\SBC^-$ proofs,
so the proof below closely follows theirs.
(No knowledge of $\SPR^-$ is required to follow this section.)

\begin{lemma}\label{thm:polynomial-size-SBC-PHP}
  $\size_{\SBC^-}(\PHP_n) = n^{O(1)}$.
\end{lemma}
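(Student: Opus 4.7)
The plan is to transcribe the polynomial-size $\SPR^-$ refutation of $\PHP_n$ constructed by Buss and Thapen~\cite[Theorem~4.3]{BT21} as an $\SBC^-$ refutation of the same asymptotic size. Concretely, I would walk through each clause addition in their proof and verify that the set-propagation-redundancy check can be strengthened to a set-blockedness check: for the same clause $C$ and blocking set $L \subseteq C$, and for every clause $D$ of the current CNF with $D \cap L = \varnothing$ and $D \cap \lneg{L} \neq \varnothing$, the residual $(C \setminus L) \cup (D \setminus \lneg{L})$ is literal-level tautological.

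Following Buss and Thapen, the refutation proceeds by induction on $n$: a polynomial-size $\SBC^-$ derivation reduces $\PHP_n$ to a CNF containing a copy of $\PHP_{n-1}$, and iterating this reduction and appealing to the constant-size resolution refutation of $\PHP_1$ gives the overall bound. For a single reduction step it suffices to derive in $\SBC^-$ the unit clauses $p_{n+1, n}$ and $\lneg{p_{i, n}}$ for $i \in [n]$; resolution with the PHP axioms then produces $\PHP_{n-1}$ on the variables $\{p_{i, k} : i \in [n],\, k \in [n - 1]\}$. The substantive step is the introduction of a polynomial-size family of auxiliary clauses (with designated blocking sets) from which these unit clauses follow by resolution, mirroring Buss and Thapen's construction.

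The hard part will be verifying the literal-level tautology check uniformly across all clauses accumulated during the derivation. For each candidate $D$ arising in the SBC check I would exhibit an explicit literal $p \in C \setminus L$ with $\lneg{p} \in D$; the key case is the hole axioms, where $D \setminus \lneg{L}$ reduces to a single negative literal matched by a positive literal placed in $C \setminus L$ by design. Unlike $\SPR$, which can absorb newly derived clauses via unit propagation inside its redundancy check, $\SBC$ demands that the complementary literals be directly present in $C \setminus L$, so the blocking sets and witness clauses in Buss and Thapen's construction must be shown, by case analysis exploiting the symmetry of $\PHP_n$ among pigeons and holes, to already satisfy this stricter condition. Keeping $C$ of polynomial size while ensuring the check goes through for every derived $D$---and not just the original PHP axioms---is the technical core of the argument.
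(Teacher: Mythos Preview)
Your plan is correct and coincides with the paper's approach: the paper likewise observes that Buss and Thapen's $\SPR^-$ refutation of $\PHP_n$ is already a valid $\SBC^-$ refutation and carries out precisely the case analysis you outline. Concretely, it adds clauses $C_{i,j,k}$ (for $j,k>i$) with blocking set $L=\{\lneg{p_{i,k}},\lneg{p_{j,i}},p_{i,i},p_{j,k}\}$, verifies the literal-level tautology against the hole axioms and against the previously added $C_{i',j',k'}$ (your ``hard part''), and then resolves down to the unit clauses $\lneg{p_{j,i}}$ for $j>i$; the only cosmetic difference from your sketch is the indexing convention (smallest pigeon to smallest hole rather than largest to largest) and that no positive unit clause like $p_{n+1,n}$ is ever derived.
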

\begin{proof}
  We essentially formalize in $\SBC^-$ a short inductive proof of $\PHP_n$, which assumes
  without loss of generality that the smallest pigeon is mapped to the smallest hole,
  derives from this assumption a renamed instance of $\PHP_{n-1}$,
  and inductively repeats these steps until deriving a trivial contradiction.

  Recall that $\PHP_n$ consists of the clauses
  $P_i \coloneqq p_{i,1} \lor \dots \lor p_{i,n}$ and
  $H_{i,j,k} \coloneqq \lneg{p_{i,k}} \lor \lneg{p_{j,k}}$
  for $i, j \in [n+1]$ and $k \in [n]$ with $i \neq j$.

  For $i \in [n - 1]$, $j \in [n + 1]$, and $k \in [n]$ such that $j, k > i$, let
  \begin{equation*}
    C_{i,j,k} \coloneqq \lneg{p_{i,k}}
    \lor \lneg{p_{j,i}}
    \lor \left(\bigvee_{\substack{\ell \in [n + 1] \\ \ell \neq i}} p_{\ell,k}\right)
    \lor \left(\bigvee_{\substack{\ell \in [n + 1] \\ \ell \neq j}} p_{\ell,i}\right).
  \end{equation*}
  Also, for $i \in [n - 1]$, let
  \begin{equation*}
    \Lambda_i \coloneqq \{C_{i,j,k} \mid j \in [n + 1],\ k \in [n], \text{ and } j, k > i\}.
  \end{equation*}
  We will first show that, for all $i \in [n - 1]$,
  from $\Psi_{i-1} \coloneqq \PHP_n \cup \bigcup_{\ell = 1}^{i - 1} \Lambda_\ell$
  we can derive the clauses in $\Lambda_i$ in any order
  by a sequence of set-blocked clause additions,
  which implies that we can obtain $\Psi_{n-1}$ from $\PHP_n$ in $\SBC^-$.
  Afterwards, we will give a polynomial-size resolution derivation
  of the empty clause from $\Psi_{n-1}$, concluding the proof.

  For every clause $C_{i,j,k} \in \Lambda_i$ and every subset $\Lambda_i' \subseteq \Lambda_i \setminus \{C_{i,j,k}\}$,
  we claim that $C_{i,j,k}$ is an SBC for
  $L = \{\lneg{p_{i,k}},\, \lneg{p_{j,i}},\, p_{i,i},\, p_{j,k}\} \subseteq C_{i,j,k}$
  with respect to $\Psi_{i-1} \cup \Lambda_i'$.
  This requires us to show that for all $D \in \Psi_{i-1} \cup \Lambda_i'$
  with $D \cap \lneg{L} \neq \varnothing$ and $D \cap L = \varnothing$
  the set $\bigl(C_{i,j,k} \setminus L\bigr) \cup \bigl(D \setminus \lneg{L}\bigr)$ is tautological.
  There are three cases.

  \begin{description}
  \item[Case 1] (\textit{$D \in \PHP_n$.})
    If $D$ is a clause in $\PHP_n$ such that
    $D \cap \lneg{L} \neq \varnothing$ and $D \cap L = \varnothing$,
    then either $D = H_{i,i',i}$ for some $i' \in [n + 1]$ such that $i' \neq j$
    or $D = H_{j,j',k}$ for some $j' \in [n + 1]$ such that $j' \neq i$.
    If $D = H_{i,i',i}$, then we have $p_{i',i} \in C_{i,j,k} \setminus L$
    and $\lneg{p_{i',i}} \in D \setminus \lneg{L}$,
    so the union of the two sets is tautological.
    The argument for the case of $D = H_{j,j',k}$ is similar.
  \item[Case 2] (\textit{$D \in \bigcup_{\ell = 1}^{i - 1} \Lambda_\ell$.})
    Let $D = C_{i',j',k'}$ be an arbitrary clause in $\bigcup_{\ell = 1}^{i - 1} \Lambda_\ell$,
    where $i' < i$ and $j', k' > i'$.
    A simple inspection shows that we have
    $D \cap \lneg{L} \neq \varnothing$ if and only if $k' = i$ or $k' = k$.
    If $k' = i$, then $p_{i,i} \in D$.
    Noting that $i' \neq j$, if $k' = k$, then $p_{j,k} \in D$.
    Either way, $D \cap L = \varnothing$ fails to hold,
    so there exists no $D \in \bigcup_{\ell = 1}^{i - 1} \Lambda_\ell$
    such that $D \cap \lneg{L} \neq \varnothing$ and $D \cap L = \varnothing$.
  \item[Case 3] (\textit{$D \in \Lambda_i'$.})
    Let $D = C_{i,j',k'}$ be an arbitrary clause in $\Lambda_i'$, where $j', k' > i$.
    It is straightforward to see that we have $p_{i,i} \in D$ unless $j' = i$.
    Then, since $j' > i$, there exists no $D \in \Lambda_i'$ such that $D \cap L = \varnothing$.
  \end{description}

  Thus, we obtain $\Psi_{n-1}$ from $\PHP_n$ in $\SBC^-$,
  and now we construct a resolution proof of $\Psi_{n-1}$.
  First, for each $(i,j,k)$ such that $C_{i,j,k}$ is defined,
  we resolve $C_{i,j,k}$ against the axioms for the holes $k$ and $i$
  to derive $\lneg{p_{i,k}} \lor \lneg{p_{j,i}}$.
  Then, for each $i \in [n]$ and $j \in [n + 1]$ such that $j > i$,
  we derive the clause $\lneg{p_{j,i}}$ by induction on $i$ as follows:
  Fix $i$ and $j$ such that $j > i$. Let
  \begin{equation*}
    \Delta_{i,j} \coloneqq
    \{\lneg{p_{i,k}} \mid k \in [n] \text{ and } k < i\}
    \cup \{\lneg{p_{i,i}} \lor \lneg{p_{j,i}}\}
    \cup \{\lneg{p_{i,k}} \lor \lneg{p_{j,i}} \mid k \in [n] \text{ and } k > i\},
  \end{equation*}
  where we have the clauses in the first set from the induction hypothesis,
  the second set from $\PHP_n$, and the third set from resolving $C_{i,j,k}$
  against the hole axioms in the previous step.
  Resolving each clause in $\Delta_{i,j}$ against
  the pigeon axiom $P_i$ thus gives $\lneg{p_{j,i}}$.
  Finally, we resolve for each $i \in [n]$ the clause $\lneg{p_{n+1,i}}$
  against $P_{n+1}$ to derive the empty clause.
\end{proof}

Kullmann~\cite[Lemma~9.4]{Kul99b} showed that $\size_{\GER^-}(\PHP_n) = 2^{\Omega(n)}$,
so \cref{thm:separation-SBC-from-GER} follows by \cref{thm:polynomial-size-SBC-PHP}.

\section*{Acknowledgments}
\label{sec:acknowledgments}

We thank Jakob Nordström and Ryan O'Donnell for useful discussions
and the ITCS reviewers for their comments.

\newcommand{\Proc}[1]{Proceedings of the \nth{#1}}
\newcommand{\STOC}[1]{\Proc{#1} Symposium on Theory of Computing (STOC)}
\newcommand{\FOCS}[1]{\Proc{#1} Symposium on Foundations of Computer Science
  (FOCS)} \newcommand{\SODA}[1]{\Proc{#1} Symposium on Discrete Algorithms
  (SODA)} \newcommand{\CCC}[1]{\Proc{#1} Computational Complexity Conference
  (CCC)} \newcommand{\ITCS}[1]{\Proc{#1} Innovations in Theoretical Computer
  Science (ITCS)} \newcommand{\ICS}[1]{\Proc{#1} Innovations in Computer
  Science (ICS)} \newcommand{\ICALP}[1]{\Proc{#1} International Colloquium on
  Automata, Languages, and Programming (ICALP)}
\newcommand{\STACS}[1]{\Proc{#1} Symposium on Theoretical Aspects of Computer
  Science (STACS)} \newcommand{\LICS}[1]{\Proc{#1} Symposium on Logic in
  Computer Science (LICS)} \newcommand{\CSLw}[1]{\Proc{#1} International
  Workshop on Computer Science Logic (CSL)} \newcommand{\CSL}[1]{\Proc{#1}
  Conference on Computer Science Logic (CSL)} \newcommand{\IJCAR}[1]{\Proc{#1}
  International Joint Conference on Automated Reasoning (IJCAR)}
\newcommand{\CADE}[1]{\Proc{#1} Conference on Automated Deduction (CADE)}
\newcommand{\SAT}[1]{\Proc{#1} International Conference on Theory and
  Applications of Satisfiability Testing (SAT)} \newcommand{\RTA}[1]{\Proc{#1}
  International Conference on Rewriting Techniques and Applications (RTA)}
\newcommand{\MFCS}[1]{\Proc{#1} International Symposium on Mathematical
  Foundations of Computer Science (MFCS)} \newcommand{\TAMC}[1]{\Proc{#1}
  International Conference on Theory and Applications of Models of Computation
  (TAMC)} \newcommand{\DLT}[1]{\Proc{#1} International Conference on
  Developments in Language Theory (DLT)} \newcommand{\TABLEAUX}[1]{\Proc{#1}
  International Conference on Automated Reasoning with Analytic Tableaux and
  Related Methods (TABLEAUX)} \newcommand{\TACAS}[1]{\Proc{#1} International
  Conference on Tools and Algorithms for the Construction and Analysis of
  Systems (TACAS)} \newcommand{\LPAR}[1]{\Proc{#1} International Conference on
  Logic for Programming, Artificial Intelligence and Reasoning (LPAR)}
\newcommand{\HVC}[1]{\Proc{#1} Haifa Verification Conference (HVC)}
\newcommand{\DAC}[1]{\Proc{#1} Design Automation Conference (DAC)}
\newcommand{\DATE}{Proceedings of the Design, Automation and Test in Europe
  Conference (DATE)} \newcommand{\ISAIM}[1]{\Proc{#1} International Symposium
  on Artificial Intelligence and Mathematics (ISAIM)}
\newcommand{\AAAI}[1]{\Proc{#1} AAAI Conference on Artificial Intelligence
  (AAAI)} \newcommand{\ICML}[1]{\Proc{#1} International Conference on Machine
  Learning (ICML)} \newcommand{\NeurIPS}[1]{\Proc{#1} Conference on Neural
  Information Processing Systems (NeurIPS)}

\appendix

\section{Proof of \texorpdfstring{\cref{thm:resolution-simulation-RUP}}{Lemma}}
\label{sec:proof-resolution-simulation-RUP}

Let us recall the statement of the result before proceeding with its proof.

\resrup*
\begin{proof}
  We assume without loss of generality that $\var(C) \subseteq \var(\Gamma)$
  since any variable of $C$ not occurring in $\Gamma$
  can be ignored in a unit propagation proof of $\Gamma \land \lneg{C}$.
  Also, if $\bot \in \Gamma$, then the statement trivially follows
  since we can derive any clause by a single use of the weakening rule,
  so suppose $\bot \notin \Gamma$.

  Assuming $\Gamma \vdash_1 C$, we will show that there exists
  a resolution derivation that is as desired and,
  additionally, is of a particular shape.
  Specifically, the derivation will start by a possible use of the weakening rule,
  followed only by uses of the resolution rule
  such that one of the two premises is a clause in $\Gamma$.
  This is called an \emph{input resolution proof}.
  \cref{fig:input-resolution} shows the shape of an input resolution proof viewed as a tree.

  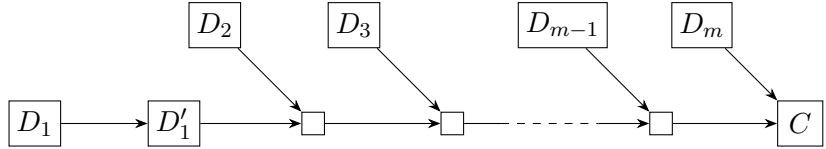
\begin{figure}[ht]
    \centering
    \begin{tikzpicture}[
        scale=1.85,
        every node/.style={draw=black, minimum size=6mm, inner sep=1mm},
        empty/.style={minimum size=3mm},
        rule/.style={->, >=Stealth}
      ]
      \node (D1) at (0, 0) {$D_1$};
      \node (D1') at (1, 0) {$D_1'$};
      \node[empty] (D2') at (2, 0) {};
      \node (D2) at ($(D2') + (135:1)$) {$D_2$};
      \node[empty] (D3') at (3, 0) {};
      \node (D3) at ($(D3') + (135:1)$) {$D_3$};
      \node[empty] (D4') at (4.5, 0) {};
      \node (D4) at ($(D4') + (135:1)$) {$D_{m-1}$};
      \node (C) at (5.5, 0) {$C$};
      \node (D5) at ($(C) + (135:1)$) {$D_m$};

      \draw[rule] (D1) edge (D1') (D1') edge (D2') (D2') edge (D3');
      \draw (D3') -- ++(0.35, 0) coordinate (a);
      \draw[dashed] (a) -- (4.05, 0) coordinate (b);
      \draw[rule] (b) -- (D4');
      \draw[rule] (D4') -- (C);
      \draw[rule] (D2) -- (D2');
      \draw[rule] (D3) -- (D3');
      \draw[rule] (D4) -- (D4');
      \draw[rule] (D5) -- (C);
    \end{tikzpicture}
    \caption{In the above tree for a resolution proof,
      each node corresponds to a clause, with incoming arrows
      indicating the premises from which it is derived.
      When we have $D_1, \dots, D_m \in \Gamma$ and $D_1 \subseteq D_1'$,
      the tree represents an input resolution proof that derives $C$ from $\Gamma$.}
    \label{fig:input-resolution}
  \end{figure}

  We proceed by induction on the number of variables of $\Gamma$.
  Let $\Gamma$ be a CNF with a single variable, say $x$.
  Let $C$ be a clause, and suppose that $\Gamma \vdash_1 C$.
  Then the clause $C$ is either $x$, $\lneg{x}$, or $\bot$.
  In the first two cases, $\Gamma \vdash_1 C$ implies that $C \in \Gamma$,
  so the derivation $(\Gamma)$ is as desired.
  In the last case, $\Gamma \vdash_1 \bot$ implies that $\Gamma = \{x,\ \lneg{x}\}$,
  so a single use of the resolution rule
  gives the derivation $(\Gamma, \Gamma \cup \{\bot\})$ as desired.

  Let $n \in \NN^+$, and suppose that the statement holds
  for all formulas and clauses with $n$ variables.
  Consider a CNF $\Gamma$ with $\abs{\var(\Gamma)} = n + 1$.
  Let $C$ be a clause, and suppose that $\Gamma \vdash_1 C$.
  Since $\left(\Gamma \land \lneg{C}\right) \vdash_1 \bot$,
  the CNF $\Gamma \land \lneg{C}$ must have some clause
  containing a single literal, say $p$
  (otherwise the unit propagation rule cannot be used at all).
  Let $\alpha$ be the partial assignment that only sets $\alpha(p) = 1$.
  There are two cases.

  \begin{description}
  \item[Case 1] (\textit{$\alpha \models C$.})
    We have $p \in C$, which implies that $\lneg{p} \notin C$ since $C$ is a clause.
    As a result, $p \notin \lneg{C}$, but we know that $\{p\} \in \left(\Gamma \land \lneg{C}\right)$,
    so it must be the case that $\{p\} \in \Gamma$.
    Then we can derive $C$ from $\Gamma$ by a single use of the weakening rule,
    so the derivation $(\Gamma, \Gamma \cup \{C\})$ is as desired.
  \item[Case 2] (\textit{$\alpha \not\models C$.})
    By \cref{thm:unit-propagation-under-restrictions}, we have $\Gamma|_\alpha \vdash_1 C|_\alpha$.
    Since $\abs{\var(\Gamma|_\alpha)} = n$, the induction hypothesis
    guarantees the existence of an input resolution derivation
    $\Pi = (\Gamma_1, \dots, \Gamma_N)$ with $N \leq n + 1$
    such that $\Gamma_1 = \Gamma|_\alpha$, $C|_\alpha \in \Gamma_N$, and $\Gamma|_\alpha \cup \{C|_\alpha\} \sqsupseteq \Gamma_N$.
    For every leaf $D$ in the tree for $\Pi$, either $D \in \Gamma$ or $D \lor \lneg{p} \in \Gamma$,
    so we define
    \begin{equation*}
      f(D) \coloneqq
      \begin{cases}
        D & \text{if } D \in \Gamma \\
        D \lor \lneg{p} & \text{if } D \lor \lneg{p} \in \Gamma.
      \end{cases}
    \end{equation*}
    We modify $\Pi$ by first replacing each leaf $D$ by $f(D)$
    and then recursively replacing each inner node $E$ by $E \lor \lneg{p}$
    if one of its premises has been changed.
    The result corresponds to a valid input resolution derivation $\Pi' = (\Gamma_1', \dots, \Gamma_N')$
    such that $\Gamma_1' = \Gamma$, $C' \in \Gamma_N'$, and $\Gamma \cup \{C|_\alpha\} \sqsupseteq \Gamma_N'$,
    where either $C' = C|_\alpha$ or $C' = C|_\alpha \lor \lneg{p}$.
    Moreover, we have either $C|_\alpha = C$ or $C|_\alpha = C \setminus \{\lneg{p}\}$.
    We make a final modification to $\Pi'$ according to the following table:
    \begin{center}
      \renewcommand{\arraystretch}{1.2}
      \setlength{\tabcolsep}{10pt}
      \begin{tabular}{l p{12em} p{12em}}
        \toprule
        & $C' = C|_\alpha$ & $C' = C|_\alpha \lor \lneg{p}$ \\
        \midrule
        $C|_\alpha = C$ & Nothing to do, because $\Pi'$ is already as desired.
                      & Resolve $C'$ against $\{p\} \in \Gamma$,
                        where $\{p\} \in \Gamma$ holds due to $\lneg{p} \notin C$
                        and $\{p\} \in \left(\Gamma \land \lneg{C}\right)$. \\
        $C|_\alpha = C \setminus \{\lneg{p}\}$
        & \multicolumn{2}{p{24em + 2\tabcolsep}}{
          Modify $\Pi'$ to make sure that the clause derived
          by the initial use of the weakening rule
          (inserting one if weakening was not used)
          contains $\lneg{p}$, followed by the same modification
          for all of the clauses corresponding to the inner nodes
          of the proof tree.} \\
        \bottomrule
      \end{tabular}
    \end{center}

    In any case, we increase the size of $\Pi'$ by at most one
    and end up with the desired derivation
    of size at most $n + 2 = \abs{\var(\Gamma)} + 1$. \qedhere
  \end{description}
\end{proof}

\end{document}